\documentclass[preprint,nofootinbib,amsmath,amssymb,amsfonts,aps,prd]{revtex4-1}

\usepackage{hyperref}
\usepackage{amsthm}
\usepackage{mathrsfs}

\DeclareMathOperator*{\wlim}{w-lim}
\DeclareMathOperator*{\supp}{supp}
\newcommand{\ud}{\,\mathrm{d}}
\newcommand{\hh}{\mathfrak{h}}
\newtheorem*{lemma}{Lemma}
\newtheorem{theorem}{Theorem}

\begin{document}

\title{A new framework for analyzing the effects of small scale
  inhomogeneities in cosmology}

\author{Stephen R. Green}
\email{srgreen@uchicago.edu}
\author{Robert M. Wald}
\email{rmwa@uchicago.edu}
\affiliation{Enrico Fermi Institute and Department of Physics \\
  University of Chicago \\
  5640 S. Ellis Ave., Chicago, IL 60637, U.S.A.}

\date{\today}

\begin{abstract}
  We develop a new, mathematically precise framework for treating the
  effects of nonlinear phenomena occurring on small scales in general
  relativity. Our approach is an adaptation of Burnett's formulation
  of the ``shortwave approximation'', which we generalize to analyze
  the effects of matter inhomogeneities as well as gravitational
  radiation. Our framework requires the metric to be close to a
  ``background metric'', but allows arbitrarily large stress-energy
  fluctuations on small scales. We prove that, within our framework,
  if the matter stress-energy tensor satisfies the weak energy
  condition (i.e., positivity of energy density in all frames), then
  the only effect that small scale inhomogeneities can have on the
  dynamics of the background metric is to provide an ``effective
  stress-energy tensor'' that is traceless and has positive energy
  density---corresponding to the presence of gravitational
  radiation. In particular, nonlinear effects produced by small scale
  inhomogeneities cannot mimic the effects of dark energy. We also
  develop ``perturbation theory'' off of the background metric. We
  derive an equation for the ``long-wavelength part'' of the leading
  order deviation of the metric from the background metric, which
  contains the usual terms occurring in linearized perturbation theory
  plus additional contributions from the small-scale
  inhomogeneities. Under various assumptions concerning the absence of
  gravitational radiation and the non-relativistic behavior of the
  matter, we argue that the ``short wavelength'' deviations of the
  metric from the background metric near a point $x$ should be
  accurately described by Newtonian gravity, taking into account only
  the matter lying within a ``homogeneity lengthscale'' of
  $x$. Finally, we argue that our framework should provide an accurate
  description of the actual universe.
\end{abstract}

\maketitle

\section{Introduction}

It is generally believed that our universe is very well described on
large scales by a Friedmann-Lema\^itre-Robertson-Walker (FLRW)
model. However, on small scales, extremely large departures of the
mass density from FLRW models are commonly observed, e.g., on
Earth\footnote{If we go to scales of atomic nuclei, then $\delta
  \rho/\rho \sim 10^{43}$.}, we have $\delta \rho/\rho \sim
10^{30}$. Nevertheless, common sense estimates
\cite{Holz:1997ic,Ishibashi:2005sj} suggest that (a) the deviation of
the metric (as opposed to mass density, which corresponds to second
derivatives of the metric) from a FLRW metric is globally very small
on all scales except in the immediate vicinity of strong field objects
such as black holes and neutron stars, and (b) the terms in Einstein's
equation that are nonlinear in the deviation of the metric from a FLRW
metric are negligibly small as compared with the linear terms in the
deviation from a FLRW metric except in the immediate vicinity of
strong field objects. These common sense estimates together with the
fact that the motion of matter relative to the rest frame of the
cosmic microwave background is non-relativistic strongly suggest that
(1) the large scale structure of the universe is well described by a
FLRW metric, (2) when averaged on scales sufficiently large that
$|\delta \rho/\rho| \ll 1$---i.e., scales of order $100$~Mpc
in the present universe---the deviations from a FLRW model are well
described by ordinary FLRW linear perturbation theory, and (3) on
smaller scales, the deviations from a FLRW model (or, for that matter,
from Minkowski spacetime) are well described by Newtonian
gravity---except, of course, in the immediate vicinity of strong field
objects.

The above assumptions underlie the standard cosmological model, which
has been remarkably successful in accounting for essentially all
cosmological phenomena. Thus, there is good empirical evidence that
assumptions (1)--(3) are at least essentially correct. Nevertheless,
the situation is quite unsatisfactory from the perspective of having a
mathematically consistent theory wherein the assumptions and
approximations are justified in a systematic manner. Indeed, it is not
even obvious that assumptions (1)--(3) are mathematically consistent
\cite{vanElst:1998kb,Rasanen:2010wz}. In particular, nonlinear effects
play an essential role in Newtonian dynamics, e.g., the fact that the
Earth orbits the Sun arises from Einstein's equation as a nonlinear
effect in the deviation of the metric from flatness. It is clear that
one would get an extremely poor description of small-scale structure
in the universe if one neglected the nonlinear terms in Einstein's
equation in the deviation of the metric from a FLRW model; for
example, galaxies would not be bound. But if one cannot neglect
nonlinear terms in Einstein's equation on small scales, how can one
justify neglecting them on large (i.e., $\sim 100$~Mpc or larger)
scales? In addition, since it is not clear exactly what approximations
are needed for assumptions (1)--(3) to be valid, it is far from clear
as to how one could go about systematically improving these
approximations.

Indeed, it is far from obvious, {\it a priori}, that nonlinearities
associated with small-scale inhomogeneities could not produce
important effects on the large-scale dynamics of the FLRW model
itself, as has been suggested by a number of authors
\cite{Buchert:2002ij,Coley:2005ei,Buchert:2005xf,Ellis:2005uz,Nambu:2005zn,Kolb:2005da,Rasanen:2006kp,Buchert:2007ik,Kolb:2008bn,Rasanen:2008it,Zalaletdinov:2008ts,Kolb:2009rp,Wiegand:2010uh}
as a possible way to account for the effects of ``dark energy''
without invoking a cosmological constant, a new source of matter, or a
modification of Einstein's equation. In fact, the example of
gravitational radiation of wavelength much less than the Hubble scale
illustrates that it is possible, in principle, for small-scale
inhomogeneities in the metric and curvature to affect large-scale
dynamics. The dynamics of a FLRW model whose energy content is
dominated by gravitational radiation will be very different from one
with a similar matter content but no gravitational radiation. It is
the nonlinear terms in Einstein's equation associated with the
short-wavelength gravitational radiation that are responsible for
producing this difference in the large-scale dynamics. Although
common-sense estimates indicate that similar effects on large-scale
dynamics should not be produced by nonlinear effects of small-scale
matter inhomogeneities in our universe, it would be very useful to
have a systematic and general approach that can determine exactly what
effects small-scale inhomogeneities can and cannot produce on
large-scale dynamics.

The main approach that has been taken to investigate the effects of
small-scale inhomogeneities on large-scale dynamics has been to
consider inhomogeneous models, take spatial averages to define
corresponding FLRW quantities, and derive equations of motion for
these FLRW quantities \cite{Buchert:1999er,Buchert:2001sa}. Since, in
particular, the spatial average of the square of a quantity does not
equal the square of its spatial average, the effective FLRW dynamics
of an inhomogeneous universe will differ from that of a homogeneous
universe. However, a major difficulty with this approach is that, when
the deviations of the metric from that of a FLRW background are not
very small, it is not obvious how to interpret the averaged quantities
in terms of observable quantities. For example, if the total volume of
a spatial region is found to increase with time, this certainly does
not imply that observers in this region will find that Hubble's law
appears to be satisfied. Further serious difficulties with this
approach arise from the fact that the notion of averaging is slicing
dependent and the average of tensor quantities over a region in a
non-flat spacetime is intrinsically ill defined. In addition, the
equations for averaged quantities that have been derived to date are
only a partial set of equations---they contain quantities whose
evolution is not determined---so it is difficult to analyze what
dynamical behavior of the averaged quantities is actually
possible. This difficulty is well illustrated by a recent paper of
Buchert and Obadia \cite{Buchert:2010ug}, where they suggest that
inflationary dynamics may be possible in vacuum spacetimes. However,
this conclusion is drawn by simply postulating that a particular
functional relation holds between certain averaged quantities under
dynamical evolution. In fact, Einstein's equation controls the
dynamical evolution of these quantities---so one is not free to
postulate additional relations---but the restrictions imposed by
Einstein's equation are not considered.

The main purpose of this paper is to develop a framework that allows
us to consider spacetimes where there can be significant inhomogeneity
and nonlinear dynamics on small scales, yet the framework\footnote{Our
  framework will have some significant similarities to the approach of
  \cite{Futamase:1996fk} (see also \cite{Zalaletdinov:1996aj}), but
  our assumptions will be considerably more general and our results
  will have considerably wider applicability.  Our assumptions also
  will be stated much more precisely.  In addition, we will develop
  perturbation theory within our framework.} is capable of describing
``average'' large-scale behavior in a mathematically precise
manner. We seek a framework wherein the approximations are
``controlled'' in the sense that they can be shown to hold with
arbitrarily good accuracy in some appropriate limit. The results
obtained within this framework will thereby be theorems, and the only
issue that can arise with regard to the applicability of these results
to the physical universe is how close the physical universe is to the
limiting behavior of the theorems, in which the results hold exactly.

The situation that we wish to describe via our framework is one in
which there is a ``background spacetime metric'', $g^{(0)}_{ab}$, that
is supposed to correspond to the metric ``averaged'' over small scale
inhomogeneities. In the case of interest in cosmology, $g^{(0)}_{ab}$
would be taken to be a metric with FLRW symmetry, but our framework
does not require this choice, and no restrictions will be placed upon
$g^{(0)}_{ab}$ until section IV. The difference, $h_{ab} \equiv g_{ab} -
g^{(0)}_{ab}$, between the actual metric $g_{ab}$ and the background
metric is assumed to be small everywhere. This precludes the existence
of strong field objects such as black holes and neutron stars, but
even if such objects are present, by replacing these objects with weak
field objects of the same mass, our framework should give a good
description of the universe except in the immediate vicinity of these
objects. However, even though our framework requires $h_{ab}$ to be
small, derivatives of $h_{ab}$ (say, with respect to the derivative
operator $\nabla_a$ of the background metric $g^{(0)}_{ab}$) are {\em
  not} assumed to be small. Specifically, quadratic products of
$\nabla_c h_{ab}$ are allowed to be of the same order as the curvature
of $g^{(0)}_{ab}$. Thus, {\em a priori}, such terms are allowed to
make a significant contribution to the dynamics of $g^{(0)}_{ab}$
itself. Finally, no restrictions are placed upon second derivatives of
$h_{ab}$. In particular, if matter is present, the framework allows
$\delta \rho/\rho \gg 1$.

How can one formulate a mathematically precise framework where
approximations such as the smallness of $h_{ab} = g_{ab} -
g^{(0)}_{ab}$ are ``controlled'' in the sense that limits can be taken
where they hold with arbitrarily good accuracy? The basic idea is to
consider a one-parameter family of metrics $g_{ab}(\lambda)$ that has
appropriate limiting behavior as $\lambda \rightarrow 0$. To
illustrate this idea, consider the much simpler case of ordinary
perturbation theory, wherein one wishes to describe a situation where
not only is $g_{ab} - g^{(0)}_{ab}$ small, but all of its spacetime
derivatives are correspondingly small. To describe this in a precise
way, we can consider a one-parameter family of metrics
$g_{ab}(\lambda,x)$ that is jointly smooth in the parameter $\lambda$
and the spacetime coordinates $x$. The limit as $\lambda \rightarrow
0$ of this family of metrics clearly exists and defines the background
metric $g^{(0)}_{ab}(x) = g_{ab}(0,x)$. If we assume that
$g_{ab}(\lambda)$ satisfies Einstein's equation for all $\lambda > 0$,
it follows immediately that $g^{(0)}_{ab}$ also satisfies Einstein's
equation. The first order perturbation, $\gamma_{ab}$, of
$g^{(0)}_{ab}$ is defined to be the partial derivative of
$g_{ab}(\lambda)$ with respect to $\lambda$, evaluated at $\lambda =
0$. It satisfies the linearized Einstein equation, which is derived by
taking the partial derivative with respect to $\lambda$ of Einstein's
equation for $g_{ab}(\lambda)$, evaluated at $\lambda = 0$. More
generally, the $n$th order perturbation, $\gamma^{(n)}_{ab}$, of
$g^{(0)}_{ab}$ is defined to be the $n$th partial derivative of
$g_{ab}(\lambda)$ with respect to $\lambda$ evaluated at $\lambda =
0$, and the equation it satisfies is derived by taking the $n$th
partial derivative with respect to $\lambda$ of Einstein's
equation. The perturbative equations for the metric perturbation at
each order hold rigorously and exactly. Of course, the issue remains
as to how accurately an $n$th order Taylor series approximation in
$\lambda$ describes a particular metric $g_{ab}(\lambda)$ for some
small but finite value of $\lambda$.  This issue may not be easy to
resolve in any specific case. Nevertheless, even if the accuracy of
the Taylor approximation cannot be fully resolved, it is far more
satisfactory mathematically to derive rigorous results for the
perturbative quantities than to make crude arguments about $g_{ab}$
based on the assumption that $h_{ab} = g_{ab} - g^{(0)}_{ab}$ is
``small.''

To obtain a mathematically precise framework that can be applied to
describe situations relevant for cosmology, we also wish to consider a
one-parameter family of metrics $g_{ab}(\lambda)$ that approaches a
smooth background metric $g^{(0)}_{ab}$ as $\lambda \rightarrow 0$.
However, we do not want to require that first spacetime derivatives of
\begin{equation}
h_{ab} (\lambda) \equiv g_{ab}(\lambda) - g^{(0)}_{ab}
\end{equation}
go to zero as $\lambda \rightarrow 0$. Indeed, in order to capture the
effects we are interested in, it is essential that, {\em a priori},
the framework allow quadratic products of derivatives of
$h_{ab}(\lambda)$ to be of the same order as the curvature of
$g^{(0)}_{ab}$ in the limit as $\lambda \rightarrow 0$. This suggests
that we should consider a one-parameter family wherein, as $\lambda
\rightarrow 0$, the deviations of $g_{ab}(\lambda)$ from
$g^{(0)}_{ab}$ simultaneously become of smaller amplitude and shorter
wavelength, in such a way that first spacetime derivatives of
$h_{ab}(\lambda)$ remain bounded but do not necessarily go to zero. If
$h_{ab}(\lambda) \to 0$ as $\lambda \to 0$ but spacetime derivatives
of $h_{ab}(\lambda)$ do not go to zero, then it is easy to see that
spacetime derivatives of $h_{ab}(\lambda)$ cannot converge pointwise
(i.e., at fixed spacetime points) as $\lambda \rightarrow 0$. However,
spacetime derivatives of $h_{ab}(\lambda)$ will automatically go to
zero when suitably averaged over a spacetime region; more precisely,
their ``weak limit'' exists and vanishes. Similarly, although we
cannot require that quadratic products of first spacetime derivatives
of $h_{ab}(\lambda)$ approach a limit at fixed spacetime points as
$\lambda \rightarrow 0$, it is mathematically consistent to require
that the weak limit of these quantities exists. As we shall see in the
next section, a certain combination of weak limits of quadratic
products of first spacetime derivatives of $h_{ab}(\lambda)$ acts as
an ``effective stress energy tensor'', which affects the dynamics of
the background metric $g^{(0)}_{ab}$. In this way, the possible
effects on FLRW dynamics of small-scale inhomogeneities---which are
required to be of small amplitude in the metric but may be of
unbounded amplitude in the mass density---can be studied in a
mathematically precise manner.

In fact, the issues we confront in attempting to treat the effects of
small-scale mass density fluctuations in cosmology are very similar to
the issues arising when one attempts to treat the self-gravitating
effects of short-wavelength gravitational radiation. In the latter
case, one is interested in considering a situation where the amplitude
of the gravitational radiation relative to some background metric
$g^{(0)}_{ab}$ is small, but the ``effective stress-energy tensor'' of
the gravitational radiation---i.e., products of first spacetime
derivatives of $(g_{ab} - g^{(0)}_{ab})$---is comparable to the
curvature of $g^{(0)}_{ab}$. A ``shortwave approximation'' formalism
was developed by Isaacson \cite{Isaacson:1967zz,Isaacson:1968zza} (see
also pp.~964--966 of \cite{MTW}) to treat this situation. The
shortwave approximation was put on a rigorous mathematical footing by
Burnett \cite{Burnett:1989gp}, who derived the equations satisfied by
$g^{(0)}_{ab}$ by considering a one-parameter family of metrics
$g_{ab}(\lambda)$ with suitable limiting behavior.  In this paper, we
shall generalize Burnett's formulation of the shortwave approximation
by allowing for the presence of a nonvanishing matter stress-energy
tensor\footnote{A generalization of the Isaacson formalism to include
  matter under Newtonian assumptions was previously considered by
  Noonan \cite{Noonan:1984}.} $T_{ab}$. By following Burnett's
approach, we shall derive an equation for the ``background metric'',
$g^{(0)}_{ab}$, which takes the form of Einstein's equation with an
``averaged'' matter stress-energy tensor and an additional ``effective
stress-energy'' contribution arising from the small-scale
inhomogeneities.

One of the main results of our paper, proven in the analysis of
section II, is that if the true matter stress-energy tensor $T_{ab}$
satisfies the weak energy condition (i.e., if the energy density is
positive in all frames), then the effective stress-energy tensor
appearing in the equation for $g^{(0)}_{ab}$ must be traceless and
must have positive energy density---just as in the vacuum
case\footnote{Our result on the positivity of energy density within
  this framework is new, i.e., it was not previously shown to hold in
  the vacuum case by Burnett \cite{Burnett:1989gp}.  Positivity of the
  effective energy density in the vacuum case was shown by Isaacson
  \cite{Isaacson:1968zza} only under an additional WKB ansatz.}. In
other words, no new effects on large-scale dynamics can arise from
small-scale matter inhomogeneities; the only effects that small-scale
inhomogeneities of any kind can have on large-scale dynamics
corresponds to having gravitational radiation present. Our analysis
makes no assumptions of symmetries of the background metric
$g^{(0)}_{ab}$ and makes no assumptions about the matter stress-energy
tensor $T_{ab}$ other than that it satisfies the weak energy
condition. However, if $g^{(0)}_{ab}$ is assumed to have FLRW
symmetry, then our results establish that, within our framework, the
only effect that small-scale inhomogeneities can have on FLRW dynamics
corresponds to the additional presence of an effective $P =
\frac{1}{3}\rho$ fluid with $\rho \geq 0$. In particular, within our
framework, small-scale inhomogeneities cannot provide an effective
source of ``dark energy''.

In cosmology, in addition to analyzing the dynamics of the background
FLRW spacetime, one is interested in analyzing the deviations from the
FLRW background. In section III, we undertake a general analysis of
perturbation theory within our framework. It is not straightforward to
do this because $g_{ab}(\lambda)$ is not differentiable in $\lambda$
at $\lambda = 0$, so there is no notion of a ``linearized metric
perturbation'' in our framework. However, the weak limit as $\lambda
\rightarrow 0$ of $[g_{ab}(\lambda) - g^{(0)}_{ab}]/\lambda$ may
exist, and, under the assumption that it does, this limit defines a
quantity $\gamma^{(L)}_{ab}$, which corresponds closely to what is
called the ``long wavelength part'' of the metric perturbation in
other analyses (see, e.g., \cite{Baumann:2010tm}). We also write
\begin{equation}
  h^{(S)}_{ab}(\lambda) \equiv h_{ab}(\lambda) - \lambda \gamma^{(L)}_{ab}
\end{equation}
and refer to $h^{(S)}_{ab}$ as the ``short wavelength part'' of the
deviation of the metric from a FLRW model. (Note that in our
framework, $h^{(S)}_{ab}$ and $\gamma^{(L)}_{ab}$ have precise
mathematical definitions.)  Our goal is to derive the equations
satisfied by $\gamma^{(L)}_{ab}$ as well as to determine
$h^{(S)}_{ab}(\lambda)$ to accuracy $O(\lambda)$. This will yield the
spacetime metric to accuracy $O(\lambda)$.

In section III, we will systematically derive the equations satisfied
by $\gamma^{(L)}_{ab}$ and $h^{(S)}_{ab}(\lambda)$ in a completely
general context. By taking the weak limit of $1/\lambda$ times the
difference between the exact Einstein equation for $g_{ab}(\lambda)$
and the effective Einstein equation for $g^{(0)}_{ab}$, we obtain an
equation for $\gamma^{(L)}_{ab}$ corresponding to that arising in
ordinary linearized perturbation theory. However, in addition to the
familiar terms appearing in ordinary linearized perturbation theory,
this equation contains additional ``source terms'' arising from
$h^{(S)}_{ab}$ and, if gravitational radiation is present in the
background spacetime, this equation also contains additional terms
linear in $\gamma^{(L)}_{ab}$ and quadratic in $h^{(S)}_{ab}$. We also
obtain additional relations between quantities appearing in this
equation by taking weak limits of Einstein's equation multiplied by
$h^{(S)}_{ab}/\lambda$ and by $h^{(S)}_{ab}
h^{(S)}_{cd}/\lambda$. These relations are used to simplify the
perturbation equation for $\gamma^{(L)}_{ab}$. Finally, we write down
Einstein's equation for $h^{(S)}_{ab}(\lambda)$. In the vacuum case,
we consider the simplifications that can be made to this equation if
one is interested only in determining $h^{(S)}_{ab}(\lambda)$ to
sufficient accuracy to obtain $g^{(0)}_{ab}$. We compare our approach
to that of Isaacson \cite{Isaacson:1967zz,Isaacson:1968zza} and
subsequent works (see, e.g., \cite{MTW}).

In section IV, we apply our general perturbative analysis to
cosmology. We introduce the ``generalized wave map gauge'' in
subsection IVA. In subsection IVB, we make additional assumptions
concerning initial conditions and the Newtonian nature of the
deviation, $\delta T_{ab} \equiv T_{ab}(\lambda) - T^{(0)}_{ab}$, of
the stress-energy tensor from a FLRW model. We argue that, for small
$\lambda$, $h^{(S)}_{ab}(\lambda)$ should be well approximated (in the
wave map gauge) by the Newtonian gravity solution, whereby one needs
only take into account the ``nearby'' matter. In contrast to the rest
of the paper---where all of the assumptions are stated in a
mathematically precise manner, and all of the results are
theorems---in subsection IVB we provide only a sketch of the
assumptions needed, and many of our arguments have the character of
plausibility arguments rather than proofs.  In subsection IVC, we
simplify the equation for $\gamma^{(L)}_{ab}$ derived in section III
by using our Newtonian assumptions. We show that this equation reduces
to the ordinary cosmological perturbation equation with an additional
effective source arising from $h^{(S)}_{ab}(\lambda)$, in agreement
with a result recently obtained by \cite{Baumann:2010tm}.

In summary, in this paper we introduce a new framework for
treating spacetimes whose metric is close to that of a background
metric $g^{(0)}_{ab}$ but is such that nonlinear departures from
$g^{(0)}_{ab}$ are dynamically important on small scales. We proceed
by introducing a suitable one-parameter family of metrics
$g_{ab}(\lambda)$ and deriving results in the limit as $\lambda
\rightarrow 0$. We prove that the small-scale inhomogeneities cannot
affect the dynamics of the background metric $g^{(0)}_{ab}$ except by
the addition of an effective stress-energy tensor with positive energy
density and vanishing trace, which can be interpreted as arising from
gravitational radiation. We derive an equation for the
``long-wavelength part'' of the leading order deviation of the metric
from $g^{(0)}_{ab}$, which contains the usual terms occurring in
linearized perturbation theory plus additional contributions from the
small-scale inhomogeneities. Finally, we argue that the small-scale
deviations of the metric from $g^{(0)}_{ab}$ should be accurately
described by Newtonian gravity.

Of course, the real universe is not the limit as $\lambda$ becomes
arbitrarily small of the type of a one-parameter family of metrics
$g_{ab}(\lambda)$ considered here. Thus, our results do not apply
exactly to the real universe---any more than the results of an
analysis using ordinary linearized perturbation theory would apply to
a real situation. However, in section V we will argue that since the
scales at which nonlinear dynamics are important in the present
universe (i.e., scales much less than $\sim 100$~Mpc) are much smaller
than the scale of the background curvature (i.e., the Hubble radius
$\sim 3$~Gpc), it seems reasonable to expect that the real universe
will be accurately described by a ``small $\lambda$'' approximation to
$g_{ab}(\lambda)$ within our formalism.  We believe that our analysis
thereby goes a long way toward providing a mathematically sound
framework that can be used to justify the assumptions and
approximations used in cosmology. At the end of section V, we will
discuss how these approximations can be improved.

Our notation and sign conventions follow that of
\cite{Wald:1984}. Lower case Latin indices from early in the alphabet
($a,b,c, \dots$) denote abstract spacetime indices. Greek indices
denote components of tensors. Latin indices from mid-alphabet ($i,j,k,
\dots$) denote spatial components of tensors.

\section{Dynamics of the background metric}

In this section, we will give a precise statement of the assumptions
that underlie our framework. We will then analyze the dynamics of the
background metric $g^{(0)}_{ab}$ and prove that if the matter
stress-energy tensor, $T_{ab}$, satisfies the weak energy condition,
then the ``effective stress-energy'' contributed by small-scale
inhomogeneities must have positive energy density and vanishing trace.

As explained in the Introduction, we wish to consider a situation
wherein we have a one-parameter family of metrics $g_{ab}(\lambda)$
that approaches a background metric $g^{(0)}_{ab}$, but spacetime
derivatives of $g_{ab}(\lambda)$ do not approach the corresponding
spacetime derivatives of $g^{(0)}_{ab}$. An example of the type of
behavior that we have in mind is for components of $h_{ab}(\lambda)
\equiv g_{ab}(\lambda) - g^{(0)}_{ab}$ to behave like $\lambda
\sin(x/\lambda)$. In this situation, if we let $\nabla_a$ denote the
derivative operator associated with $g^{(0)}_{ab}$, we cannot have
$\nabla_c h_{ab}(\lambda) \rightarrow 0$ pointwise as $\lambda
\rightarrow 0$. However, suitable spacetime averages of $\nabla_c
h_{ab}(\lambda)$ will go to zero. More precisely, if $f^{cab}$ is any
smooth tensor field of compact support, we have
\begin{eqnarray}
\int f^{cab} \nabla_c h_{ab}(\lambda) &=& - \int (\nabla_c f^{cab})
h_{ab}(\lambda) \nonumber \\ &\rightarrow& 0 \quad {\rm as} \quad
\lambda \rightarrow 0
\label{wdg}
\end{eqnarray}
provided only that $h_{ab}(\lambda) \rightarrow 0$ locally in $L^1$,
where the volume element in this integral is that associated with
$g^{(0)}_{ab}$. If \eqref{wdg} holds for all ``test'' (i.e.,
smooth and compact support) tensor fields, $f^{cab}$, we say that
$\nabla_c h_{ab}(\lambda) \rightarrow 0$ {\em weakly}. More generally,
if $A_{a_1 \dots a_n} (\lambda)$ is a one-parameter family of tensor
fields defined for $\lambda > 0$, we say that $A_{a_1 \dots a_n}
(\lambda)$ converges weakly to $B_{a_1 \dots a_n}$ as $\lambda
\rightarrow 0$ if for all smooth $f^{a_1 \dots a_n}$ of compact
support, we have
\begin{equation}
\lim_{\lambda \rightarrow 0} \int f^{a_1 \dots a_n} A_{a_1 \dots a_n} (\lambda) = \int f^{a_1 \dots a_n} B_{a_1 \dots a_n}\,.
\end{equation}
Roughly speaking, the weak limit performs a local spacetime average of
$A_{a_1 \dots a_n} (\lambda)$ before letting $\lambda \rightarrow 0$.

As noted above, if $g_{ab}(\lambda)$ converges to $g^{(0)}_{ab}$ in a
suitably strong sense---locally in $L^1$ suffices---then spacetime
derivatives of $h_{ab}(\lambda)$ automatically converge weakly to
zero. However, there is no reason why products of spacetime
derivatives of $h_{ab}(\lambda)$ must converge weakly at all, and, if
they do converge, one would not expect them to converge to zero. (This
latter observation is closely related to the fact that averages of
products of quantities are not normally equal to the product of their
averages.) As discussed in the Introduction, we wish to consider a
situation where first spacetime derivatives of $h_{ab}(\lambda)$
remain bounded but do not necessarily go to zero as $\lambda
\rightarrow 0$. However, in order to have well defined averaged
behavior in the limit as $\lambda \rightarrow 0$ we want the weak
limit of the nonlinear terms in Einstein's equation to exist. As we
shall see below, this will be the case if the weak limit of quadratic
products of first spacetime derivatives of $h_{ab}(\lambda)$ exists.

All of the above considerations lead us to consider a
one-parameter family of metrics $g_{ab}(\lambda)$, defined for all
$\lambda \geq 0$, satisfying the following conditions, which are
straightforward generalizations to the non-vacuum case of the
conditions considered by Burnett \cite{Burnett:1989gp} in his
formulation of the shortwave approximation. In these conditions,
$\nabla_a$ denotes an arbitrary fixed (i.e., $\lambda$-independent)
derivative operator on the spacetime manifold $M$. For convenience in
stating these conditions, we choose an arbitrary Riemannian metric
$e_{ab}$ on $M$ and for any tensor field $t_{a_1 \dots a_n}$ on $M$ we
define $|t_{a_1 \dots a_n}|^2 = e^{a_1 b_1} \dots e^{a_n b_n} t _{a_1
  \dots a_n} t_{b_1 \dots b_n}$.
\renewcommand{\labelenumi}{(\roman{enumi})}
\begin{enumerate}
\item{\label{assumption1} For all $\lambda>0$, we have 
\begin{equation}
G_{ab}(g(\lambda)) + \Lambda g_{ab}(\lambda) = 8 \pi T_{ab}(\lambda) \,\, , 
\label{Ee}
\end{equation}
where $T_{ab}(\lambda)$ satisfies the weak energy condition, i.e., for
all $\lambda >0$ we have\\$T_{ab}(\lambda) t^a(\lambda) t^b (\lambda)
\geq 0$ for all vectors $t^a(\lambda)$ that are timelike with respect
to $g_{ab}(\lambda)$.}
\item{There exists a smooth positive function $C_1(x)$ on $M$ such that 
\begin{equation}
|h_{ab}(\lambda,x)| \leq \lambda C_1(x) \,\,,
\end{equation}}
where $h_{ab} (\lambda,x) \equiv g_{ab}(\lambda,x) - g_{ab}(0,x)$.
\item{There exists a smooth positive function $C_2(x)$ on $M$ such that 
\begin{equation}
|\nabla_c h_{ab}(\lambda,x)| \leq C_2(x) \,\, .
\end{equation}}
\item{There exists a smooth tensor field $\mu_{abcdef}$ on $M$ such that 
\begin{equation}
\wlim_{\lambda \rightarrow 0}
\left[\nabla_ah_{cd}(\lambda)\nabla_bh_{ef}(\lambda) \right] =
\mu_{abcdef} \,\, ,
\label{mu}
\end{equation}
where ``$\wlim$'' denotes the weak limit.}
\end{enumerate}
\renewcommand{\labelenumi}{\theenumi} It follows immediately that
$\mu_{ab(cd)(ef)}=\mu_{abcdef}$ and $\mu_{abcdef}=\mu_{baefcd}$, and
it is not difficult to show \cite{Burnett:1989gp} that
$\mu_{(ab)cdef}=\mu_{abcdef}$.  It also is not difficult to see that
if $g_{ab}(\lambda)$ satisfies the above conditions for any choice of
$\nabla_a$ and $e_{ab}$, then it satisfies these conditions for all
choices of $\nabla_a$ and $e_{ab}$. In our calculations, it will be
convenient to choose $\nabla_a$ to be the derivative operator
associated with the background metric $g^{(0)}_{ab} \equiv g_{ab}(0)$,
and in the following, we shall make this choice. We shall also raise
and lower indices with $g^{(0)}_{ab}$.

As discussed in the Introduction, the key idea is that our actual
spacetime, with all of its inhomogeneities, is described by an element
of such a one-parameter family, at some small but finite value of
$\lambda$.  By analyzing the limiting behavior of such one-parameter
families at small $\lambda$, we hope to attain an excellent
approximate description of our universe. However, unlike ordinary
perturbative analyses, our one-parameter family $g_{ab}(\lambda)$ is
not differentiable in $\lambda$ at $\lambda =0$, so we cannot define
perturbative quantities or obtain useful equations by differentiation with
respect to $\lambda$.

Our first task is to derive an equation satisfied by the background
metric $g^{(0)}_{ab} = g_{ab}(0)$. This equation will follow directly
from Einstein's equation \eqref{Ee} for $g_{ab}(\lambda)$, using the
general relationship between the Ricci curvature of $g^{(0)}_{ab}$ and
$g_{ab}(\lambda)$, namely
\begin{equation}\label{eqn:RicciRelation}
  R_{ab}(g^{(0)})=R_{ab}(g(\lambda))+2\nabla_{[a}C^c_{\phantom{c}c]b}-2C^d_{\phantom{d}b[a}C^c_{\phantom{c}c]d}\,,
\end{equation}
where 
\begin{equation}
  C^c_{\phantom{c}ab}=\frac{1}{2}g^{cd}(\lambda)\left\{\nabla_ag_{bd}(\lambda)+\nabla_bg_{ad}(\lambda)-\nabla_dg_{ab}(\lambda)\right\}\,
\label{Ccab}
\end{equation}
and, again, we remind the reader that $\nabla_a$ denotes the
derivative operator associated with $g_{ab}^{(0)}$, so that $\nabla_c
g_{ab}^{(0)} = 0$.  It follows immediately from \eqref{eqn:RicciRelation}
that
\begin{eqnarray}
  R_{ab}(g^{(0)})&-&\frac{1}{2}g_{ab}(\lambda)g^{cd}(\lambda)R_{cd}(g^{(0)})+\Lambda g_{ab}(\lambda)\nonumber\\
  &=&G_{ab}(g(\lambda)) +\Lambda g_{ab}(\lambda)+2\nabla_{[a}C^e_{\phantom{e}e]b}-2C^f_{\phantom{f}b[a}C^e_{\phantom{e}e]f}\nonumber\\
  &&-g_{ab}(\lambda)g^{cd}(\lambda)\nabla_{[c}C^e_{\phantom{e}e]d}+g_{ab}(\lambda)g^{cd}(\lambda)C^f_{\phantom{f}d[c}C^e_{\phantom{e}e]f}\,,
\end{eqnarray}
and invoking the Einstein equation for $\lambda>0$,
\begin{eqnarray}\label{eqn:EinsteinRelation}
  R_{ab}(g^{(0)})&-&\frac{1}{2}g_{ab}(\lambda)g^{cd}(\lambda)R_{cd}(g^{(0)})+\Lambda g_{ab}(\lambda)\nonumber\\
  &=& 8\pi T_{ab}(\lambda)+2\nabla_{[a}C^e_{\phantom{e}e]b}-2C^f_{\phantom{f}b[a}C^e_{\phantom{e}e]f}\nonumber\\
  &&-g_{ab}(\lambda)g^{cd}(\lambda)\nabla_{[c}C^e_{\phantom{e}e]d}+g_{ab}(\lambda)g^{cd}(\lambda)C^f_{\phantom{f}d[c}C^e_{\phantom{e}e]f}\,.
\end{eqnarray}
We now take the weak limit of both sides of
\eqref{eqn:EinsteinRelation} as $\lambda \rightarrow 0$. It is easy to
see that the weak limit of the left side exists and is equal to
$G_{ab}(g^{(0)}) + \Lambda g^{(0)}_{ab}$. Aside from the term $8\pi
T_{ab} (\lambda)$, the terms on the right side of
(\ref{eqn:EinsteinRelation}) all contain a total of precisely two
derivatives of $h_{ab}(\lambda)$. These terms can be classified into
the following types: (a) terms linear in $h_{ab}(\lambda)$,
corresponding to the linearized Einstein operator acting on
$h_{ab}(\lambda)$; (b) terms quadratic in $h_{ab}(\lambda)$,
corresponding to the second order Einstein operator acting on
$h_{ab}(\lambda)$; and (c) terms cubic and higher order in
$h_{ab}(\lambda)$. The weak limit of terms of type (a) vanish by the
type of argument leading to (\ref{wdg}). The terms of type (b) depend
upon $h_{ab}(\lambda)$ either in the form
$\nabla_ah_{cd}(\lambda)\nabla_bh_{ef}(\lambda)$---which has weak
limit $\mu_{abcdef}$---or in the form $h_{cd}(\lambda)\nabla_a
\nabla_bh_{ef}(\lambda)$. However, since
\begin{equation}
h_{cd}(\lambda)\nabla_a \nabla_bh_{ef}(\lambda) = \nabla_a
\left[h_{cd}(\lambda)\nabla_bh_{ef}(\lambda) \right] -
\nabla_ah_{cd}(\lambda)\nabla_bh_{ef}(\lambda)
\end{equation}
and the weak limit of $\nabla_a
\left[h_{cd}(\lambda)\nabla_bh_{ef}(\lambda) \right]$ vanishes, we see
that the weak limit of $h_{cd}(\lambda)\nabla_a
\nabla_bh_{ef}(\lambda)$ also exists and is equal to
$-\mu_{abcdef}$. Finally, it is easily seen that the weak limit of all
terms of type (c) vanish.

Since the weak limit of all terms in (\ref{eqn:EinsteinRelation})
apart from $T_{ab} (\lambda)$ exist, it follows that the weak limit of
$T_{ab} (\lambda)$ itself also must exist, without the necessity to
impose any additional assumptions on our one-parameter family. We write
\begin{equation}
T^{(0)}_{ab} \equiv \wlim_{\lambda\to 0}T_{ab}(\lambda) \, ,
\end{equation}
and we may interpret $T^{(0)}_{ab}$ as representing the matter
stress-energy tensor averaged over small scale inhomogeneities.  Since
$T_{ab}(\lambda)$ satisfies the weak energy condition for all
$\lambda>0$ and since $g_{ab}(\lambda)$ converges uniformly (on
compact sets) to $g_{ab}^{(0)}$, it is not difficult to show that
$T^{(0)}_{ab}$ also satisfies the weak energy condition, i.e.,
$T^{(0)}_{ab} t^a t^b \geq 0$ for all timelike vectors $t^a$ with
respect to $g^{(0)}_{ab}$.  The weak limit of
(\ref{eqn:EinsteinRelation}) then takes the form
\begin{equation}
  G_{ab}(g^{(0)}) + \Lambda g^{(0)}_{ab} = 8\pi T^{(0)}_{ab} + 8\pi t^{(0)}_{ab}\,,
\label{Eeg0}
\end{equation}
where the ``effective gravitational stress-energy tensor''
$t^{(0)}_{ab}$ arises from the weak limit of terms of type (b) above
and can be expressed entirely in terms of $\mu_{abcdef}$. A lengthy
calculation (see \cite{Burnett:1989gp}) yields
\begin{eqnarray}\label{eqn:BackgroundEinstein-messy}
  8\pi t^{(0)}_{ab} &=&\frac{1}{8}\left\{-\mu^{c\phantom{c}de}_{\phantom{c}c\phantom{de}de}-\mu^{c\phantom{c}d\phantom{d}e}_{\phantom{c}c\phantom{d}d\phantom{e}e}+2\mu^{cd\phantom{c}e}_{\phantom{cd}c\phantom{e}de}\right\}+\frac{1}{2}\mu^{cd}_{\phantom{cd}acbd}-\frac{1}{2}\mu^{c\phantom{ca}d}_{\phantom{c}ca\phantom{d}bd}\nonumber\\
  &&+\frac{1}{4}\mu^{\phantom{ab}cd}_{ab\phantom{cd}cd}-\frac{1}{2}\mu^{c\phantom{(ab)c}d}_{\phantom{c}(ab)c\phantom{d}d}+\frac{3}{4}\mu^{c\phantom{cab}d}_{\phantom{c}cab\phantom{d}d}-\frac{1}{2}\mu^{cd}_{\phantom{cd}abcd}\,.
\end{eqnarray}
Note that $t^{(0)}_{ab}$ corresponds to the ``Isaacson average'' of
the second order Einstein tensor of $h_{ab}(\lambda)$. It can be shown
to be gauge invariant \cite{Burnett:1989gp}.

Following Burnett \cite{Burnett:1989gp}, we decompose $\mu_{abcdef}$
into two tensors $\alpha_{abcdef}\equiv\mu_{[c|[ab]|d]ef}$ and
$\beta_{abcdef}\equiv\mu_{(abcd)ef}$, so that
\begin{eqnarray}
  \mu_{abcdef}&=&-\frac{4}{3}(\alpha_{c(ab)def}+\alpha_{e(ab)fcd}-\alpha_{e(cd)fab})\nonumber\\
&&+\beta_{abcdef}+\beta_{abefcd}-\beta_{cdefab}\,.
\end{eqnarray}
Then, we have
\begin{equation}\label{eqn:BackgroundEinstein-messy2}
8\pi t^{(0)}_{ab} = \alpha^{\phantom{a}c\phantom{b}d}_{a\phantom{c}b\phantom{d}cd}+\frac{3}{2}\alpha^{cd}_{\phantom{cd}cdab}-2\alpha^{\phantom{(a}cd}_{(a\phantom{cd}|cd|b)}-\frac{1}{4}g^{(0)}_{ab}\left\{\alpha^{cd\phantom{cd}e}_{\phantom{cd}cd\phantom{e}e}-2\alpha^{cde}_{\phantom{cde}cde}\right\}\,.
\end{equation}
Note that the right hand side is independent of
$\beta_{abcdef}$. 

The remainder of this section will be devoted to establishing two key
properties of $t^{(0)}_{ab}$, namely, ${t^{(0)a}}_a = 0$ and
$t^{(0)}_{ab} t^a t^b \geq 0$ for any timelike vector $t^a$ of the
background metric $g^{(0)}_{ab}$. To prove these results, we need to
obtain further information about $\alpha_{abcdef}$ from Einstein's
equation \eqref{eqn:EinsteinRelation}. To do so, it is convenient to
re-write this equation in ``Ricci form'' as
\begin{equation}\label{eqn:RicciRelation2}
R_{ab}(g^{(0)}) - \Lambda g_{ab}(\lambda) =8\pi T_{ab}(\lambda)-4\pi g_{ab}(\lambda)g^{cd}(\lambda)T_{cd}(\lambda)+2\nabla_{[a}C^c_{\phantom{c}c]b}-2C^d_{\phantom{d}b[a}C^c_{\phantom{c}c]d}\,.
\end{equation}
We now multiply this equation by $h_{ef}(\lambda)$ and take the weak
limit as $\lambda \rightarrow 0$. The left side clearly goes to
zero. The weak limit of $h_{ef}(\lambda)
C^d_{\phantom{d}b[a}C^c_{\phantom{\ c}c]f}$ is also easily seen to vanish. 
On the other hand, we have
\begin{eqnarray}
\wlim_{\lambda\to 0} h_{ef}(\lambda) \nabla_{[a}C^c_{\phantom{c}c]b} &=&
-\wlim_{\lambda\to 0} \nabla_{[a} h_{|ef|}(\lambda) C^c_{\phantom{c}c]b} \nonumber \\
&=& - \mu_{[b|[ac]|d]ef}g^{(0)cd} \nonumber \\
&=& - \alpha_{acb\phantom{c}ef}^{\phantom{acb}c}\,.
\end{eqnarray}
Thus, we obtain
\begin{equation}
  \alpha_{a\phantom{c}bcef}^{\phantom{a}c}=4\pi\wlim_{\lambda\to 0}h_{ef}(\lambda)\left(T_{ab}(\lambda)-\frac{1}{2}g_{ab}(\lambda)g^{cd}(\lambda)T_{cd}(\lambda)\right) \,.
\label{aht}
\end{equation}
In particular, the weak limit of the right side of this equation must
exist.  By similar arguments starting with Einstein's equation in the
form \eqref{eqn:EinsteinRelation}, it also follows that the weak limit
as $\lambda \to 0$ of $h_{ef}(\lambda) T_{ab} (\lambda)$ exists.
We write
\begin{equation}
\kappa_{efab} = \wlim_{\lambda\to 0}h_{ef}(\lambda) T_{ab}(\lambda)\,.
\label{rho}
\end{equation}

We will now show that the right side of \eqref{aht} vanishes. We first
prove the following lemma.
\begin{lemma}Let $A(\lambda)$ be a one-parameter family of smooth
  tensor fields (with indices suppressed) converging uniformly on
  compact sets to $A(0)$, and let $B(\lambda)$ be a one-parameter
  family of non-negative smooth functions converging weakly to $B(0)$.
  Then $A(\lambda)B(\lambda)\to A(0)B(0)$ weakly as $\lambda \to 0$.
\end{lemma}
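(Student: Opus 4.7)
The plan is to establish the weak convergence via the algebraic identity
\begin{equation}
A(\lambda)B(\lambda) - A(0)B(0) = \bigl[A(\lambda) - A(0)\bigr]B(\lambda) + A(0)\bigl[B(\lambda) - B(0)\bigr],
\end{equation}
and to show that when each term on the right is contracted with an arbitrary smooth, compactly supported test tensor $f$ and integrated against the (fixed, $\lambda$-independent) volume element of $g^{(0)}_{ab}$, the result tends to zero. The two terms require different ingredients: the second uses only the weak convergence of $B$, while the first uses the uniform-on-compacta convergence of $A$ together with a local $L^1$ bound on $B(\lambda)$ that is secured by its nonnegativity.

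The second term is immediate: contracting $f$ with $A(0)$ produces $\tilde f \equiv f \cdot A(0)$, which is smooth and compactly supported (with $\supp \tilde f \subset \supp f$). The weak convergence hypothesis applied to $\tilde f$ gives $\int \tilde f\,[B(\lambda) - B(0)] \to 0$ directly. For the first term, with $K \equiv \supp f$ and $|\,\cdot\,|$ the Riemannian norm built from $e_{ab}$, I would bound
\begin{equation}
\left|\int f \cdot \bigl[A(\lambda) - A(0)\bigr]\, B(\lambda)\right| \;\le\; \sup_K\!\bigl(|f|\,|A(\lambda) - A(0)|\bigr)\;\int_K B(\lambda).
\end{equation}
The supremum tends to zero by uniform convergence of $A(\lambda)$ to $A(0)$ on the compact set $K$, so the task reduces to showing that $\int_K B(\lambda)$ is bounded uniformly in $\lambda$.

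The uniform bound on $\int_K B(\lambda)$ is the only nontrivial step, and it is exactly here that the nonnegativity of $B$ enters. Choose a smooth, nonnegative, compactly supported function $\phi$ with $\phi \ge 1$ on $K$; since $B(\lambda) \ge 0$, this gives $\int_K B(\lambda) \le \int \phi\, B(\lambda)$, and the right-hand side converges to $\int \phi\, B(0)$ by weak convergence, hence is eventually bounded. This is precisely the step that would fail without a sign assumption on $B$: cancellations could leave $\int_K B(\lambda)$ unbounded while all weak pairings remained convergent. Combining the two estimates then yields $A(\lambda)B(\lambda) \to A(0)B(0)$ weakly.
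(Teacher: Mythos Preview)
Your proof is correct and follows essentially the same approach as the paper: the same algebraic splitting, the same handling of the $A(0)[B(\lambda)-B(0)]$ term via weak convergence against the test field $f\cdot A(0)$, and the same use of a nonnegative bump function (your $\phi$, the paper's $f$) equal to $1$ on $\supp f$ to convert the nonnegativity of $B(\lambda)$ into a uniform local $L^1$ bound. The paper's write-up differs only cosmetically in that it organizes the final estimate with explicit $\epsilon$'s rather than invoking ``eventually bounded.''
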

\begin{proof}
  Let $F$ be a test tensor field---i.e., a smooth tensor field of
  compact support---with index structure dual to that of $A$. Let $f$
  be a smooth, non-negative function of compact support with $f=1$ on
  the support of $F$. Then $F = fF$. We have
  \begin{align}\label{eqn:lemmaproof}
    &\lim_{\lambda\to 0}\int(A(\lambda)B(\lambda)- A(0)B(0))F
\nonumber\\
    &\qquad=\lim_{\lambda\to
      0}\int \left[(A(\lambda)-A(0))B(\lambda)+A(0)(B(\lambda)
-B(0)) \right]F\,,
  \end{align}
  where contraction of the indices of $A$ and $F$ is understood.  The
  second term is zero because $B(\lambda)\to B(0)$ weakly and $A(0)F$
  is a test function.  On the other hand, we have
  \begin{eqnarray}
    \left|\int(A(\lambda)-A(0))B(\lambda)F \right|
    &\le&\int|(A(\lambda)-A(0))F||B(\lambda)f|\nonumber\\
    &\le&\sup_{x\in\supp F}|(A(\lambda)-A(0))F |\int B(\lambda) f\,,
  \end{eqnarray}
  where we have used the facts that $B(\lambda) \geq 0$ and $f \geq 0$
  in the second line. Now let $\epsilon>0$. Since $A(\lambda)\to A(0)$
  uniformly as $\lambda\to 0$ on compact sets,
  there exists $\lambda_1>0$ such that $\sup_{x\in\supp
    F}|(A(\lambda)-A(0))F | < \epsilon$ for $\lambda<\lambda_1$.
  Similarly, since $B(\lambda)\to B(0)$ weakly as $\lambda\to 0$,
  there exists $\lambda_2>0$ such that
  $\int(B(\lambda)-B(0))f<\epsilon$ for all $\lambda < \lambda_2$.  
Thus, for all $\lambda <
  \min(\lambda_1,\lambda_2)$, we have
  \begin{equation}
    \left|\int(A(\lambda)-A(0))B(\lambda)F \right| < \epsilon \left(\int B(0)f + \epsilon \right) \,,
  \end{equation}
  Thus the first term in \eqref{eqn:lemmaproof} must be zero as well.
\end{proof}

The vanishing of $\kappa_{efab}$ (see \eqref{rho}) is a direct
consequence of this lemma. To see this, let $t^a$ be an arbitrary timelike
vector field in the metric $g^{(0)}_{ab}$. Then, we have
\begin{equation}
\kappa_{efab}t^a t^b = \wlim_{\lambda\to 0}h_{ef}(\lambda) (T_{ab}(\lambda) t^a t^b)
\label{rhott}
\end{equation}
Since by condition (ii) on our one-parameter families, $|g_{ab}
(\lambda) - g^{(0)}_{ab}| \leq \lambda C_1(x)$, on any fixed compact
region, we can find a $\lambda_0$ such that $t^a$ is timelike in the
metric $g_{ab} (\lambda)$ for all $\lambda \leq \lambda_0$. Since
$T_{ab}(\lambda)$ satisfies the weak energy condition, the function
$B(\lambda) \equiv T_{ab}(\lambda) t^a t^b$ is non-negative for all
$\lambda \leq \lambda_0$. We previously showed that $T_{ab}(\lambda)$
(and hence $B$) converges weakly. Assumption (ii) also directly tells
us that $A_{ab} (\lambda) \equiv h_{ab} (\lambda)$ converges uniformly
on compact sets. Thus, from the lemma, we immediately conclude that 
\begin{equation}
\kappa_{efab}t^a t^b = 0
\label{rhott2}
\end{equation}
for all timelike $t^a$ in the metric $g^{(0)}_{ab}$. However, since
$\kappa_{efab} = \kappa_{ef(ab)}$, we have $\kappa_{efab}t^a t^b = 0$
for all timelike $t^a$ if and only if $\kappa_{efab}= 0$, as we
desired to show.

A similar argument establishes that the second term on the right side
of \eqref{aht} also vanishes. Thus, we obtain
\begin{equation}
\alpha_{a\phantom{c}bcef}^{\phantom{a}c}= 0  \,.
\label{aht2}
\end{equation}
Consequently, \eqref{eqn:BackgroundEinstein-messy2} simplifies to 
\begin{equation}\label{eqn:BackgroundEinstein-simple2}
  8\pi t^{(0)}_{ab} = \alpha^{\phantom{a}c\phantom{b}d}_{a\phantom{c}b\phantom{d}cd} \,.
\end{equation}
Taking the trace of this equation and again using \eqref{aht2} we
obtain our first main result of this section:

\begin{theorem}Given a one-parameter family $g_{ab}(\lambda)$
  satisfying assumptions (i)--(iv) above, the effective stress energy
  tensor $t^{(0)}_{ab}$ appearing in equation \eqref{Eeg0} for the
  background metric $g^{(0)}_{ab}$ is traceless,
\begin{equation}
{t^{(0)a}}_a = 0 \,.
\end{equation}
\end{theorem}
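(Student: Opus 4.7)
The plan is to exploit what has already been derived, namely the simplified expression
\begin{equation*}
  8\pi t^{(0)}_{ab} = \alpha_a{}^c{}_b{}^d{}_{cd}
\end{equation*}
from \eqref{eqn:BackgroundEinstein-simple2}, together with the vanishing inner trace $\alpha_a{}^c{}_{bcef}=0$ from \eqref{aht2}. The observation driving the proof is that the trace we want to compute involves the first and third slots of $\alpha_{abcdef}$, while \eqref{aht2} is a statement about the second and fourth slots; the two kinds of contractions can be converted into each other using the antisymmetries of $\alpha$.

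First I would record the two antisymmetries that follow directly from the definition $\alpha_{abcdef}\equiv\mu_{[c|[ab]|d]ef}$: $\alpha$ is antisymmetric in its first two slots (from the inner $[ab]$) and antisymmetric in its third and fourth slots (from the outer $[c|{\cdot}|d]$). Combining these gives the slot-swap identity
\begin{equation*}
  \alpha_{apbqef} = \alpha_{paqbef},
\end{equation*}
since the two sign flips cancel.

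Next I would expand
\begin{equation*}
  8\pi\, g^{(0)ab}\, t^{(0)}_{ab}
  \;=\; g^{(0)ab}\, g^{(0)cp}\, g^{(0)dq}\, \alpha_{apbqcd},
\end{equation*}
apply the slot-swap identity to replace $\alpha_{apbqcd}$ by $\alpha_{paqbcd}$, and observe that in the reorganized product $g^{(0)ab}$ now contracts the second and fourth slots of $\alpha$, while $g^{(0)cp}$ and $g^{(0)dq}$ contract the remaining free slots. Carrying out the $(2,4)$-contraction first produces precisely the quantity shown to vanish by \eqref{aht2}, so the entire expression is zero and $g^{(0)ab}t^{(0)}_{ab}=0$ as required.

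The only subtlety is bookkeeping: one must keep track of which slot of $\alpha_{abcdef}$ is being contracted against which factor of $g^{(0)}$ at each stage, since \eqref{eqn:BackgroundEinstein-simple2} and \eqref{aht2} involve inequivalent contraction patterns on the same tensor. I do not anticipate a genuine obstacle beyond this---no further weak-limit analysis or Einstein-equation manipulation is needed, and once the antisymmetry-driven swap is written down, the reduction is purely algebraic.
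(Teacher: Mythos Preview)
Your proposal is correct and is essentially the same argument the paper intends: the paper simply says ``taking the trace of this equation and again using \eqref{aht2},'' and your antisymmetry swap $\alpha_{apbqef}=\alpha_{paqbef}$ is exactly the step that converts the $(1,3)$-trace of \eqref{eqn:BackgroundEinstein-simple2} into the $(2,4)$-trace that \eqref{aht2} kills. You have just made explicit the bookkeeping the paper leaves to the reader.
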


\bigskip

We now show that $t^{(0)}_{ab}$ satisfies the weak energy condition.
Let $t^a$ be a timelike unit vector field with respect to
$g^{(0)}_{ab}$ .  We wish to show that $t^{(0)}_{ab}t^at^b\ge0$.  It
is convenient to choose an orthonormal basis of $g^{(0)}_{ab}$ with
$t^a$ as the timelike vector. We will use Greek letters
$\mu,\nu,\rho,\ldots$ to denote spacetime components in this basis and
Latin letters $i,j,k,\ldots$ from mid-alphabet to denote spatial
components.  Then
\begin{eqnarray}
  8\pi t^{(0)}_{ab} t^at^b&=&\alpha_{0\mu0\nu}^{\phantom{0\mu0\nu}\mu\nu}\nonumber\\
  &=&\alpha_{0j0k}^{\phantom{0j0k}jk}\nonumber\\
  &=&\alpha_{ij\phantom{i}k}^{\phantom{ij}i\phantom{k}jk}\nonumber\\
  &=&\frac{1}{4}\left\{\mu_{i\phantom{i}jk}^{\phantom{i}i\phantom{jk}jk}+\mu_{jki}^{\phantom{jki}ijk}-2\mu_{jik}^{\phantom{jik}ijk}\right\}\,,
\label{t00}
\end{eqnarray}
where in the second line we used the antisymmetry of $\alpha_{abcdef}$
in the first two and the second two indices, and in the third line we
used \eqref{aht2}.  Thus, we have expressed $t^{(0)}_{ab}t^at^b$ entirely
in terms of spatial components of $\mu_{abcdef}$, which will be useful
for taking advantage of the positive definiteness of the spatial
metric.

Aside from the tensor symmetries that arise directly from its
definition, the only restrictions on $\mu_{abcdef}$ that we have at
our disposal come from \eqref{aht2}. There is only one equation that
can be derived from \eqref{aht2} that involves only spatial components
of $\mu_{abcdef}$, namely\footnote{Equation \eqref{aht2} states that
  the weak limit of $h_{ef}$ times the linearized Ricci tensor
  vanishes, i.e., it has the character of the linearized vacuum
  Einstein equation off of flat spacetme. The linearized Hamiltonian
  constraint---i.e., the vanishing of the time-time component of the
  linearized Einstein tensor---is the only component of Einstein's
  equation that can be expressed entirely in terms of spatial
  derivatives of spatial components of the perturbed metric. Equation
  \eqref{lhc} corresponds to this equation.}
\begin{equation}
0 = \alpha_{i\mu\phantom{i\mu}kl}^{\phantom{i\mu}i\mu} - \alpha_{0\mu}{}^{0\mu}{}_{kl}\,.
\label{lhc}
\end{equation}
This yields
\begin{equation}
  \mu_{ij\phantom{ij}kl}^{\phantom{ij}ij}=\mu_{i\phantom{i}j\phantom{j}kl}^{\phantom{i}i\phantom{j}j}\,.
\end{equation}
Using this relation, we may re-write \eqref{t00} as
\begin{equation}
  8\pi t^{(0)}_{ab} t^at^b=\frac{1}{4}\left\{\mu_{i\phantom{i}jk}^{\phantom{i}i\phantom{jk}jk}-2\mu_{jik}^{\phantom{jik}ijk}+2\mu_{jki}^{\phantom{jki}ijk}-\mu_{i\phantom{i}j\phantom{j}k}^{\phantom{i}i\phantom{j}j\phantom{k}k}\right\}\,.
\end{equation}

For the remainder of our argument, we will work in a
small neighborhood of an arbitrary point $P\in M$. We will work in
Riemannian normal coordinates $x$ about $P$ adapted to our orthonormal
basis. Let $f_P^\delta$ be a one-parameter family of smooth,
non-negative functions with support contained in a $\delta$-ball centered at $P$ such that
\begin{equation}
  \int \left[f_P^\delta(x)\right]^2 \sqrt{-g} \ud^4x=1\,.
\label{fdelta}
\end{equation}
An explicit choice of $f_P^\delta$ is
\begin{equation}
  f_P^\delta(x) = \frac{1}{\delta^2\sqrt{-g}} F(x/\delta) \,,
\end{equation}
where $F$ is any smooth, non-negative function of compact support contained in a
ball of radius $1$ satisfying $\int F^2 d^4x = 1$, but there is no
need to make this particular choice.  Instead of working with
$h_{ab}(\lambda)$, we introduce the quantity
\begin{equation}
\psi_{ab}(\delta,\lambda) \equiv f_P^\delta h_{ab}(\lambda)\,.
\label{psidef}
\end{equation}
Note that for any fixed $\delta > 0$ and $\lambda > 0$, $\psi_{ab}$ is
smooth and of compact support, so, in particular, $\psi_{ab}$ and all
of its spacetime derivatives are in $L^2$. Furthermore, it follows
directly from the properties of $h_{ab}(\lambda)$ and $f_P^\delta$
that all components of $\psi_{ab}$ converge uniformly to $0$ as
$\lambda \to 0$ at fixed $\delta$. Similarly, components of $\nabla_c
\psi_{ab}$ are uniformly bounded in $\lambda$ and $x$ as $\lambda \to
0$ at fixed $\delta$. Since $\psi_{ab}$ is of fixed compact support,
it follows immediately that $\|\psi_{ab}\|_{L^2} \to 0$ and
$\|\nabla_c \psi_{ab}\|_{L^2}$ is uniformly bounded as $\lambda \to 0$,
where $\|\psi_{ab}\|_{L^2} \equiv \int |\psi_{ab}|^2$.

Since $\mu_{abcdef}$ is smooth, it is obvious from \eqref{fdelta} and
the support properties and positivity of $(f_P^\delta)^2$ that
\begin{equation}
\mu_{\mu\nu\alpha\beta\gamma\rho}(P) = \lim_{\delta \to
  0}\int\mu_{\mu\nu\alpha\beta\gamma\rho}(x)(f_P^\delta)^2 \sqrt{-g} \ud^4x\,.
\end{equation}
On the other hand, since $(f_P^\delta)^2$ is a test function, from the
definition \eqref{mu} of $\mu_{abcdef}$, at each fixed $\delta$ we have
\begin{eqnarray}
  \int\mu_{\mu\nu\alpha\beta\gamma\rho}(x)(f_P^\delta)^2\sqrt{-g}\ud^4x&=&\lim_{\lambda\to
    0}\int\partial_\mu h_{\alpha\beta}(\lambda)\partial_\nu h_{\gamma\rho}(\lambda) (f_P^\delta)^2\sqrt{-g}\ud^4x\nonumber\\ &=&\lim_{\lambda\to
    0}\int\partial_\mu\psi_{\alpha\beta}\partial_\nu\psi_{\gamma\rho}\sqrt{-g}\ud^4x\,.
\label{muavg}
\end{eqnarray}
Here, in the first line, we replaced the derivative operator
$\nabla_a$ associated with $g^{(0)}_{ab}$ with the coordinate
derivative operator $\partial_a$ associated with Riemannian normal
coordinates at $P$, making use of the fact that the definition of
$\mu_{abcdef}$ is independent of derivative operator. In the second
line, we used $\partial_\mu\psi_{\alpha\beta} =
f_P^\delta \partial_\mu h_{\alpha\beta} + h_{\alpha\beta} \partial_\mu
f_P^\delta$ and the fact that the resulting terms in \eqref{muavg}
with no derivatives on $h_{\alpha\beta}$ vanish in the limit as
$\lambda \to 0$. Taking the limit of \eqref{muavg} as $\delta \to 0$,
we obtain
\begin{equation}\label{eqn:muavgpsi}
\mu_{\mu\nu\alpha\beta\gamma\rho}(P) =
\lim_{\delta \to 0} \lim_{\lambda\to 0}\int\partial_\mu\psi_{\alpha\beta}\partial_\nu\psi_{\gamma\rho}\ud^4x\,,
\end{equation}
where we have used the fact that $\sqrt{-g} = 1$ at $P$. Note that it
is critical in this equation that the limits be taken in the order specified.

The corresponding formula for $t^{(0)}_{00}$ is
\begin{equation}
t^{(0)}_{00}(P)=\frac{1}{32\pi}\lim_{\delta\to0}\lim_{\lambda\to0}\int\ud^4x\left\{\partial_i\psi_{jk}\partial^i\psi^{jk}-2\partial_j\psi_k^{\phantom{k}i}\partial_i\psi^{jk}+2\partial_j\psi_i^{\phantom{i}i}\partial_k\psi^{jk}-\partial_i\psi_j^{\phantom{j}j}\partial^i\psi_k^{\phantom{k}k}\right\}\,.
\label{t00psi}
\end{equation}
where, in this equation, indices are raised and lowered with the flat
Euclidean spatial metric $\eta_{ij} = {\rm diag} (1,1,1)$ corresponding to the
spatial components of $g^{(0)}_{ab}$ at $P$. The major advantage of
\eqref{t00psi} is that we can apply usual Fourier transform techniques
to evaluate the integral appearing on the right side of this equation
prior to taking the limit. If we had not ``localized''
$h_{ab}(\lambda)$ by multiplying it by $f_P^\delta$, the Fourier
transform of $h_{ab}(\lambda)$ could have been ill defined and, even
if it were well defined, it would contain global information about
$h_{ab}(\lambda)$ rather than local information about the behavior of
$h_{ab}(\lambda)$ near $P$.

Our strategy now will be to prove that the quantity
\begin{equation}
\lim_{\lambda\to0}\int\ud^4x\left\{\partial_i\psi_{jk}\partial^i\psi^{jk}-2\partial_j\psi_k^{\phantom{k}i}\partial_i\psi^{jk}+2\partial_j\psi_i^{\phantom{i}i}\partial_k\psi^{jk}-\partial_i\psi_j^{\phantom{j}j}\partial^i\psi_k^{\phantom{k}k}\right\}\,
\end{equation}
can be expressed as a sum of terms which are either positive, or which
converge to zero as $\delta\to0$. Positivity of $t^{(0)}_{00}$ then
follows immediately.  We proceed by taking Fourier transforms with
respect to the spatial coordinates $\boldsymbol{x}$ only, with the
convention
\begin{equation}
  \hat{\psi}_{ij}(t,\boldsymbol{k})=
  \frac{1}{(2\pi)^{3/2}}\int\ud^3\boldsymbol{x}\,\psi_{ij}(t,\boldsymbol{x})e^{-i \boldsymbol{k}\cdot\boldsymbol{x}}\,.
\end{equation}
As previously noted, $\psi_{jk}$ and all of its derivatives are
obviously in $L^2$, since $\psi_{jk}$ is smooth and of compact
support. Since the Fourier transform is inner product preserving in
$L^2$, we have
\begin{equation}
t^{(0)}_{00} (P)=\frac{1}{32\pi}\lim_{\delta\to0}\lim_{\lambda\to0}\int \ud t\ud^3\boldsymbol{k}\, \left\{k_ik^i\hat{\psi}_{jk}\overline{\hat{\psi}^{jk}}-2k_ik_j\hat{\psi}_k^{\phantom{k}i}\overline{\hat{\psi}^{jk}}+2k_jk_k\hat{\psi}_i^{\phantom{i}i}\overline{\hat{\psi}^{jk}}-k_ik^i\hat{\psi}_j^{\phantom{j}j}\overline{\hat{\psi}_k^{\phantom{k}k}}\right\}\,.
\label{t00ft}
\end{equation}

We may decompose $\hat{\psi}_{ij}$ into its scalar, vector, and tensor parts as
\begin{equation}
  \hat{\psi}_{ij}(t,\boldsymbol{k})=\hat{\sigma}(t,\boldsymbol{k})k_ik_j- 2\hat{\varphi} q_{ij} + 2k_{(i} \hat{z}_{j)}(t,\boldsymbol{k})+ \hat{s}_{ij} (t,\boldsymbol{k}) \,,
\label{psidecom}
\end{equation}
where $k^i\hat{z}_i=0=k^i\hat{s}_{ij}$, and
$\hat{s}^i_{\phantom{i}i}=0$. Here $q_{ij}$ is the projection
orthogonal to $k^i$ of the Euclidean metric on Fourier transform
space. Since the various terms on the right side of \eqref{psidecom}
are orthogonal at each $\boldsymbol{k}$, it follows immediately that,
for example, $|\hat{\varphi}(\boldsymbol{k})|^2 \leq
\frac{1}{8}\hat{\psi}^{ij}(\boldsymbol{k})
\overline{\hat{\psi}}_{ij}(\boldsymbol{k})$. Since $\hat{\psi}_{ij}$
and all powers of $k^i$ times $\hat{\psi}_{ij}$ are in $L^2$, it
follows immediately that $\hat{\varphi}$ and all powers of $k^i$ times
$\hat{\varphi}$ are in $L^2$. Thus, we can freely take Fourier transforms
of $\hat{\varphi}$ and all powers of $k^i$ times
$\hat{\varphi}$. Furthermore, since the $L^2$ norm of $\psi_{ij}$---and,
hence, the $L^2$ norm of $\hat{\psi}_{ij}$---goes to zero as $\lambda
\to 0$, it follows immediately that the $L^2$ norm of
$\hat{\varphi}$---and hence the $L^2$ norm of $\varphi$---also goes to zero
as $\lambda \to 0$. Similarly, the $L^2$ norm of $\partial_i \varphi$
must remain uniformly bounded as $\lambda \to 0$. Similar results hold
for the other terms appearing on the right side of \eqref{psidecom}.

Substituting the decomposition \eqref{psidecom} in \eqref{t00ft} and
using the fact that $\psi_{ij}$ is real (which implies that
$\overline{\hat{\psi}}(t,\boldsymbol{k}) =
\hat{\psi}(t,-\boldsymbol{k})$), we obtain
\begin{equation}\label{eqn:T00FT}
t^{(0)}_{00}
(P)=\frac{1}{32\pi}\lim_{\delta\to0}\lim_{\lambda\to0}\int \ud
t\ud^3\boldsymbol{k}\,
\left\{k_ik^i\hat{s}_{jk}\overline{\hat{s}^{jk}}-8k_ik^i\hat{\varphi}\overline{\hat{\varphi}}\right\}\,.
\end{equation}
Thus, we see that the ``tensor part'', $\hat{s}_{ij}$, of
$\hat{\psi}_{ij}$ makes a positive contribution to the effective
gravitational energy density $t^{(0)}_{00}$. This may be interpreted
as saying that, at leading order within this framework, gravitational
radiation carries positive energy density. The scalar $\hat{\sigma}$
and the vector part $\hat{z}_i$ do not contribute at all, as might be
expected from the fact that these quantities should correspond to
``pure gauge''. Finally, the scalar $\hat{\varphi}$ makes the negative
contribution
\begin{equation}
2E_\varphi = -\frac{1}{4\pi}\int \ud t \ud^3 \boldsymbol{k} k_ik^i\hat{\varphi}\overline{\hat{\varphi}}
\label{Ephi}
\end{equation}
to the effective energy density.

In order to interpret the meaning of $\hat{\varphi}$ and $E_\varphi$, we
note that by \eqref{psidecom}, $\hat{\varphi}$ satisfies
\begin{equation}
4 k^ik_i \hat{\varphi} =  - k^i k_i \hat{\psi}^j_{\phantom{j}j} + k^i k^j \hat{\psi}_{ij}\,.
\label{pois0}
\end{equation}
In position space, \eqref{pois0} becomes
\begin{equation}
4 \partial^i \partial_i \varphi = - \partial^i \partial_i {\psi^j}_j 
+ \partial^i \partial^j \psi_{ij}\,.
\label{pois1}
\end{equation}
To put the right side of this equation in a more recognizable form, we
return to Einstein's equation \eqref{eqn:EinsteinRelation} and
consider its normal-normal component relative to a $t = {\rm const}$
surface.  This corresponds to the Hamiltonian constraint equation,
which has the property that the only terms containing second spacetime
derivatives of $h_{ab}$ involve only spatial derivatives of spatial
components of $h_{ab}$.  To obtain this equation from
\eqref{eqn:EinsteinRelation}, we raise both indices with $g^{ab}
(\lambda)$ and then take the $00$ component.  Since we are working in
Riemannian normal coordinates about $P$ we also express the background
metric as
\begin{equation}\label{rnc}
  g^{(0)}_{\alpha\beta}=\eta_{\alpha\beta}-\frac{1}{3}R_{\alpha\mu\beta\nu}x^\alpha x^\beta + O(x^3)\,.
\end{equation}
The terms in the resulting equation that are purely linear in
$h_{ab}$, contain second derivatives of $h_{ab}$, and do not depend on
the background curvature are of the form
$\frac{1}{2}\partial^i \partial_i {h^j}_j -
\frac{1}{2}\partial^i \partial^j h_{ij}$, i.e. the same combination of
derivatives of components\footnote{These terms correspond to the
  linearization of the scalar curvature of the spatial metric, as
  would be expected from the general form of the ``Hamiltonian
  constraint equation''.} as appears in \eqref{pois1}.  There are also
terms which are linear in $h_{ab}$ and contain second derivatives of
$h_{ab}$, which depend on the difference between the exact background
metric $g_{ab}^{(0)}$ and $\eta_{ab}$, and these can be expressed in
the form $U^{ijkl}\partial_i\partial_jh_{kl}$, where $U^{ijkl} =
O(x^2)$.  The terms in the equation which are nonlinear in $h_{ab}$
that contain second derivatives of $h_{ab}$ can be expressed in the
form $\partial_i W^i + Z_1$ where $W^i$ converges to zero uniformly on
compact sets as $\lambda \to 0$ and $Z_1$ is uniformly bounded on
compact sets as $\lambda \to 0$. The remaining terms in this equation
then take the form $8 \pi T^{00} + Z_2$, where $Z_2$ is uniformly
bounded on compact sets as $\lambda \to 0$. Now multiply this equation
by $f_P^\delta$. Using \eqref{pois1}, we see that the resulting
equation takes the form
\begin{equation}
\partial^i \partial_i \varphi(\lambda) = 4 \pi f^\delta_P T^{00}(\lambda) + \partial_i \omega^i(\lambda) + \zeta(\lambda) + \frac{1}{2}U^{ijkl}\partial_i\partial_j\psi_{kl}(\lambda)\,,
\label{pois2}
\end{equation}
where $\omega^i \equiv \frac{1}{2}f^\delta_P W^i$ is of fixed compact support
and converges to zero uniformly as $\lambda \to 0$, and $\zeta$ is of
fixed compact support and is uniformly bounded as $\lambda \to
0$. Thus, $\varphi$ satisfies a Poisson-like equation. Furthermore, the
position space version of \eqref{Ephi} can be written as
\begin{equation}
  2E_\varphi = -\frac{1}{4\pi} \int \ud t\ud^3\boldsymbol{x} \partial^i \varphi  \partial_i\varphi \,\, ,
\label{Ephi2}
\end{equation}
which is just the usual formula for (twice) the gravitational
potential energy in Newtonian gravity! Note that we have not made any
Newtonian approximations, nor have we made a special choice of ``time
vector'' $t^a$.

Thus, we see that the resolution of the issue of whether $t^{(0)}_{00}
\geq 0$ depends on a competition between the positive contribution
from the tensor modes and the negative contribution, $E_\varphi$, arising
from a Newtonian-like gravitational potential energy. We will now show
that if $T_{00}(\lambda) \geq 0$, then, in fact, $E_\varphi \to 0$ as
$\lambda \to 0$ and $\delta\to0$. Thus, the scalar modes make no
contribution in this limit, and the tensor modes always ``win''.

To prove this, we use \eqref{pois2} to rewrite $E_\varphi$ as
\begin{eqnarray}
E_\varphi &=& \frac{1}{8\pi}\int \ud t\ud^3\boldsymbol{x} \varphi  \partial^i \partial_i\varphi \nonumber \\
&=& \frac{1}{2} \int \ud t\ud^3\boldsymbol{x} \varphi(\lambda) \left[f^\delta_p T^{00}(\lambda) + \frac{1}{4\pi}\partial_i \omega^i(\lambda) + \frac{1}{4\pi}\zeta(\lambda)+\frac{1}{8\pi}U^{ijkl}\partial_i\partial_j\psi_{kl}(\lambda)\right] \,\, .
\label{Ephi3}
\end{eqnarray}
The second term can be written as the time integral of
\begin{equation}
\int \ud^3\boldsymbol{x} \varphi \partial_i \omega^i = - \int  \ud^3\boldsymbol{x} \partial_i \varphi \omega^i  \,\, .
\end{equation}
By the Schwartz inequality, we have
\begin{equation}
\left|\int \ud^3\boldsymbol{x} \partial_i \varphi \omega^i\right| \leq \|\partial_i\varphi\|_{L^2} \|\omega^i\|_{L^2}  \,\, .
\end{equation}
However, $\|\omega^i\|_{L^2} \to 0$ as $\lambda \to 0$, and we have
already noted that $\|\partial_i\varphi\|_{L^2}$ remains uniformly
bounded as $\lambda \to 0$. Therefore, we see that the second term
vanishes in the limit as $\lambda$ goes to zero.

To analyze the remaining terms on the right side of \eqref{Ephi3},
suppose we could show that $\varphi(\lambda)$ converges uniformly to $0$
on compact sets as $\lambda \to 0$. Then since $\zeta(\lambda)$ is of
fixed compact support and is uniformly bounded as $\lambda \to 0$, it
follows immediately that $\int \ud t\ud^3\boldsymbol{x}
\varphi(\lambda)\zeta(\lambda) \to 0$ as $\lambda \to 0$, so the third
term in \eqref{Ephi3} vanishes in the limit as $\lambda \to 0$. On the
other hand, if $\varphi(\lambda)$ converges uniformly and
$T^{00}(\lambda) \geq 0$, then the first term is exactly of the form
to which the above Lemma of this section applies, with $A=\varphi$,
$B=T^{00}$, and $f^\delta_P$ being the test function with which
$A(\lambda)B(\lambda)$ is being smeared. The Lemma then states that
the first term in \eqref{Ephi3} vanishes in the limit as $\lambda \to
0$.  Finally, the last term of \eqref{Ephi3} may be re-written as
\begin{equation}
  -\frac{1}{16\pi}\int \ud t\ud\boldsymbol{x}\left[U^{ijkl}\partial_i\varphi(\lambda)\partial_j\psi_{kl}(\lambda)+\partial_i U^{ijkl}\varphi(\lambda)\partial_j\psi_{kl}(\lambda)\right]\,.
\end{equation}
If $\varphi(\lambda)$ converges uniformly to zero, then the second of
these two terms vanishes in the limit as $\lambda\to0$ because
$U^{ijkl}$ is independent of $\lambda$ and
$\partial_j\psi_{kl}(\lambda)$ is uniformly bounded as $\lambda\to0$
and is of fixed compact support. In contrast to all the others, the
first term above does not converge to zero as $\lambda\to0$.  However,
it will still vanish once we subsequently take $\delta\to0$.  To see
this, use the fact that on the support of $f_P^\delta$, $U^{ijkl}$ is
bounded by a constant times $\delta^2$ (see \eqref{rnc}) to write
\begin{equation}\label{eqn:dpsimu}
  \left|\int \ud t\ud\boldsymbol{x}U^{ijkl}\partial_i\varphi(\lambda)\partial_j\psi_{kl}(\lambda)\right|\le C\delta^2\|\partial_i\varphi\|_{L^2}\|\partial_j\psi_{kl}\|_{L^2}\le C'\delta^2\left\|\partial_j\psi_{kl}\right\|^2_{L^2}\,.
\end{equation}
where the last inequality follows from the fact that
$\|\partial_i\varphi\|^2_{L^2}\le\frac{1}{8}\|\partial_j\psi_{kl}\|^2_{L^2}$.
However, by \eqref{eqn:muavgpsi} we have
\begin{equation}
  \lim_{\delta\to0}\lim_{\lambda\to0}\left\|\partial_j\psi_{kl}\right\|^2_{L^2}=\mu^{j\phantom{j}kl}_{\phantom{j}j\phantom{kl}kl}(P)\,.
\end{equation}
Consequently, the right hand side of \eqref{eqn:dpsimu} vanishes when
the limits as $\lambda\to0$ and $\delta\to0$ are taken.  Thus, we will
have proven that $t^{(0)}_{00} \geq 0$ provided only that we show that
$\varphi(\lambda)$ converges uniformly to $0$ on compact sets as $\lambda
\to 0$.

To prove uniform convergence to $0$ of $\varphi(\lambda)$ on compact
sets, we note that it follows immediately from \eqref{pois1} that
\begin{equation}
4\varphi = -{\psi^i}_i + \chi \,\, ,
\end{equation}
where
\begin{equation}
\partial^i \partial_i \chi = \partial^i \partial^j \psi_{ij} \,\, .
\label{poischi0}
\end{equation}
As already noted above, $\psi_{ij}(\lambda)$---and hence
${\psi^i}_i (\lambda)$---converges to $0$ uniformly as $\lambda \to 0$. Thus,
$\varphi(\lambda)$ will converge to $0$ uniformly on compact
sets if and only if $\chi(\lambda)$ converges to $0$ uniformly on compact
sets. We will now prove this by ``brute force''.

The solution to \eqref{poischi0} is
\begin{equation}
\chi(t,\boldsymbol{x}) = -\frac{1}{4\pi} \int \ud^3\boldsymbol{x}'\,\frac{\partial_i\partial_j\psi^{ij}(t,\boldsymbol{x}')}{|\boldsymbol{x}-\boldsymbol{x}'|} \,\, .
\label{poischi1}
\end{equation}
Changing integration variables to $\boldsymbol{y} = \boldsymbol{x}' -
\boldsymbol{x}$ and integrating by parts, we obtain
\begin{equation}
\chi(t,\boldsymbol{x}) = -\frac{1}{4\pi} \int \ud^3\boldsymbol{y}\,
\partial_i\psi^{ij}(t,\boldsymbol{x}+\boldsymbol{y} )\frac{y_j}{|\boldsymbol{y}|^3}\,.
\label{poischi2}
\end{equation}
For any $r_0>0$, we can break up the integral appearing on the right
side of this equation into an integral over $|\boldsymbol{y}|<r_0$ and
an integral over $|\boldsymbol{y}| \geq r_0$. We leave the first
integral alone but do another integration by parts on the second
integral. We thereby obtain
\begin{eqnarray}
\chi(t,\boldsymbol{x}) &=& -\frac{1}{4\pi} \int_{|\boldsymbol{y}|<r_0} \ud^3\boldsymbol{y}\,
\partial_i\psi^{ij}(t,\boldsymbol{x}+\boldsymbol{y})\frac{y_j}{|\boldsymbol{y}|^3} +\frac{1}{4\pi}\int_{|\boldsymbol{y}|=r_0}\ud \Omega\,r_0^2\frac{y_i}{r_0}\left(\psi^{ij}(t,\boldsymbol{x}+\boldsymbol{y})\frac{y_j}{r_0^3}\right) \nonumber \\
&& \qquad +\frac{1}{4\pi}\int_{|\boldsymbol{y}|>r_0} \ud^3\boldsymbol{y}\,\psi^{ij}(t,\boldsymbol{x}+\boldsymbol{y})\frac{\delta_{ij}|\boldsymbol{y}|^2-y_iy_j}{|\boldsymbol{y}|^5}\,.
\label{poischi3}
\end{eqnarray}
Now let $F_1(\lambda) \equiv \sup_{(t,\boldsymbol{x})} |\psi_{ij}|$
and let $F_2(\lambda) \equiv \sup_{(t,\boldsymbol{x})} |\partial_k
\psi_{ij}|$. Then $F_1 \rightarrow 0$ and $F_2$ remains bounded as
$\lambda \to 0$. It follows straightforwardly from \eqref{poischi3} that
\begin{equation}
|\chi(t,\boldsymbol{x})| \leq c_1 F_2(\lambda) r_0 + c_2 F_1(\lambda)
+ c_3 F_1(\lambda) |\ln(C/r_0)| \,\, ,
\label{poischi4}
\end{equation}
where $c_1$, $c_2$, $c_3$, and $C$ are constants (i.e., independent of
$\lambda$, $r_0$, $t$, and $\boldsymbol{x}$). This bound holds for all
$r_0$ and all $\lambda$. Therefore, as we let $\lambda \to 0$, we are
free to choose $r_0$ to vary with $\lambda$ in any way that is
convenient. Choosing
\begin{equation}
r_0 (\lambda) = \exp[-1/\sqrt{F_1(\lambda)}]\,,
\end{equation}
we obtain the bound
\begin{equation}
|\chi(t,\boldsymbol{x})| \leq C_1\exp[-1/\sqrt{F_1(\lambda)}] + C_2
F_1(\lambda) + C_3 \sqrt{F_1(\lambda)} \,\,,
\label{poischi5}
\end{equation}
from which it follows immediately that $\chi \to 0$ uniformly as
$\lambda \to 0$, as we desired to show.

We have thus proven
\begin{theorem}Given a one-parameter family $g_{ab}(\lambda)$
  satisfying assumptions (i)--(iv) above, the effective stress energy
  tensor $t^{(0)}_{ab}$ appearing in equation \eqref{Eeg0} for the
  background metric $g^{(0)}_{ab}$ satisfies the weak energy
  condition, i.e.,
\begin{equation}
t^{(0)}_{ab} t^a t^b \geq 0 
\end{equation}
for all $t^a$ that are timelike with respect to $g^{(0)}_{ab}$.
\end{theorem}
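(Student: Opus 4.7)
The plan is to evaluate $8\pi t^{(0)}_{ab}t^a t^b$ for a unit timelike vector $t^a$ of $g^{(0)}_{ab}$, using the already-derived simplification $8\pi t^{(0)}_{ab} = \alpha_a{}^c{}_b{}^d{}_{cd}$ (from \eqref{eqn:BackgroundEinstein-simple2}), and to exploit positive-definiteness of the spatial metric. In an orthonormal frame adapted to $t^a$, the antisymmetries of $\alpha_{abcdef}$ together with the constraint $\alpha_a{}^c{}_{bcef}=0$ reduce $t^{(0)}_{ab}t^a t^b$ to a specific quadratic form built only from purely spatial components of $\mu_{abcdef}$. The constraint $\alpha_a{}^c{}_{bcef}=0$ also yields one independent relation among spatial components, $\mu_{ij}{}^{ij}{}_{kl}=\mu_i{}^i{}_j{}^j{}_{kl}$ (the linearized Hamiltonian constraint), and I would use it to recast the expression into a manifestly Fourier-friendly form.

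Next, because $\mu_{abcdef}$ is defined as a weak limit of $\nabla h\,\nabla h$, I would localize the problem at a point $P$ by multiplying $h_{ab}(\lambda)$ by a smooth nonnegative bump $f_P^\delta$ of shrinking support and unit $L^2$-mass, producing $\psi_{ab}(\delta,\lambda)=f_P^\delta h_{ab}(\lambda)$. The key identity to establish is that $\mu_{\mu\nu\alpha\beta\gamma\rho}(P)$ equals the iterated limit (first $\lambda\to 0$, then $\delta\to 0$) of $\int \partial_\mu\psi_{\alpha\beta}\partial_\nu\psi_{\gamma\rho}$, which follows from the definition of weak limit plus the fact that the extra terms produced by derivatives landing on $f_P^\delta$ vanish as $\lambda\to 0$ because $h_{ab}(\lambda)\to 0$ uniformly. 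This brings $t^{(0)}_{00}(P)$ into an integral amenable to spatial Fourier analysis.

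I would then perform a scalar/vector/tensor (SVT) decomposition of $\hat\psi_{ij}$ in momentum space with respect to $k^i$, writing $\hat\psi_{ij}=\hat\sigma\,k_ik_j -2\hat\varphi\, q_{ij}+2k_{(i}\hat z_{j)}+\hat s_{ij}$ with the usual transversality conditions. Plugging this into the reduced form of $t^{(0)}_{00}$ and using Parseval, the longitudinal scalar $\hat\sigma$ and the vector $\hat z_i$ contributions cancel (as expected for gauge modes), and one is left with a manifestly positive tensor piece $\int k^2 |\hat s_{ij}|^2$ minus a scalar piece proportional to $E_\varphi \sim \int k^2 |\hat\varphi|^2$. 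So positivity reduces to showing the Newtonian-like potential energy $E_\varphi$ vanishes in the double limit.

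The main obstacle, and where the weak energy condition on $T_{ab}$ enters, is controlling $E_\varphi$. The strategy I would follow is: recognize that $\varphi$ obeys a Poisson-type equation, obtainable by pulling the Hamiltonian-constraint piece out of Einstein's equation for $g_{ab}(\lambda)$, multiplying by $f_P^\delta$, and absorbing the nonlinear and background-curvature remainders into source terms $\partial_i\omega^i$, $\zeta$, and $U^{ijkl}\partial_i\partial_j\psi_{kl}$ that are either small, of bounded support, or carry explicit factors of $\delta^2$ from Riemannian normal coordinates. Integrating $\varphi\,\Delta\varphi$ against itself then expresses $E_\varphi$ as a sum in which (i) the divergence term vanishes as $\lambda\to 0$ by Schwarz plus $\|\omega^i\|_{L^2}\to 0$, (ii) the $\zeta$ term vanishes once we know $\varphi\to 0$ uniformly, (iii) the $U^{ijkl}$ term vanishes once we take $\delta\to 0$, using the already-established bound on $\|\partial\psi\|_{L^2}$, and (iv) the critical term $\int\varphi\, f_P^\delta T^{00}$ vanishes by the Lemma of this section, which requires precisely $T^{00}\ge 0$ (from the weak energy condition) together with uniform convergence $\varphi\to 0$. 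The real work is therefore establishing that $\varphi(\lambda)\to 0$ uniformly on compact sets. I would do this by splitting $\varphi=-\tfrac14\psi^i{}_i+\tfrac14\chi$, noting $\psi^i{}_i\to 0$ uniformly, and estimating $\chi$ from the explicit Green's function representation $\chi(x)=-\frac{1}{4\pi}\int \partial_i\partial_j\psi^{ij}(x')/|x-x'|$. After one integration by parts (to move a derivative off $\psi$) and splitting the $y$-integral at a radius $r_0$ with a further integration by parts on the exterior region, one gets a pointwise bound of the schematic form $|\chi|\le c_1 F_2 r_0 + c_2 F_1 + c_3 F_1 |\ln(C/r_0)|$, where $F_1=\sup|\psi|\to 0$ and $F_2=\sup|\partial\psi|$ is bounded. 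Choosing $r_0(\lambda)=\exp[-1/\sqrt{F_1(\lambda)}]$ sends all three terms to zero, giving uniform convergence and completing the proof.
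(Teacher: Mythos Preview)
Your proposal is correct and follows essentially the same route as the paper's proof: the same reduction via $8\pi t^{(0)}_{ab}=\alpha_a{}^c{}_b{}^d{}_{cd}$ and the constraint $\alpha_a{}^c{}_{bcef}=0$, the same localization $\psi_{ab}=f_P^\delta h_{ab}$ with the iterated-limit identity, the same Fourier SVT decomposition leaving a positive tensor part minus $E_\varphi$, and the same strategy for killing $E_\varphi$ via the Poisson equation, the Lemma (using the weak energy condition on $T^{00}$), and the uniform-convergence argument for $\varphi$ based on the split $4\varphi=-\psi^i{}_i+\chi$ with the $r_0(\lambda)=\exp[-1/\sqrt{F_1(\lambda)}]$ estimate for $\chi$. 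Every key step you list matches the paper's argument in both structure and detail.
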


\bigskip

It should be emphasized that all of the results of this section apply
to an arbitrary one-parameter family $g_{ab}(\lambda)$ satisfying
assumptions (i)--(iv). In particular, no symmetry or other assumptions
concerning the background metric, $g^{(0)}_{ab}$, were made. However,
if FLRW symmetry is assumed for $g^{(0)}_{ab}$ as well as for the
(weak limit of) the matter stress-energy tensor, $T^{(0)}_{ab}$, then
$t^{(0)}_{ab}$ must also have this symmetry. It then follows
immediately from Theorems 1 and 2 that $t^{(0)}_{ab}$ must have the
form of a perfect fluid with $P=\frac{1}{3}\rho$ and $\rho\ge0$. In
particular, the effective stress-energy tensor arising from nonlinear
terms in Einstein's equation associated with short-wavelength
inhomogeneities cannot produce effects similar to those of dark
energy.

\section{Perturbation Theory}

In the previous section, we obtained the equation satisfied by the
background metric, $g^{(0)}_{ab}$, derived key properties of the
effective stress-energy tensor $t^{(0)}_{ab}$, and thereby proved that
small scale inhomogeneities cannot mimic the effects of dark energy on
large scale dynamics. However, in cosmology and other contexts, we
wish to know not only the dynamical behavior of $g^{(0)}_{ab}$ but
also the dynamical behavior of the deviations from $g^{(0)}_{ab}$, as
this is needed to describe the formation and growth of structures in
the universe. In particular, we would like to obtain the equations
satisfied by $h_{ab}(\lambda)$ to sufficient accuracy that
$h_{ab}(\lambda)$ can be determined to first order in $\lambda$, i.e.,
any deviations from an exact solution (over a compact spacetime
region) go to zero faster than $\lambda$ as $\lambda \rightarrow
0$. As already mentioned in the introduction, if we were in the
context of ordinary perturbation theory where $g_{ab}(\lambda,x)$ is
jointly differentiable in $\lambda$ and $x$, we would define
\begin{equation}\label{eqn:ordinarygamma}
  \gamma^{(1)}_{ab}\equiv\left.\frac{\partial g_{ab}(\lambda)}{\partial \lambda}\right|_{\lambda=0}=\lim_{\lambda\to0}\frac{g_{ab}(\lambda)-g_{ab}^{(0)}}{\lambda}\,.
\end{equation}
To derive the equation satisfied by $\gamma^{(1)}_{ab}$, we
differentiate the Einstein equation with respect to $\lambda$, at
$\lambda=0$.  The result is an equation that sets the linearized
Einstein operator acting on $\gamma^{(1)}_{ab}$ equal to the
derivative of the stress energy tensor with respect to $\lambda$,
evaluated at $\lambda=0$.  We would then take $h_{ab}(\lambda) =
\lambda \gamma^{(1)}_{ab}$. However, in the context of our framework,
$g_{ab}(\lambda,x)$ is not differentiable in $\lambda$ at $\lambda
=0$, so we cannot even define a notion of a ``metric perturbation'' by
differentiating $g_{ab}(\lambda,x)$.

Of course, the (exact) equation satisfied by $h_{ab}(\lambda)$ is
simply the equation obtained by substituting $g_{ab} (\lambda) =
g^{(0)}_{ab} + h_{ab} (\lambda)$ into Einstein's equation
\eqref{eqn:EinsteinRelation}. However, this is not any more useful in
practice than simply asserting that $g_{ab}(\lambda)$ must be a
solution of Einstein's equation for all $\lambda$; if we could solve
Einstein's equation exactly, there would be no need to develop a
perturbative formalism.  The key idea needed to obtain a more useful
version of Einstein's equation is that, although nonlinearities may be
important on small scales, there should be a simpler, linear
description on large scales. The key idea needed to implement this
description is the observation that although the ordinary (pointwise
or uniform) limit of $[g_{ab}(\lambda)-g_{ab}^{(0)}]/\lambda$ does not
exist in the context of our framework---$g_{ab}(\lambda)$ is not
differentiable---there is no reason why the {\em weak limit} of this
quantity cannot exist.

Thus, a natural generalization of the conventional linearized metric
perturbation is
\begin{equation}
  \gamma_{ab}^{(L)}\equiv\wlim_{\lambda\to0}\frac{g_{ab}(\lambda)-g_{ab}^{(0)}}{\lambda}\,.
\end{equation}
Here we have replaced the ordinary limit of \eqref{eqn:ordinarygamma}
with a weak limit, which ``averages away'' the small scale
inhomogeneities.  This quantity thereby corresponds closely to the
notion of the ``long wavelength part'' of the metric perturbation that
appears in other analyses (see, e.g., \cite{Baumann:2010tm}). We will
discuss this further in section V below.  The remainder of the
perturbation will be denoted
\begin{equation}
h^{(S)}_{ab}(\lambda)\equiv h_{ab}(\lambda)-\lambda\gamma^{(L)}_{ab}\,,
\end{equation}
and will be referred to as the ``short wavelength part'' of the
deviation\footnote{If we were to consider higher order perturbation
  theory, then we would also subtract from $h_{ab}$ higher order in
  $\lambda$ ``long wavelength'' contributions to define
  $h^{(S)}_{ab}(\lambda)$. However, we shall only be concerned with
  first order perturbation theory in this paper.} of the metric from
$g^{(0)}_{ab}$. In subsection IVB, we will argue that, under suitable
Newtonian assumptions in cosmology, to leading order in $\lambda$,
$h^{(S)}_{ab}$ depends only {\em locally} on the matter distribution
and is well approximated by a Newtonian gravity solution.  We
emphasize that, within our framework, short and long wavelength
perturbations have a very different character.  The long wavelength
part of a perturbation has a well-defined description in the
$\lambda\to 0$ limit, namely $\gamma^{(L)}_{ab}$.  On the other hand,
the short wavelength part, $h^{(S)}_{ab}(\lambda)$, is defined only
for $\lambda>0$ and has no description in terms of a limit as $\lambda
\rightarrow 0$.

We can obtain an equation for $\gamma_{ab}^{(L)}$ by taking the
difference of the exact Einstein equation \eqref{eqn:EinsteinRelation}
for $g_{ab}(\lambda)$ and the background Einstein equation
\eqref{Eeg0} for $g_{ab}^{(0)}$, dividing by $\lambda$, and taking the
weak limit as $\lambda \rightarrow 0$. However, in order to ensure
that $\gamma_{ab}^{(L)}$ is well defined and satisfies a well-defined
equation, we must append to assumptions (i)--(iv) of section II the
following additional assumptions on our one-parameter family
$g_{ab}(\lambda)$:
\renewcommand{\labelenumi}{(\roman{enumi})}
\begin{enumerate}
\setcounter{enumi}{4}
\item{There exist smooth tensor fields $\gamma^{(L)}_{ab}$,
    $\mu^{(1)}_{abcdef}$, $\nu^{(1)}_{abcde}$ and $\omega^{(1)}_{abcdefgh}$
    on $M$ such that
    \begin{enumerate}
    \item{
      \begin{equation}
        \wlim_{\lambda\to0}\frac{1}{\lambda}h_{ab}(\lambda)=\gamma^{(L)}_{ab}\,,
      \end{equation}}
    \item{
      \begin{equation}
        \wlim_{\lambda\to0}\frac{1}{\lambda}\left[\nabla_{(a}h^{(S)}_{|cd|}(\lambda)\nabla_{b)}h^{(S)}_{ef}(\lambda)-\mu_{abcdef}\right]=\mu^{(1)}_{abcdef}\,,
      \end{equation}}
    \item{
      \begin{equation}
        \wlim_{\lambda\to0}\frac{1}{\lambda}\left[h^{(S)}_{bc}(\lambda)\nabla_ah^{(S)}_{de}(\lambda)\right]=\nu^{(1)}_{abcde}\,,
      \end{equation}}
    \item{
      \begin{equation}
        \wlim_{\lambda\to0}\frac{1}{\lambda}\left[h^{(S)}_{cd}(\lambda)\nabla_ah^{(S)}_{ef}(\lambda)\nabla_bh^{(S)}_{gh}(\lambda)\right]=\omega^{(1)}_{abcdefgh}\,.
      \end{equation}}
    \end{enumerate}}
\end{enumerate}
\renewcommand{\labelenumi}{\theenumi} In the following, we shall
assume that our one-parameter family $g_{ab}(\lambda)$ satisfies
assumptions (i)--(iv) of section II together with assumption (v)
above. In this section we will make no additional assumptions about
$g_{ab}(\lambda)$, so all of the results obtained in this section
should hold, e.g., for self-gravitating gravitational radiation in a
background without any symmetries. In section IV, we shall specialize
to the case of Newtonian-like cosmological perturbations off of a
background metric with FLRW symmetry, and will make numerous
additional assumptions and simplifications.

The newly defined first order backreaction tensors,
$\mu^{(1)}_{abcdef}$, $\nu^{(1)}_{abcde}$ and
$\omega^{(1)}_{abcdefgh}$, possess certain tensor symmetries as a
direct consequence of their definitions.  Clearly, as with the zeroth
order quantity $\mu_{abcdef}$, we have
$\mu^{(1)}_{(ab)(cd)(ef)}=\mu^{(1)}_{abcdef}$ and
$\mu^{(1)}_{abcdef}=\mu^{(1)}_{baefcd}$.  However, in contrast with
$\mu_{abcdef}$, the symmetry under interchange of the first two
indices had to be built directly into the definition of
$\mu^{(1)}_{abcdef}$, rather than derived.  Indeed,
\begin{eqnarray}
  && \wlim_{\lambda\to0}\frac{1}{\lambda}\left[\nabla_ah_{cd}(\lambda)\nabla_bh_{ef}(\lambda)-\mu_{abcdef}\right]\nonumber\\
  &=&\mu^{(1)}_{abcdef}+\wlim_{\lambda\to0}\frac{1}{\lambda}\nabla_{[a}h_{|cd|}(\lambda)\nabla_{b]}h_{ef}(\lambda)\nonumber\\
  &=&\mu^{(1)}_{abcdef}+\wlim_{\lambda\to0}\frac{1}{\lambda}\left[\nabla_{[a}\left(h_{|cd|}(\lambda)\nabla_{b]}h_{ef}(\lambda)\right)-h_{cd}(\lambda)\nabla_{[a}\nabla_{b]}h_{ef}(\lambda)\right]\nonumber\\
  &=&\mu^{(1)}_{abcdef}+\nabla_{[a}\nu^{(1)}_{b]cdef}-2\wlim_{\lambda\to0}\frac{1}{\lambda}h_{cd}(\lambda)R_{ab(e}^{\phantom{ab(e}g}h_{f)g}(\lambda)\nonumber\\
  &=&\mu^{(1)}_{abcdef}+\nabla_{[a}\nu^{(1)}_{b]cdef}\,.
\end{eqnarray}
What this calculation also illustrates is that, because of the factor
of $1/\lambda$, we may no longer freely drop total derivative terms
when taking weak limits, and in general we will pick up terms of the
form $\nabla_a\nu^{(1)}_{bcdef}$.  It follows also that
$\nu^{(1)}_{a(bc)(de)}=\nu^{(1)}_{abcde}$ and
$\nu^{(1)}_{abcde}=-\nu^{(1)}_{adebc}$.  Finally,
$\omega^{(1)}_{ab(cd)(ef)(gh)}=\omega^{(1)}_{abcdefgh}$,
$\omega^{(1)}_{abcdefgh}=\omega^{(1)}_{bacdghef}$, and
\begin{equation}\label{eqn:omega-symmetry}
  \omega^{(1)}_{abcdefgh}-\omega^{(1)}_{bacdefgh}=\omega^{(1)}_{baefcdgh}-\omega^{(1)}_{abefcdgh}\,.
\end{equation}
We also note that definitions (b) and (c) would be unchanged if
$h^{(S)}_{ab}$ were replaced by $h_{ab}$, and (d) would be unchanged
if any $h^{(S)}_{ab}$ which is being differentiated were replaced by
$h_{ab}$.

We now subtract the background Einstein equation \eqref{Eeg0} from the
exact Einstein equation \eqref{eqn:EinsteinRelation}, divide by
$\lambda$, and then take the weak limit as $\lambda\to0$.  A very
lengthy calculation, performed with the help of the {\em xAct} tensor
manipulation package \cite{martin-garcia:2008,*martin-garcia:web} for
{\em Mathematica}, yields
\begin{eqnarray}\label{eqn:gammaL-1}
  &&\nabla^c\nabla_{(a}\gamma^{(L)}_{b)c}-\frac{1}{2}\nabla^c\nabla_c\gamma^{(L)}_{ab}-\frac{1}{2}\nabla_a\nabla_b\gamma^{(L)c}_{\phantom{(L)c}c}-\frac{1}{2}g^{(0)}_{ab}\left(\nabla^c\nabla^d\gamma^{(L)}_{cd}-\nabla^c\nabla_c\gamma^{(L)d}_{\phantom{(L)d}d}\right)\nonumber\\
  &&+\frac{1}{2}g^{(0)}_{ab}R^{cd}(g^{(0)})\gamma^{(L)}_{cd}-\frac{1}{2}R(g^{(0)})\gamma^{(L)}_{ab}+\Lambda\gamma^{(L)}_{ab}+\frac{1}{8}\gamma^{(L)}_{ab}\left(\mu^{c\phantom{c}d\phantom{d}e}_{\phantom{c}c\phantom{d}d\phantom{e}e}+\mu^{c\phantom{c}de}_{\phantom{c}c\phantom{de}de}-2\mu^{cd\phantom{c}e}_{\phantom{cd}c\phantom{e}de}\right)\nonumber\\
  &&+\gamma^{(L)cd}\left(\frac{1}{2}\mu_{abc\phantom{e}de}^{\phantom{abc}e}-\frac{1}{2}\mu_{c(ab)d\phantom{e}e}^{\phantom{c(ab)d}e}-\frac{1}{2}\mu_{\phantom{e}(ab)ecd}^e+\frac{3}{4}\mu_{cdab\phantom{e}e}^{\phantom{cdab}e}-\frac{1}{2}\mu_{cda\phantom{e}be}^{\phantom{cda}e}-\mu_{c\phantom{e}abde}^{\phantom{c}e}+\mu_{c\phantom{e}e(ab)d}^{\phantom{c}e}\right.\nonumber\\
  &&\left.+\frac{3}{4}\mu^e_{\phantom{e}eabcd}-\frac{1}{2}\mu^e_{\phantom{e}eacbd}+\frac{1}{8}g^{(0)}_{ab}\left\{-\mu_{cd\phantom{e}e\phantom{f}f}^{\phantom{cd}e\phantom{e}f}-\mu_{cd\phantom{ef}ef}^{\phantom{cd}ef}+4\mu_{c\phantom{e}d\phantom{f}ef}^{\phantom{c}e\phantom{d}f}-2\mu_{\phantom{e}ecd\phantom{f}f}^{e\phantom{ecd}f}+2\mu^{ef}_{\phantom{ef}cedf}\right\}\right)\nonumber\\
  &=&8\pi T^{(1)}_{ab} +\frac{1}{8}g^{(0)}_{ab}\left\{-\mu^{(1)}{}^{c\phantom{c}de}_{\phantom{c}c\phantom{de}de}-\mu^{(1)}{}^{c\phantom{c}d\phantom{d}e}_{\phantom{c}c\phantom{d}d\phantom{e}e}+2\mu^{(1)}{}^{cd\phantom{c}e}_{\phantom{cd}c\phantom{e}de}\right\}+\frac{1}{2}\mu^{(1)}{}^{cd}_{\phantom{cd}acbd}\nonumber\\
  &&-\frac{1}{2}\mu^{(1)}{}^{c\phantom{ca}d}_{\phantom{c}ca\phantom{d}bd}+\frac{1}{4}\mu^{(1)}{}^{\phantom{ab}cd}_{\phantom{}ab\phantom{cd}cd}-\frac{1}{2}\mu^{(1)}{}^{c\phantom{(ab)c}d}_{\phantom{c}(ab)c\phantom{d}d}+\frac{3}{4}\mu^{(1)}{}^{c\phantom{cab}d}_{\phantom{c}cab\phantom{d}d}-\frac{1}{2}\mu^{(1)}{}^{cd}_{\phantom{cd}abcd}\nonumber\\
  &&+\frac{1}{8}g^{(0)}_{ab}\left\{2\omega^{(1)}{}^{c\phantom{c}de\phantom{de}f}_{\phantom{c}c\phantom{de}de\phantom{f}f}+2\omega^{(1)}{}^{c\phantom{c}de\phantom{d}f}_{\phantom{c}c\phantom{de}d\phantom{f}ef}+\omega^{(1)}{}^{cd\phantom{cd}e\phantom{e}f}_{\phantom{cd}cd\phantom{e}e\phantom{f}f}+\omega^{(1)}{}^{cd\phantom{cd}ef}_{\phantom{cd}cd\phantom{ef}ef}-4\omega^{(1)}{}^{cd\phantom{c}e\phantom{d}f}_{\phantom{cd}c\phantom{e}d\phantom{f}ef}-2\omega^{(1)}{}^{cdef}_{\phantom{cdef}decf}\right\}\nonumber\\
  &&-\frac{1}{2}\omega^{(1)}{}^{\phantom{ab}cd\phantom{c}e}_{\phantom{}ab\phantom{cd}c\phantom{e}de}+\frac{1}{2}\omega^{(1)}{}^{\phantom{(a}c\phantom{|c|}d\phantom{b)d}e}_{(a\phantom{c}|c|\phantom{d}b)d\phantom{e}e}+\frac{1}{2}\omega^{(1)}{}^{\phantom{(a}cde}_{(a\phantom{cde}b)cde}-\frac{1}{8}\omega^{(1)}{}^{c\phantom{cab}d\phantom{d}e}_{\phantom{c}cab\phantom{d}d\phantom{e}e}-\frac{1}{8}\omega^{(1)}{}^{c\phantom{cab}de}_{\phantom{c}cab\phantom{de}de}-\frac{3}{4}\omega^{(1)}{}^{c\phantom{c}de}_{\phantom{c}c\phantom{de}abde}\nonumber\\
  &&+\frac{1}{2}\omega^{(1)}{}^{c\phantom{c}de}_{\phantom{c}c\phantom{de}adbe}+\frac{1}{4}\omega^{(1)}{}^{cd\phantom{abd}e}_{\phantom{cd}abd\phantom{e}ce}-\frac{3}{4}\omega^{(1)}{}^{cd\phantom{cdab}e}_{\phantom{cd}cdab\phantom{e}e}+\frac{1}{2}\omega^{(1)}{}^{cd\phantom{cda}e}_{\phantom{cd}cda\phantom{e}be}+\frac{1}{2}\omega^{(1)}{}^{cd\phantom{c}e}_{\phantom{cd}c\phantom{e}abde}-\frac{1}{2}\omega^{(1)}{}^{cd\phantom{c}e}_{\phantom{cd}c\phantom{e}adbe}\nonumber\\
  &&+\frac{1}{2}\omega^{(1)}{}^{cd\phantom{d}e}_{\phantom{cd}d\phantom{e}abce}-\frac{1}{2}\omega^{(1)}{}^{cd\phantom{d}e}_{\phantom{cd}d\phantom{e}aebc}+\frac{1}{4}g^{(0)}_{ab}\left\{2\nabla_d\nu^{(1)}{}^{c\phantom{c}de}_{\phantom{c}c\phantom{de}e}+\nabla_e\nu^{(1)}{}^{c\phantom{c}d\phantom{d}e}_{\phantom{c}c\phantom{d}d}\right\}-\frac{1}{4}\nabla_{(a}\nu^{(1)}{}^{c\phantom{b)c}d}_{\phantom{c}b)c\phantom{d}d}\nonumber\\
  &&+\frac{1}{4}\nabla_c\nu^{(1)}{}_{(ab)\phantom{cd}d}^{\phantom{(ab)}cd}-\frac{1}{2}\nabla_c\nu^{(1)}{}^{c\phantom{ab}d}_{\phantom{c}ab\phantom{d}d}-\nabla_d\nu^{(1)}{}_{(ab)\phantom{c}c}^{\phantom{(ab)}c\phantom{c}d}+\nabla_d\nu^{(1)}{}^{c\phantom{abc}d}_{\phantom{c}abc}+\frac{1}{2}\nabla_d\nu^{(1)}{}^{c\phantom{c(ab)}d}_{\phantom{c}c(ab)}\,.
\end{eqnarray}
Here, we have written
\begin{equation}
  T^{(1)}_{ab}\equiv\wlim_{\lambda\to0}\frac{T_{ab}(\lambda)-T_{ab}^{(0)}}{\lambda} \,.
\label{T1}
\end{equation}
This weak limit exists by virtue of assumption (v) and the fact that 
$g_{ab}(\lambda)$ satisfies Einstein's equation.

Our equation \eqref{eqn:gammaL-1} for $\gamma^{(L)}_{ab}$ takes the
form of a modified linearized Einstein equation. The terms on the left
side are linear in $\gamma^{(L)}_{ab}$ and, in addition to the usual
terms appearing in the linearized Einstein tensor, contain terms
proportional to $\mu_{abcdef}$.  The right side contains, in addition
to the matter source $T^{(1)}_{ab}$, numerous ``effective source
terms'' arising from $\mu^{(1)}_{abcdef}$, $\nu^{(1)}_{abcde}$ and
$\omega^{(1)}_{abcdefgh}$.

Some relations between $\mu^{(1)}_{abcdef}$, $\nu^{(1)}_{abcde}$ and
$\omega^{(1)}_{abcdefgh}$ can be derived from Einstein's equation. We
previously derived the relation
$\alpha_{aeb\phantom{e}cd}^{\phantom{aeb}e}=0$ (see \eqref{aht2}) by
starting with Einstein's equation in ``Ricci form''
\eqref{eqn:RicciRelation2}, multiplying it by $h_{ef}(\lambda)$ and
taking the weak limit as $\lambda\to0$.  In a similar manner, if we
multiply \eqref{eqn:RicciRelation2} by
$h^{(S)}_{cd}(\lambda)h^{(S)}_{ef}(\lambda)/\lambda$, we obtain the
following equation satisfied by $\omega^{(1)}_{abcdefgh}$:
\begin{eqnarray}\label{eqn:omega1}
  \omega^{(1)}{}^g_{\phantom{g}acdefbg}+\omega^{(1)}{}^g_{\phantom{g}bcdefag}-\omega^{(1)}{}^g_{\phantom{g}gcdefab}-\omega^{(1)}{}^{\phantom{abcdef}g}_{abcdef\phantom{g}g}&&\nonumber\\
  +\omega^{(1)}{}^g_{\phantom{g}aefcdbg}+\omega^{(1)}{}^g_{\phantom{g}befcdag}-\omega^{(1)}{}^g_{\phantom{g}gefcdab}-\omega^{(1)}{}_{abefcd\phantom{g}g}^{\phantom{abefcd}g}&=&0\,.
\end{eqnarray}
In deriving this, we used the fact that
\begin{equation}
  \wlim_{\lambda\to0}\frac{1}{\lambda}h^{(S)}_{ab}(\lambda)h^{(S)}_{cd}(\lambda)T_{ef}(\lambda)=0\,,
\end{equation}
which follows from our lemma from section 2 and the assumption that
$T_{ab}(\lambda)$ satisfies the weak energy condition, in the same way
that we showed $\kappa_{abcd}=0$ (see \eqref{rhott2}).

Similarly, if we subtract the background Einstein equation in ``Ricci
form'' from \eqref{eqn:RicciRelation2}, multiply by
$h^{(S)}_{cd}(\lambda)/\lambda$, and take the $\lambda\to0$ weak
limit, we can derive an analog of \eqref{aht2} satisfied by
$\mu^{(1)}_{abcdef}$,
\begin{eqnarray}\label{eqn:mu1}
  &&\frac{1}{2}\mu^{(1)}{}_{abcd\phantom{e}e}^{\phantom{abcd}e}+\frac{1}{2}\mu^{(1)}{}^e_{\phantom{e}cdab}-\mu^{(1)}{}^e_{\phantom{e}(a|cd|b)e}\nonumber\\
  &=&8\pi\kappa^{(1)}{}_{cdab}-4\pi g^{(0)}_{ab}\kappa^{(1)}{}_{cd\phantom{e}e}^{\phantom{cd}e}+\frac{1}{2}\gamma^{(L)}{}^{ef}\left(\mu_{abefcd}+\mu_{efabcd}-2\mu_{e(ab)fcd}\right)+\frac{1}{4}\omega^{(1)}{}_{abcd\phantom{ef}ef}^{\phantom{abcd}ef}\nonumber\\
  &&+\frac{1}{2}\omega^{(1)}{}_{ab\phantom{ef}cdef}^{\phantom{ab}ef}-\frac{1}{2}\omega^{(1)}{}^{\phantom{(a}e\phantom{|cd|b)e}f}_{(a\phantom{e}|cd|b)e\phantom{f}f}-\omega^{(1)}{}^{\phantom{(a}e\phantom{|e|}f}_{(a\phantom{e}|e|\phantom{f}b)fcd}+\frac{1}{4}\omega^{(1)}{}^{e\phantom{ecdab}f}_{\phantom{e}ecdab\phantom{f}f}-\frac{1}{2}\omega^{(1)}{}^{e\phantom{ecda}f}_{\phantom{e}ecda\phantom{f}bf}+\frac{1}{2}\omega^{(1)}{}^{ef}_{\phantom{ef}cdafbe}\nonumber\\
  &&+\frac{1}{2}\omega^{(1)}{}^{ef}_{\phantom{ef}efabcd}+\frac{1}{2}\nabla_{(a}\nu^{(1)}{}^{\phantom{b)cd}e}_{b)cd\phantom{e}e}+\frac{1}{2}\nabla_{(a}\nu^{(1)}{}^e_{\phantom{e}b)ecd}+\frac{1}{2}\nabla_e\nu^{(1)}{}^{\phantom{(ab)}e}_{(ab)\phantom{e}cd}-\frac{1}{2}\nabla_e\nu^{(1)}{}^e_{\phantom{e}abcd}\,.
\end{eqnarray}
In this equation we have defined the quantity
\begin{equation}
  \kappa^{(1)}_{abcd}=\wlim_{\lambda\to0}\frac{1}{\lambda}h_{ab}^{(S)}(\lambda)T_{cd}(\lambda)\,.
\end{equation}
The existence of this limit is guaranteed by Einstein's equation
together with our other assumptions. Note that since
$h^{(S)}_{ab}(\lambda)/\lambda$ does not converge uniformly to zero as
$\lambda\to0$ (although it is uniformly bounded), the weak energy
condition does not imply the vanishing of
$\kappa^{(1)}_{abcd}$ as it did for $\kappa_{abcd}$.

We can simplify the above equations as follows. As with the background
case, define $\alpha^{(1)}_{abcdef}=\mu^{(1)}_{[c|[ab]|d]ef}$ and
$\beta^{(1)}_{abcdef}=\mu^{(1)}_{(abcd)ef}$, and express the equations
in terms of these quantities.  A similar breakup is also possible for
$\omega^{(1)}_{abcdefgh}$.  First, split it into the parts with are
symmetric and antisymmetric in the first two indices,
$\omega^{(1,S)}_{abcdefgh}=\omega^{(1)}_{(ab)cdefgh}$ and
$\omega^{(1,A)}_{abcdefgh}=\omega^{(1)}_{[ab]cdefgh}$.  By
\eqref{eqn:omega-symmetry}, we have
\begin{equation}
  \omega^{(1,A)}_{abcdefgh}=-\omega^{(1,A)}_{abefcdgh}\,.
\end{equation}
The symmetric part is then decomposed into
$\omega^{(1,\alpha)}_{abcdefgh}=\omega^{(1,S)}_{[e|[a|cd|b]|f]gh}$ and
$\omega^{(1,\beta)}_{abcdefgh}=\omega^{(1,S)}_{(ab|cd|ef)gh}$, with
inverse transformation
\begin{eqnarray}
  \omega^{(1,S)}_{abcdefgh}&=&-\frac{4}{3}\left(\omega^{(1,\alpha)}_{e(a|cd|b)fgh}+\omega^{(1,\alpha)}_{g(a|cd|b)hef}-\omega^{(1,\alpha)}_{g(e|cd|f)hab}\right)\nonumber\\
  &&+\omega^{(1,\beta)}_{abcdefgh}+\omega^{(1,\beta)}_{abcdghef}-\omega^{(1,\beta)}_{efcdghab}\,.
\end{eqnarray}
Substituting for our new quantities, \eqref{eqn:omega1} can be
rewritten as
\begin{equation}\label{eqn:omegaalpha}
  \omega^{(1,\alpha)}{}^{\phantom{a}g}_{a\phantom{g}cdbgef}+\omega^{(1,\alpha)}{}^{\phantom{a}g}_{a\phantom{g}efbgcd}=0\,,
\end{equation}
whereas \eqref{eqn:mu1} becomes
\begin{eqnarray}\label{eqn:alpha1}
  \alpha^{(1)}{}^{\phantom{a}e}_{a\phantom{e}becd}&=&4\pi\kappa^{(1)}{}_{cdab}-2\pi g^{(0)}_{ab}\kappa^{(1)}{}^{\phantom{cd}e}_{cd\phantom{e}e}+\alpha_{aebfcd}\gamma^{(L)}{}^{ef}+\frac{1}{4}\omega^{(1,A)}{}^{\phantom{(a}e\phantom{b)ecd}f}_{(a\phantom{e}b)ecd\phantom{f}f}-\frac{1}{2}\omega^{(1,A)}{}^{\phantom{(a}e\phantom{b)}f}_{(a\phantom{e}b)\phantom{f}cdef}\nonumber\\
  &&-\frac{1}{4}\omega^{(1,A)}{}^{ef}_{\phantom{ef}aebfcd}+\frac{1}{2}\omega^{(1,\alpha)}{}^{\phantom{a}e\phantom{cdb}f}_{a\phantom{e}cdb\phantom{f}ef}+\omega^{(1,\alpha)}{}^{\phantom{(a}e\phantom{|cde|}f}_{(a\phantom{e}|cde|\phantom{f}b)f}+\omega^{(1,\alpha)}{}^{\phantom{a}e\phantom{e}f}_{a\phantom{e}e\phantom{f}bfdc}-\frac{1}{4}\omega^{(1,\alpha)}{}^{ef}_{\phantom{ef}baefdc}\nonumber\\
  &&+\frac{1}{4}\nabla_{(a}\nu^{(1)}{}_{b)cd\phantom{e}e}^{\phantom{b)cd}e}+\frac{1}{4}\nabla_{(a}\nu^{(1)}{}^e_{\phantom{e}b)ecd}+\frac{1}{4}\nabla_e\nu^{(1)}{}_{(ab)\phantom{e}cd}^{\phantom{(ab)}e}-\frac{1}{4}\nabla_e\nu^{(1)}{}^e_{\phantom{e}abcd}\,.
\end{eqnarray}
Finally we can use \eqref{eqn:omegaalpha} and \eqref{eqn:alpha1} to
simplify our version of the linearized Einstein equation,
\eqref{eqn:gammaL-1}, resulting in
\begin{eqnarray}\label{eqn:gammaL}
  &&\nabla^c\nabla_{(a}\gamma^{(L)}_{b)c}-\frac{1}{2}\nabla^c\nabla_c\gamma^{(L)}_{ab}-\frac{1}{2}\nabla_a\nabla_b\gamma^{(L)c}_{\phantom{(L)c}c}-\frac{1}{2}g^{(0)}_{ab}\left(\nabla^c\nabla^d\gamma^{(L)}_{cd}-\nabla^c\nabla_c\gamma^{(L)d}_{\phantom{(L)d}d}\right)\nonumber\\
  &&+\frac{1}{2}g^{(0)}_{ab}R^{cd}(g^{(0)})\gamma^{(L)}_{cd}-\frac{1}{2}R(g^{(0)})\gamma^{(L)}_{ab}+\Lambda\gamma^{(L)}_{ab}+2\gamma^{(L)}{}^{cd}\alpha_{(a\phantom{e}b)cde}^{\phantom{(a}e}\nonumber\\
  &=&8\pi T^{(1)}_{ab}+\alpha^{(1)}{}_{a\phantom{c}b\phantom{d}cd}^{\phantom{a}c\phantom{b}d}-2\pi\kappa^{(1)}{}_{ab\phantom{c}c}^{\phantom{ab}c}-8\pi\kappa^{(1)}{}_{(a\phantom{c}b)c}^{\phantom{(a}c}+2\pi g^{(0)}_{ab}\left\{\kappa^{(1)}{}^{c\phantom{c}d}_{\phantom{c}c\phantom{d}d}-\kappa^{(1)}{}^{cd}_{\phantom{cd}cd}\right\}\nonumber\\
  &&-\frac{1}{4}\omega^{(1,A)}{}_{(a\phantom{c}b)\phantom{d}cd\phantom{e}e}^{\phantom{(a}c\phantom{b)}d\phantom{cd}e}+\frac{1}{2}\omega^{(1,A)}{}_{(a\phantom{c}b)\phantom{d}c\phantom{e}de}^{\phantom{(a}c\phantom{b)}d\phantom{c}e}+\frac{1}{8}\omega^{(1,A)}{}^{cd\phantom{abc}e}_{\phantom{cd}abc\phantom{e}de}+\frac{1}{4}\omega^{(1,A)}{}^{cd\phantom{acbd}e}_{\phantom{cd}acbd\phantom{e}e}\nonumber\\
  &&+\frac{1}{16}g^{(0)}_{ab}\left\{\omega^{(1,A)}{}^{cd\phantom{c}e\phantom{de}f}_{\phantom{cd}c\phantom{e}de\phantom{f}f}+2\omega^{(1,A)}{}^{cd\phantom{c}e\phantom{d}f}_{\phantom{cd}c\phantom{e}d\phantom{f}ef}\right\}-2\omega^{(1,\alpha)}{}^{\phantom{(a}ced}_{(a\phantom{ced}b)dce}-\omega^{(1,\alpha)}{}_{(a\phantom{c}b)d\phantom{de}ce}^{\phantom{(a}c\phantom{b)d}de}-\omega^{(1,\alpha)}{}_{(a\phantom{ce}b)\phantom{d}cde}^{\phantom{(a}ce\phantom{b)}d}\nonumber\\
  &&+\frac{1}{8}\omega^{(1,\alpha)}{}^{cd\phantom{abcd}e}_{\phantom{cd}abcd\phantom{e}e}+\frac{1}{4}\omega^{(1,\alpha)}{}^{cd\phantom{bac}e}_{\phantom{cd}bac\phantom{e}de}+\frac{1}{8}g^{(0)}_{ab}\left\{\omega^{(1,\alpha)}{}^{c\phantom{dec}edf}_{\phantom{c}dec\phantom{edf}f}-2\omega^{(1,\alpha)}{}^{c\phantom{dec}efd}_{\phantom{c}dec\phantom{efd}f}\right\}\nonumber\\
  &&-\frac{1}{2}\nabla_{(a}\nu^{(1)}{}^{c\phantom{b)c}d}_{\phantom{c}b)c\phantom{d}d}+\frac{1}{4}\nabla_{(a}\nu^{(1)}{}^{c\phantom{b)}d}_{\phantom{c}b)\phantom{d}cd}+\frac{1}{4}\nabla_c\nu^{(1)}{}^{c\phantom{ab}d}_{\phantom{c}ab\phantom{d}d}-\frac{3}{4}\nabla_d\nu^{(1)}{}^{\phantom{(ab)}c\phantom{c}d}_{(ab)\phantom{c}c}+\frac{1}{4}\nabla_d\nu^{(1)}{}^{c\phantom{abc}d}_{\phantom{c}abc}\nonumber\\
  &&-\frac{1}{2}\nabla_d\nu^{(1)}{}^{c\phantom{(a}d}_{\phantom{c}(a\phantom{d}b)c}+\frac{1}{4}g^{(0)}_{ab}\left\{\nabla_d\nu^{(1)}{}^{c\phantom{c}de}_{\phantom{c}c\phantom{de}e}+\nabla_e\nu^{(1)}{}^{c\phantom{c}d\phantom{d}e}_{\phantom{c}c\phantom{d}d}\right\}\,.
\end{eqnarray}
Equations \eqref{eqn:omegaalpha}, \eqref{eqn:alpha1} and
\eqref{eqn:gammaL} describe the long
wavelength perturbations.  It should be noted that,
just as $\beta_{abcdef}$ was absent from our background equations,
$\beta_{abcdef}$, $\beta^{(1)}_{abcdef}$, and
$\omega^{(1,\beta)}_{abcdefgh}$ are all absent from our perturbation
equations.

Equations \eqref{eqn:omegaalpha}, \eqref{eqn:alpha1} and
\eqref{eqn:gammaL} have been written down in an arbitrary gauge. We
will make a specific choice of gauge in subsection IVA below, but for
now we note that we can apply any one-parameter family of
diffeomorphisms, $\phi_\lambda$, to $g_{ab}(\lambda)$ that preserves
conditions (i)--(v). Burnett \cite{Burnett:1989gp} has analyzed the
properties of gauge transformations associated with one-parameter
families of diffeomorphisms that are not smooth in $\lambda$. Here, we
simply note that any smooth, one-parameter group of diffeomorphisms
$\phi_\lambda$ generates gauge transformations that are easily seen to
preserve conditions (i)--(v). Under such gauge transformations, it is
not difficult to see that $\gamma^{(L)}_{ab} \to \gamma^{(L)}_{ab} +
{\mathcal L}_\xi g^{(0)}_{ab}$, where $\xi^a$ is the vector field that
generates $\phi_\lambda$ and $ {\mathcal L}$ denotes the Lie
derivative.  Thus, $\gamma^{(L)}_{ab}$ has the same gauge freedom
arising from smooth $\phi_\lambda$ as in ordinary linearized
perturbation theory. This freedom can be used to impose the same types
of gauge conditions on $\gamma^{(L)}_{ab}$ as in ordinary linearized
perturbation theory. It is also not difficult to see that
$h^{(S)}_{ab}(\lambda) \to \phi^*_\lambda h^{(S)}_{ab}(\lambda) +
j_{ab}(\lambda)$, where $j_{ab}(\lambda) = O(\lambda^2)$ and is
jointly smooth in $\lambda$ and the spacetime point. By using this
gauge transformation property of $h^{(S)}_{ab}(\lambda)$, it is
possible to show that $\mu_{abcdef}$, $\nu^{(1)}_{abcde}$,
$\omega^{(1)}_{abcdefgh}$, and $ \kappa^{(1)}_{abcd}$ are gauge
invariant under gauge transformations arising from smooth
$\phi_\lambda$, whereas $\mu^{(1)}_{abcdef} \to \mu^{(1)}_{abcdef} +
{\mathcal L}_\xi \mu_{abcdef}$ and $T^{(1)}_{ab} \to T^{(1)}_{ab} +
{\mathcal L}_\xi T^{(0)}_{ab}$.

We turn our attention now to the short wavelength perturbations.
Without making any approximations it is straightforward to write down
an equation satisfied by $h^{(S)}_{ab}(\lambda)$: Simply substitute
$g_{ab}(\lambda)=g^{(0)}_{ab}+\lambda\gamma^{(L)}_{ab}+h^{(S)}_{ab}(\lambda)$
into the exact Einstein equation.  We may write this equation in the
form
\begin{eqnarray}
  G^{(1)}_{ab}(g^{(0)},h^{(S)}(\lambda))+\Lambda h_{ab}^{(S)}(\lambda)&=&8\pi T_{ab}(\lambda)-G_{ab}(g^{(0)})-\Lambda g^{(0)}_{ab}-\lambda G_{ab}^{(1)}(g^{(0)},\gamma^{(L)})\nonumber\\
  &&-\lambda\Lambda\gamma^{(L)}_{ab}-\sum_{n=2}^{\infty}G_{ab}^{(n)}(g^{(0)},\lambda\gamma^{(L)}+h^{(S)}(\lambda))\,,
\label{hSeq}
\end{eqnarray}
where we have grouped linear terms in $h^{(S)}_{ab}(\lambda)$ on the
left hand side. Here,
$G_{ab}^{(n)}(g^{(0)},\lambda\gamma^{(L)}+h^{(S)}(\lambda))$ denotes
the $n$th order Einstein tensor expanded about $g^{(0)}_{ab}$ of the
perturbation $\lambda\gamma^{(L)}_{ab} +h^{(S)}_{ab}(\lambda)$.

Unfortunately, it does not appear possible to simplify \eqref{hSeq} to
obtain suitable approximate solutions without introducing additional
assumptions. Of course, if we do not simplify \eqref{hSeq}, then we
have not made any progress beyond asserting that we must solve
Einstein's equation.  In the next section, we will introduce
additional assumptions relevant to the case of cosmological
perturbations and argue that to obtain an accurate description of the
metric to order $\lambda$, we may replace \eqref{hSeq} by the
equations of Newtonian gravity with local matter sources.

For the remainder of this section, we shall compare our general
analysis to that given by Isaacson
\cite{Isaacson:1967zz,Isaacson:1968zza} and others (see, e.g.,
\cite{MTW}), who were interested in describing the self-gravitating
effects of gravitational radiation. We therefore restrict attention to
the vacuum case ($T_{ab}(\lambda) = 0$). These authors work with the
quantity $h_{ab}(\lambda)$ rather than introducing $\mu_{abcdef}$.
Suppose one is merely interested in determining the background metric
$g^{(0)}_{ab}$, i.e., one is not interested in obtaining an accurate
(to order $\lambda$) description of the deviation of the metric from
$g^{(0)}_{ab}$. Then one would need only to calculate
$h_{ab}(\lambda)$ to sufficient accuracy that one could determine
$\mu_{abcdef}$ and, thereby, $t^{(0)}_{ab}$ (see
\eqref{eqn:BackgroundEinstein-messy}).  In particular, one would not
be interested in computing $\gamma^{(L)}_{ab}$, so one could ignore
the equations we have derived above for
$\gamma^{(L)}_{ab}$. Furthermore, in order to obtain
$h^{(S)}_{ab}(\lambda)$ to sufficient accuracy, it appears plausible
that one could make $O(1)$ modifications to \eqref{hSeq} as $\lambda
\to 0$ and still determine $h^{(S)}_{ab}(\lambda)$ to sufficient
accuracy, provided that these $O(1)$ modifications have vanishing weak
limit. Here, by the phrase ``it appears plausible'' we mean that we
believe it is likely that one could introduce additional reasonable
assumptions on the one-parameter family $g_{ab}(\lambda)$ so that
these modifications to \eqref{hSeq} could be made without affecting
$g^{(0)}_{ab}$.  The reason for this belief is that $O(1)$ error terms
in \eqref{hSeq} should---under suitable further assumptions similar to
ones indicated at the end of subsection IVB below---give rise to
$O(\lambda^2)$ errors in $h^{(S)}_{ab}(\lambda)$, which should not
affect $\mu_{abcdef}$.

A candidate modification of \eqref{hSeq} in the vacuum case would be
to drop the entire right side of this equation (together with the term
$\Lambda h_{ab}^{(S)}(\lambda)$ on the left side), to obtain simply
the linearized Einstein equation for $h_{ab}^{(S)}$ off of
$g^{(0)}_{ab}$,
\begin{equation}
G^{(1)}_{ab}(g^{(0)},h^{(S)}(\lambda)) = 0 \,.
\end{equation}
However, if $G_{ab}(g^{(0)}) \neq 0$, the linearized Einstein equation
off of $g^{(0)}_{ab}$ does not appear to have an initial value
formulation, so this modification of \eqref{hSeq} is probably not
suitable. (Note that the linearized Einstein equation off of a
non-solution is not gauge invariant, so one cannot employ a choice of
gauge to simplify the equation and/or put it in hyperbolic form.) A
better candidate modification would be an equation of the same form
that the linearized Einstein equation would take in the Lorenz gauge
if perturbed off of a vacuum spacetime. Since the linearized Einstein
equation off of a non-vacuum spacetime is inconsistent with the Lorenz
gauge condition, there is no unique choice of such an equation---in
particular, one can add new terms involving the background Ricci
tensor---and, indeed, Isaacson \cite{Isaacson:1967zz,Isaacson:1968zza}
and Misner, Thorne, and Wheeler \cite{MTW} give slightly different
forms of the proposed equation (compare Eq.~(5.12) of
\cite{Isaacson:1967zz} with Eq.~(35.68) of \cite{MTW}). In fact, since
terms involving the product of the background curvature with
$h_{ab}^{(S)}$ are $O(\lambda)$, it would appear simplest to drop all
of these terms and work with the wave equation
\begin{equation}
\nabla^c \nabla_c \bar{h}_{ab}^{(S)} = 0 \,,
\label{hSwave}
\end{equation}
where
$\bar{h}^{(S)}_{ab}=h_{ab}^{(S)}-\frac{1}{2}g^{(0)}_{ab}h^{(S)}{}^c_{\phantom{c}c}$.
As with the equations used in \cite{Isaacson:1967zz} and \cite{MTW},
this equation is inconsistent with the Lorenz gauge condition
$\nabla^a \bar{h}_{ab}^{(S)} = 0$.  However, if one constrains the
initial data for $\bar{h}_{ab}^{(S)}$ for solutions to \eqref{hSwave}
so that $\nabla^a \bar{h}_{ab}^{(S)}$ and its first time derivative
vanish initially, then the Lorenz gauge condition should hold to
$O(\lambda)$ at later times (in compact regions of spacetime). Thus,
the solutions to \eqref{hSwave} with these initial data restrictions
should satisfy Einstein's equation \eqref{hSeq} to the desired $O(1)$
accuracy.  It should also be possible to impose the gauge condition
${h^{(S)a}}_a = 0$ to $O(\lambda)$ accuracy.

If the above arguments are correct, then in order to obtain the
possible background metrics $g_{ab}^{(0)}$, it should suffice to
simultaneously solve \eqref{Eeg0} and \eqref{hSwave} (or equivalently,
Eq.~(5.12) of \cite{Isaacson:1967zz} or Eq.~(35.68) of \cite{MTW})
with appropriate restrictions on initial data, to obtain
$g_{ab}^{(0)}$ and a one-parameter family $h_{ab}^{(S)}(\lambda)$
satisfying our conditions (ii)--(iv), where, in \eqref{Eeg0},
$t_{ab}^{(0)}$ is given by \eqref{eqn:BackgroundEinstein-messy} and
$\mu_{abcdef}$ is given by \eqref{mu} with $h_{ab}(\lambda)$ replaced
by our one-parameter family of solutions to \eqref{hSwave}. One could
then attempt to establish properties of $t_{ab}^{(0)}$---in
particular, the vanishing of its trace and the positivity of its
energy density---by working with the expressions for it in terms of
$h_{ab}^{(S)}(\lambda)$ in a particular gauge. Needless to say,
numerous extremely murky mathematical issues arise if one proceeds in
this manner.  Our analysis of section II, as well as the work of
Burnett \cite{Burnett:1989gp} in the vacuum case, proved rigorous
results about $t_{ab}^{(0)}$ without introducing any approximate
equations satisfied by $h_{ab}^{(S)}(\lambda)$, and thus completely
bypassed these murky issues.

Finally, we note that if one wished to know the deviation of the
metric from $g^{(0)}_{ab}$ to $O(\lambda)$---as would not necessarily
be of interest in studying gravitational radiation but is of
considerable interest in cosmology---then, of course, it would be
necessary to know $\gamma^{(L)}_{ab}$. Although we argued above that
one has considerable freedom in modifying the equation satisfied by
$h_{ab}^{(S)}$ without affecting $g^{(0)}_{ab}$, it is clear that one
cannot make any modification to the equation \eqref{eqn:gammaL}
satisfied by $\gamma^{(L)}_{ab}$ without introducing $O(1)$ errors in
$\gamma^{(L)}_{ab}$ and thus $O(\lambda)$ errors in
$g_{ab}(\lambda)$. Furthermore, in order to calculate the effective
source term
$\alpha^{(1)}{}_{a\phantom{c}b\phantom{d}cd}^{\phantom{a}c\phantom{b}d}$
in \eqref{eqn:gammaL}, it appears that one would need to know
$h_{ab}^{(S)}$ to $O(\lambda^2)$, in which case one could not drop the
quadratic terms in $h_{ab}^{(S)}$ in \eqref{hSeq}. Thus, in the case
of self-gravitating gravitational radiation, if one wished to know the
deviation of the metric from $g^{(0)}_{ab}$ to $O(\lambda)$, one would
have to solve \eqref{Eeg0}, \eqref{eqn:gammaL}, and some suitable
simplification of \eqref{hSeq}. This would comprise an extremely
complicated system.  Fortunately, in the case of cosmology, we will
now argue that, under additional assumptions, significant
simplifications occur.

\section{Cosmological Perturbation Theory}

Up to this point we have not assumed any symmetries or other special
properties of the background metric $g^{(0)}_{ab}$. We also have not
made any restrictions on the matter content, $T_{ab} (\lambda)$, other
than that it satisfy the weak energy condition, nor have we imposed
any restrictions on the perturbations. In this section, we will be
concerned with the case of main interest in cosmology, where
$g^{(0)}_{ab}$ has FLRW symmetry, there is negligible gravitational
radiation content (in particular, $t^{(0)}_{ab}=0$), and the matter
content satisfies suitable Newtonian assumptions. We will argue that,
under these assumptions, to leading order in $\lambda$,
$h^{(S)}_{ab}(\lambda)$ is given by local Newtonian gravity, i.e., in
a neighborhood of any point $x$, $h^{(S)}_{ab}$ can be calculated to
sufficient accuracy using Newtonian gravity, taking into account only
the matter distribution within a suitable neighborhood of $x$. (The
effects of more distant matter are taken into account by
$\gamma^{(L)}_{ab}$.) With our ``no gravitational radiation''
assumption for the background and our Newtonian approximation for
$h^{(S)}_{ab}$, our equation \eqref{eqn:gammaL} for
$\gamma^{(L)}_{ab}$ simplifies considerably, yielding the linearized
Einstein equation with an additional effective source that agrees with
recent results of \cite{Baumann:2010tm} (see also
\cite{Futamase:1996fk}).

In our analysis, it will be important to make a convenient choice of
gauge. Since we are taking nonlinear effects at small scales into
account, we cannot simply impose the usual cosmological gauge choices
for perturbations, i.e., we must make our gauge choice at the
nonlinear level.  When studying non-linear perturbations off of a flat
background it is often convenient to work with ``wave map
coordinates'' (usually called ``harmonic coordinates'' in the
literature), particularly in the context of the post-Newtonian
expansion \cite{lrr-2007-2}.  Since our background metric
$g_{ab}^{(0)}$ is not flat, the usual definition of wave-map
coordinates is not convenient, but we may instead impose a generalized
wave-map gauge condition with respect to the background metric
$g_{ab}^{(0)}$ \cite{Choquet-Bruhat:2009}.

Our gauge choice will be introduced in subsection A in the context of
a general background metric $g_{ab}^{(0)}$ (i.e., without assuming
FLRW symmetry). In subsection B, we restrict to a FLRW background, we
make our Newtonian assumptions, and argue that $h^{(S)}_{ab}$ is given
by local Newtonian gravity. It should be emphasized that the arguments
of section B have the character of plausibility arguments rather than
proofs. Finally, the simplifications to the equation for
$\gamma^{(L)}_{ab}$ will be obtained in subsection C.

\subsection{Generalized wave map (harmonic) gauge}

Given a one-parameter family of metrics $g_{ab} (\lambda)$ on our
spacetime manifold $M$ that satisfies our assumptions (i)--(v), we may
apply any one-parameter family of diffeomorphisms, $\phi(\lambda):M\to
M$, which preserve these conditions, where without loss of generality,
we may assume that $\phi(0)$ is the identity map.  As already noted in
the paragraph below \eqref{eqn:gammaL}, it is clear that any
$\phi(\lambda)$ that is jointly smooth in $\lambda$ and the spacetime
point $x$ will preserve conditions (i)--(v), but there also should
exist a wide class of $\phi(\lambda)$ that are not smooth in $\lambda$
that preserve these conditions.  The properties of such
$\phi(\lambda)$ and the transformations that they induce on
$\mu_{abcdef}$ were analyzed by \cite{Burnett:1989gp} (under his
assumptions, which differ slightly from our assumptions
(i)--(iv)). Unfortunately, although Burnett's analysis can be used to
prove important properties of gauge transformations, such as the
invariance of $t^{(0)}_{ab}$ under all allowed gauge transformations,
it is very difficult to prove any existence results that establish
that specific gauge conditions can be imposed on an arbitrary one
parameter family of metrics $g_{ab} (\lambda)$ satisfying our
conditions.

In this section, we will assume that we can impose the ``generalized
wave-map gauge condition'' on the metric $g_{ab} (\lambda)$, namely
\begin{equation}
  g^{ab}(\lambda)C^c_{\phantom{c}ab} (\lambda)=0\,,
  \label{wmg}
\end{equation}
where $C^c_{\phantom{c}ab} (\lambda)$ is given by \eqref{Ccab}. 
Note that this condition depends upon the background metric, $g_{ab}^{(0)}$,
since the derivative operator, $\nabla_a$, of $g_{ab}^{(0)}$ appears
in the definition of $C^c_{\phantom{c}ab} (\lambda)$.
We can impose the gauge condition \eqref{wmg} on $g_{ab} (\lambda)$
by applying the diffeomorphism 
$x^\mu \to \phi^\mu(\lambda,x)$ to $g_{ab} (\lambda)$, 
where $x^\mu$ are arbitrarily chosen
coordinates on $M$ and $\phi^\mu$ 
satisfies\footnote{The diffeomorphisms defined by
\eqref{wmg2} do not depend on the choice of coordinates $x^\mu$,
since this equation can be derived from the coordinate invariant action 
\begin{equation}
  E[\phi]=\int_M g^{\mu\nu}(\lambda)g_{\bar{\mu}\bar{\nu}}^{(0)}\frac{\partial \phi^{\bar{\mu}}}{\partial x^\mu}\frac{\partial \phi^{\bar{\nu}}}{\partial x^\nu} \sqrt{-g(\lambda)}\ud^4 x\,. \nonumber
\end{equation}}
\begin{equation}
  \frac{1}{\sqrt{-g(\lambda)}}\frac{\partial}{\partial x^\mu}\left(\sqrt{-g(\lambda)}g^{\mu\nu}(\lambda)\frac{\partial \phi^{\bar{\alpha}}}{\partial x^\nu}\right)+\Gamma^{(0)}{}^{\bar{\alpha}}_{\phantom{\alpha}\bar{\mu}\bar{\nu}}g^{\mu\nu}(\lambda)\frac{\partial \phi^{\bar{\mu}}}{\partial x^\mu}\frac{\partial \phi^{\bar{\nu}}}{\partial x^\nu}=0\,.
\label{wmg2}
\end{equation}
This equation is a (nonlinear) wave equation, so we can always find
(local) solutions.  To see that solutions to this equation give rise
to the condition \eqref{wmg}, we note that if we use
$\phi^\mu(\lambda,x)$ as coordinates for the $\lambda$th spacetime,
then we may replace $\partial \phi^\alpha/\partial x^\beta$ by
${\delta^\alpha}_\beta$, and \eqref{wmg2} reduces to
\begin{eqnarray}
  0&=&\frac{1}{\sqrt{-g(\lambda)}}\partial_\mu\left(\sqrt{-g(\lambda)}g^{\mu\alpha}(\lambda)\right)+\Gamma^{(0)}{}^\alpha_{\phantom{\alpha}\mu\nu}g^{\mu\nu}(\lambda)\nonumber\\
  &=&g^{\mu\nu}(\lambda)\Gamma^{\alpha}_{\phantom{\alpha}\mu\nu}(\lambda)-g^{\mu\nu}(\lambda)\Gamma^{(0)}{}^{\alpha}_{\phantom{\alpha}\mu\nu}\nonumber\\
  &=&-g^{\mu\nu}(\lambda)C^\alpha_{\phantom{\alpha}\mu\nu}\,.
\end{eqnarray}
We will refer to the coordinates $\phi^\mu(\lambda,x)$ as {\it
  generalized wave map coordinates} for $g_{ab}(\lambda)$ relative to
the coordinates $x^\mu$ for $g_{ab}^{(0)}$.  Note that in the case
where $g_{ab}^{(0)}=\eta_{ab}$ and $x^\mu$ are Minkowski coordinates,
the generalized wave map gauge condition reduces to the condition
$g^{\mu\nu}(\lambda)\Gamma^{\alpha}_{\phantom{\alpha}\mu\nu}(\lambda)=0$, and the
coordinates $\phi^\mu(\lambda,x)$ satisfy a linear wave equation. Such
coordinates are usually referred to as ``harmonic coordinates'' in the
literature.

Although we can always (locally) solve \eqref{wmg2} and thus (locally)
put each $g_{ab} (\lambda)$ in our one-parameter family in wave map
gauge, we have no guarantee that the resulting new one-parameter
family of metrics will satisfy our conditions (i)--(v). In the
following, we shall simply assume that this is the case, i.e., that we
have a one-parameter family of metrics $g_{ab} (\lambda)$ that
satisfies conditions (i)--(v) as well as our gauge condition
\eqref{wmg}. This corresponds to a strengthening of our assumptions.

When the generalized wave map gauge condition is satisfied, 
it is very convenient to work with the variable 
\begin{equation}
  \hh^{ab}(\lambda)\equiv g^{(0)}{}^{ab}-\sqrt{\frac{g(\lambda)}{g^{(0)}}}g^{ab}(\lambda)\,.
\end{equation}
instead of $h_{ab}(\lambda)=g_{ab}(\lambda)-g^{(0)}_{ab}$. (Note that
$\sqrt{g(\lambda)/g^{(0)}}$ is the proportionality factor 
between the volume elements of $g_{ab} (\lambda)$ and $g_{ab}^{(0)}$,
so this quantity does not depend on any choice of coordinates $x^\mu$ on $M$.)
We have
\begin{eqnarray}
  \nabla_b \hh^{ab}(\lambda)&=&-\sqrt{\frac{g(\lambda)}{g^{(0)}}}\nabla_bg^{ab}(\lambda)-\sqrt{\frac{g(\lambda)}{g^{(0)}}}g^{ab}(\lambda)C^c_{\phantom{c}cb}\nonumber\\
&=&-\sqrt{\frac{g(\lambda)}{g^{(0)}}}\left(\nabla_bg^{ab}(\lambda)+\frac{1}{2}g^{ab}(\lambda)g^{cd}(\lambda)\nabla_bg_{cd}(\lambda)\right)\nonumber\\
&=&\sqrt{\frac{g(\lambda)}{g^{(0)}}}g^{bc}(\lambda)C^a_{\phantom{a}bc} \nonumber \\
&=& 0 \, , 
\label{hhwmg}
\end{eqnarray}
where on the first line we used the fact that
\begin{equation}
  C^b_{\phantom{b}ba}=\frac{1}{2}\nabla_a\log\left(\frac{g(\lambda)}{g^{(0)}}\right)\,.
\end{equation}
Note that in linearized gravity, $\hh^{ab}$ reduces to the
trace-reversed metric perturbation, i.e.,
$\hh^{ab}(\lambda)\to\bar{h}^{ab}(\lambda)=h^{ab}(\lambda)-\frac{1}{2}g^{(0)}{}^{ab}h^c_{\phantom{c}c}(\lambda)$,
and \eqref{hhwmg} reduces to the Lorenz gauge condition $\nabla_a
\bar{h}^{ab} = 0$.  One should keep in mind, though, that beyond
lowest order in $\lambda$, $\hh^{ab}(\lambda)$ is not the
trace-reversed metric perturbation.

We now express the exact Einstein equation in wave-map
gauge in terms of $\hh^{ab}$ and the background derivative
operator.  Starting with \eqref{eqn:EinsteinRelation}, we use the
background equation \eqref{Eeg0}, the gauge condition
$\nabla_b\hh^{ab}=0$, as well as the fact that
\begin{equation}
  \nabla_ag^{bc}(\lambda)=-\sqrt{\frac{g^{(0)}}{g(\lambda)}}\left(\nabla_a \hh^{bc}-\frac{1}{2}g^{bc}(\lambda)g_{de}(\lambda)\nabla_a \hh^{de}\right)\,,
\end{equation}
to show (after a long computation) that
\begin{eqnarray}\label{eqn:h-wavemap}
  &&\nabla^c\nabla_c \hh^{ab} - 2 R^{a\phantom{cd}b}_{\phantom{a}cd}(g^{(0)})\hh^{cd}+2R_c^{\phantom{c}(a}(g^{(0)})\hh^{b)c}-R(g^{(0)})\hh^{ab}-g^{(0)}{}^{ab}R_{cd}(g^{(0)})\hh^{cd}\nonumber\\
  &&-2G^{ab}(g^{(0)})g^{(0)}_{cd}\hh^{cd}+2\Lambda\left(\hh^{ab}-\frac{1}{2}g^{(0)}{}^{ab}g^{(0)}_{cd}\hh^{cd}\right)\nonumber\\
  &=&-16\pi\frac{g(\lambda)}{g^{(0)}}\left(T^{ab}(\lambda)-T^{(0)}{}^{ab}+\mathfrak{t}^{ab}(\lambda)-t^{(0)}{}^{ab}\right)\,.
\end{eqnarray}
Here, the terms that are non-linear in $\hh^{ab}$ have been absorbed into
the quantity,
\begin{eqnarray}
  16\pi\frac{g(\lambda)}{g^{(0)}}\mathfrak{t}^{ab}(\lambda)&\equiv&-2G^{ab}\left(1-\frac{g(\lambda)}{g^{(0)}}-g^{(0)}_{cd}\hh^{cd}\right)+R_{cd}(g^{(0)})\hh^{cd}\hh^{ab}-2R_{ced}^{\phantom{ced}(a}(g^{(0)})\hh^{b)e}\hh^{cd}\nonumber\\
  &&-\hh^{cd}\nabla_c\nabla_d\hh^{ab}+\nabla_d\hh^{ca}\nabla_c\hh^{db}+g^{ef}(\lambda)g_{cd}(\lambda)\nabla_e\hh^{ca}\nabla_f\hh^{db}\nonumber\\
  &&+\frac{1}{2}g^{ab}(\lambda)g_{cd}(\lambda)\nabla_e\hh^{fc}\nabla_f\hh^{ed}-2g_{cd}(\lambda)g^{f(a}(\lambda)\nabla_e\hh^{b)c}\nabla_f\hh^{ed}\nonumber\\
  &&+\frac{1}{8}\left(2g^{ag}(\lambda)g^{bh}(\lambda)-g^{ab}(\lambda)g^{gh}(\lambda)\right)\left(2g_{cd}(\lambda)g_{ef}(\lambda)-g_{ed}(\lambda)g_{cf}(\lambda)\right)\nabla_h\hh^{ed}\nabla_g\hh^{cf}\nonumber\\
  &&-2\Lambda\left(\frac{g(\lambda)}{g^{(0)}}\left(g^{ab}(\lambda)-g^{(0)}{}^{ab}\right)+\hh^{ab}-\frac{1}{2}g^{(0)}{}^{ab}g^{(0)}_{cd}\hh^{cd}\right)\,.
\label{matht}
\end{eqnarray}

Equation \eqref{eqn:h-wavemap} is of the form
\begin{equation}
\mathscr{L}^{ab}(\hh)=-16\pi S^{ab} \, .
\label{LS}
\end{equation}
where $\mathscr{L}^{ab}$
takes the form of a linear wave operator acting on $\hh^{ab}$.  
Consequently, re-introducing our (arbitrarily chosen)
coordinates $x^\mu$ of the background spacetime,
we may rewrite \eqref{eqn:h-wavemap} in the following
equivalent integral form:
\begin{equation}\label{eqn:h-integral}
  \hh^{\alpha\beta}(\lambda,x)=4\int_M G_{\mathrm{ret}\phantom{\alpha\beta}\mu'\nu'}^{\phantom{\mathrm{ret}}\alpha\beta}(x,x')S^{\mu'\nu'}(\lambda,x')\sqrt{-g^{(0)}(x')}\ud^4x'+\hh^{\alpha\beta}_{\mathrm{hom}}(\lambda,x)\,,
\end{equation}
where 
$G_{\mathrm{ret}\phantom{\alpha\beta}\mu'\nu'}^{\phantom{\mathrm{ret}}\alpha\beta}(x,x')$
is the retarded Green's function for $\mathscr{L}^{ab}$
and $\hh^{\alpha\beta}_{\text{hom}}$ is a solution to
$\mathscr{L}^{ab}(\hh_{\text{hom}})=0$.
We emphasize that \eqref{eqn:h-integral} is not a {\em
  solution} to \eqref{eqn:h-wavemap} since the source
$S^{\alpha\beta}$ depends on $\hh^{\alpha\beta}$.  Rather, it is
simply a re-writing of \eqref{eqn:h-wavemap} in an integral form.

\subsection{Local Newtonian gravity}

For the remainder of this section, we restrict attention to the case
where the background spacetime $g_{ab}^{(0)}$ has FLRW symmetry. It
will be useful to work in coordinates where the metric components are
nonsingular. Thus, instead of the more common choice of polar-type
coordinates, we will write the background metric in the form
\begin{equation}
  ds^{(0)}{}^2=-d\tau^2+a^2(\tau) \left(1+k(x^2+y^2+z^2)/4\right)^{-2} [dx^2 + dy^2 +dz^2]\,,
\label{FLRW}
\end{equation}
where $k=0,\pm1$, depending on the spatial curvature.

The ``long wavelength part'' of the leading order 
in $\lambda$ part of $\hh^{ab}(\lambda)$ is given by
\begin{equation}
\bar{\gamma}^{(L)}{}^{ab}\equiv\wlim_{\lambda\to0}\frac{\hh^{ab}}{\lambda} = \gamma^{(L)ab} - \frac{1}{2} g^{(0)ab} {\gamma^{(L)c}}_c\,.
\label{gammabar}
\end{equation}
It follows directly from \eqref{hhwmg} that 
$\bar{\gamma}^{(L)}{}^{ab}$ satisfies the Lorenz gauge condition
\begin{equation}
\nabla_a \bar{\gamma}^{(L)}{}^{ab} = 0 \, .
\label{lorenz}
\end{equation}
The ``short wavelength part'' of $\hh^{ab}(\lambda)$ is given by
\begin{equation}
\hh^{(S)}{}^{ab}(\lambda) = \hh^{ab}(\lambda) - \lambda \bar{\gamma}^{(L)}{}^{ab}\,.
\end{equation}
By \eqref{eqn:h-wavemap}, it satisfies (in the notation of \eqref{LS})
\begin{equation}
  \mathscr{L}^{ab}(\hh^{(S)}(\lambda))=-16\pi\left(S^{ab}(\lambda)-\lambda\wlim_{\lambda'\to0}\frac{S^{ab}(\lambda')}{\lambda'}\right)\,,
\label{hhSwmg}
\end{equation}
where the weak limit appearing in this equation exists by virtue of
assumption (v) of section III.  Note that $S^{ab}(\lambda)$ still
depends on the full perturbation, i.e., $\hh^{ab}$ cannot be replaced
by $\hh^{(S)}{}^{ab}$ in the expression for $S^{ab}$.  Equation
\eqref{hhSwmg} can be rewritten in integral form as
\begin{eqnarray}\label{eqn:hS-wavemap-integral}
  \hh^{(S)}{}^{\alpha\beta}(\lambda,x)&=&4\int_M G_{\mathrm{ret}\phantom{\alpha\beta}\mu'\nu'}^{\phantom{\mathrm{ret}}\alpha\beta}(x,x')\left(S^{\mu'\nu'}(\lambda,x')-\lambda\wlim_{\lambda'\to0}\frac{S^{\mu'\nu'}(\lambda',x')}{\lambda'}\right)\sqrt{-g^{(0)}(x')}\ud^4x'\nonumber\\
  &&+\hh^{(S)}_{\text{hom}}{}^{\alpha\beta}(\lambda,x)\,.
\end{eqnarray}

Our aim for the remainder of this subsection is to argue that---in the
absence of gravitational radiation and under suitable assumptions
concerning the behavior of the matter distribution
$T_{ab}(\lambda)$---to leading order in $\lambda$,
$\hh^{(S)}{}^{ab}(\lambda)$ near point $x$ is described by Newtonian
gravity, taking into account only the matter distribution in a
suitable local neighborhood of $x$. However, in order to derive this
conclusion, we must make significant additional assumptions about our
one-parameter family $g_{ab}(\lambda)$, and severe difficulties arise
if one attempts to formulate these assumptions in a mathematically
precise manner.  The reason is that, although there are simple,
precise limits that one can take of general relativistic spacetimes to
obtain Newtonian gravity
\cite{Futamase:1983,Oliynyk:2009qa,Oliynyk:2009bc}, these limits would
not be compatible with our assumptions (unless $h_{ab}=O(\lambda^2)$),
and thus would not be suitable for our use.  We believe that it should
be possible to concoct a mathematically consistent set of assumptions
that would enable us to rigorously justify our conclusions below, but
we do not see a simple and/or elegant way of formulating such
assumptions, and we do not feel that it would be illuminating to
attempt to derive our results from a complicated list of technical
assumptions whose intrinsic plausibility is not much greater than that
of the conclusions we wish to draw. Thus, in the discussion below in
this subsection, although we will clearly indicate the nature of the
assumptions that are needed, we will not attempt to formulate all of
our assumptions in a mathematically precise manner, and we will
thereby resort to plausibility arguments to obtain our conclusions.

We are interested in obtaining $\hh^{(S)}{}^{ab}(\lambda)$ near a
point $x$ at a time roughly corresponding to the present time in the
actual universe.  We first argue that although the retarded Green's
function integral extends all the way back to the ``big bang'', it
should suffice to integrate only over the ``recent universe''
(corresponding, say, to $z \le 1000$ in the present, actual
universe). There are two reasons why this should be so: (1) The
universe is expected to be very nearly homogeneous and isotropic in
the distant past, so the source term
$\left(S^{ab}-\lambda\wlim_{\lambda'\to0}\left[S^{ab}/\lambda'\right]\right)$
should be negligibly small. (2) The nature of the retarded Green's
function in an expanding universe is such as to make the influence of
distant sources small (on account of redshift and intensity
diminution).  Similarly, we assume that the gravitational radiation
content of the present universe arising from the ``big bang'' is
negligible. Consequently, we discard the last term,
$\hh^{(S)}_{\text{hom}}{}^{\alpha\beta}$, in
\eqref{eqn:hS-wavemap-integral}.  Thus, our integral relation for
$\hh^{(S)}{}^{ab}$ becomes
\begin{equation}\label{eqn:hS-wavemap-integral2}
  \hh^{(S)}{}^{\alpha\beta}(\lambda,x) = 4\int_{\mathcal W}G_{\mathrm{ret}\phantom{\alpha\beta}\mu'\nu'}^{\phantom{\mathrm{ret}}\alpha\beta}(x,x')\left(S^{\mu'\nu'}(\lambda,x')-\lambda\wlim_{\lambda'\to0}\frac{S^{\mu'\nu'}(\lambda',x')}{\lambda'}\right)\sqrt{-g^{(0)}(x')}\ud^4x' \, ,
\end{equation}
where $\mathcal W$ is the (compact) region corresponding to the
``recent universe'' (i.e., $z \leq 1000$ in the present, actual
universe).

Next, we argue that, in order to calculate $\hh^{(S)}{}^{\alpha\beta}$
to $O(\lambda)$ at $x$, it suffices to perform the integral in
\eqref{eqn:hS-wavemap-integral2} only over a small neighborhood
$\mathcal V$ of $x$.  In other words, we argue that---for any fixed
neighborhood, $\mathcal V$, of $x$---as $\lambda \to 0$, the
contribution to the integral in \eqref{eqn:hS-wavemap-integral2} from
the region $x' \in {\mathcal W} \setminus {\mathcal V}$ should vanish
faster than $\lambda$ as $\lambda \to 0$, i.e.,
\begin{equation}\label{eqn:hS-wavemap-integral3}
\int_{{\mathcal W} \setminus {\mathcal V}} G_{\mathrm{ret}\phantom{\alpha\beta}\mu'\nu'}^{\phantom{\mathrm{ret}}\alpha\beta}(x,x')\left(\frac{S^{\mu'\nu'}(\lambda,x')}{\lambda}-\wlim_{\lambda'\to0}\frac{S^{\mu'\nu'}(\lambda',x')}{\lambda'}\right)\sqrt{-g^{(0)}(x')}\ud^4x' \to 0 \, .
\end{equation}
To see this, we note that if
$G_{\mathrm{ret}\phantom{\alpha\beta}\mu'\nu'}^{\phantom{\mathrm{ret}}\alpha\beta}(x,x')$
were smooth (and if the sharp boundaries of the region of integration
were replaced by smooth cut-off functions), then we would be
integrating the quantity $(S^{\mu \nu}/\lambda - \wlim_{\lambda'\to0}
[S^{\mu \nu}/\lambda'])$ with a test function. Since, clearly, the
weak limit as $\lambda \to 0$ of this quantity is $0$, it follows
immediately that \eqref{eqn:hS-wavemap-integral3} would hold. Of
course,
$G_{\mathrm{ret}\phantom{\alpha\beta}\mu'\nu'}^{\phantom{\mathrm{ret}}\alpha\beta}(x,x')$
is not smooth, so \eqref{eqn:hS-wavemap-integral3} cannot be expected
to hold without further restrictions on $S^{ab}$.  However, in a
normal neighborhood of $x$, the retarded Green's function
$G_{\mathrm{ret}\phantom{\alpha\beta}\mu'\nu'}^{\phantom{\mathrm{ret}}\alpha\beta}(x,x')$
for the linear operator $\mathscr{L}^{ab}$ takes the form
\begin{equation}\label{eqn:Greensfunction}
  G_{\mathrm{ret}\phantom{\alpha\beta}\mu'\nu'}^{\phantom{\mathrm{ret}}\alpha\beta}(x,x')=U^{\alpha\beta}_{\phantom{\alpha\beta}\mu'\nu'}(x,x')\delta_+(\sigma)+V^{\alpha\beta}_{\phantom{\alpha\beta}\mu'\nu'}(x,x')\theta_+(-\sigma)\,,
\end{equation}
where $U^{\alpha\beta}_{\phantom{\alpha\beta}\mu'\nu'}(x,x')$ and
$V^{\alpha\beta}_{\phantom{\alpha\beta}\mu'\nu'}(x,x')$ are smooth
bitensors and $\sigma$ is the squared geodesic distance between $x$
and $x'$ (see \cite{lrr-2004-6} for details). Thus, apart from the
singularity at $x'=x$ (which is excluded from the region of
integration in \eqref{eqn:hS-wavemap-integral3}), the singularities of
$G_{\mathrm{ret}\phantom{\alpha\beta}\mu'\nu'}^{\phantom{\mathrm{ret}}\alpha\beta}(x,x')$
are of the form of a restriction to the past lightcone (i.e.,
$\delta_+(\sigma)$) and a cutoff at the past lightcone (i.e.,
$\theta_+(-\sigma)$). If $S^{ab}$ is not rapidly varying with time (as
should be the case under our Newtonian assumptions below), these
singularities should be quite benign, so it seems not unreasonable
that \eqref{eqn:hS-wavemap-integral3} will hold under suitable
assumptions.

We have argued that \eqref{eqn:hS-wavemap-integral3} should hold for
an arbitrary neighborhood $\mathcal V$ of $x$. However, the vanishing
of the contribution to $\hh^{(S)}{}^{\alpha\beta}(\lambda,x)$ from
outside of $\mathcal V$ to $O(\lambda)$ holds only in the limit as
$\lambda \to 0$, and the smaller we take $\mathcal V$, the smaller we
must take $\lambda$ in order to get a good approximation to
$\hh^{(S)}{}^{\alpha\beta}(\lambda,x)$ by integrating only over
$\mathcal V$. At any finite $\lambda$, we cannot take $\mathcal V$ to
be arbitrarily small and still get a good approximation to
$\hh^{(S)}{}^{\alpha\beta}(\lambda,x)$.  How large must we take
$\mathcal V$ at finite $\lambda$?

To propose an answer to this question, we restrict consideration to
the case where the matter content satisfies suitable Newtonian
behavior (at least in the ``recent universe'').  We shall assume that
as $\lambda\to0$ we have $T_{00} (\lambda) = O(1/\lambda)$, $T_{0i}
(\lambda) = O(1/\lambda^{1/2})$ and $T_{ij} (\lambda) = O(1)$, so
that, for small $\lambda$, the energy density is much greater than the
momentum density, and the momentum density is much greater than the
stress.  In addition we shall assume that spatial differentiation of
components of the stress-tensor results in blow-up as $\lambda \to 0$
that is a factor of $\lambda^{-1}$ faster than the undifferentiated
components (so, e.g., $\partial_i T_{00} (\lambda) = O(1/\lambda^2)$),
but that time differentiation results in a blow-up of only a factor of
$\lambda^{-1/2}$ faster (so, e.g., $\partial_0 T_{00} (\lambda) =
O(1/\lambda^{3/2})$).

We now introduce the notion of the {\em scale of homogeneity},
$l(\lambda,\tau_0)$, at cosmic time $\tau_0$ as follows.  First, we
define a fiducial window function which is to be used for
averaging. Fix a time interval $\Delta \tau \ll \tau_0$ and let
$V_{R,x_0}$ to be the ``cylinder'', centered at $x_0$, of ``height''
$\Delta \tau$ and proper spatial radius $R$.  Let $\chi_{R,x_0}(x)$ be
a smooth non-negative function which is equal to one on $V_{R,x_0}$,
and falls rapidly to zero outside of this region. Let $\mathcal H$
denote the ``Hubble volume'' relative to $x$ at time $\tau_0$, i.e.,
the ball of proper radius equal to the Hubble radius, $R_H$, centered
at point $x$. We define
\begin{equation}
  l(\lambda,\tau_0)=\inf\left\{R:\left|\int\left(T_{00}(\lambda)-T_{00}^{(0)}\right)\chi_{R,\tau_0,\boldsymbol{x}_0}\right|<\left|\int T_{00}^{(0)}\chi_{R,\tau_0,\boldsymbol{x}_0}\right|,\,\forall \boldsymbol{x}_0 \in {\mathcal H} \right\}\,.
\label{homscale}
\end{equation}
In other words $l$ is the smallest radius such that averaging over a ball
of radius $l$ centered at any point $x_0$ lying within the Hubble volume 
relative to $x$ always
yields $|\delta \rho|/\rho^{(0)} < 1$. 
Note that our definition of $l(\lambda,\tau_0)$ depends on our
choice of ``window function''
$\chi_{R,x_0}(x)$.

We now claim that $l(\lambda,\tau_0)\to0$ as $\lambda \to 0$ (and
thus, in particular, $l$ is always finite at sufficiently small
$\lambda$).  To prove this, we note that if this result did not hold,
we could find an $l_0>0$ and a sequence $\{\lambda_n:n\in\mathbb{N}\}$
converging to zero, such that $l(\lambda_n,\tau_0) > l_0$ for all
$n\in\mathbb{N}$. Consequently, there would exist a sequence of points
$\{\boldsymbol{x}_n\}\subset {\mathcal H}$ such that
\begin{equation}
\left|\int\left(T_{00}(\lambda_n)-T_{00}^{(0)}\right)\chi_{l_0,\tau_0,\boldsymbol{x}_n}\right|\ge\left|\int T_{00}^{(0)}\chi_{l_0,\tau_0,\boldsymbol{x}_n}\right| \, .
\end{equation}
Since $\mathcal H$ is compact, there exists a subsequence---which we
also denote as $\{\boldsymbol{x}_n\}$---converging to some
$\boldsymbol{z}\in {\mathcal H}$.  By the triangle inequality, we have
\begin{equation}
\left|\int\left(T_{00}(\lambda_n)-T_{00}^{(0)}\right)\chi_{l_0,\tau_0,\boldsymbol{z}}\right|
+
\left|\int\left(T_{00}(\lambda_n)-T_{00}^{(0)}\right)\left[\chi_{l_0,\tau_0,\boldsymbol{x}_n}
- \chi_{l_0,\tau_0,\boldsymbol{z}}\right] \right| \geq 
\left|\int T_{00}^{(0)} \chi_{l_0,\tau_0,\boldsymbol{x}_n}\right|\,.
\end{equation}
Taking the limit as $n \to \infty$, we see that the first term on the
left side vanishes because $T_{00}(\lambda)$ converges weakly to
$T_{00}^{(0)}$ and the second term vanishes by the lemma of section
II. On the other hand, the right side is bounded away from $0$, thus
yielding a contradiction, thereby proving the desired result that
$l(\lambda,\tau_0)\to0$ as $\lambda \to 0$.

Returning to the question posed four paragraphs above, since
$T_{00}(\lambda)$ provides the dominant contribution to the source
term $S^{ab}$, it seems clear that if, at finite $\lambda$, ${\mathcal
  V}(\lambda)$ is chosen to be so small that it does not include all
source contributions lying within a homogeneity scale
$l(\lambda,\tau_0)$ about point $x$, then we cannot expect the source
contributions from outside of ${\mathcal V}(\lambda)$ to consistently
average to zero to a good approximation. On the other hand, if
${\mathcal V}(\lambda)$ is of order of the homogeneity scale or
larger, then it seems plausible that (with suitable additional
assumptions) a good approximation to
$\hh^{(S)}{}^{\alpha\beta}(\lambda,x)$ will be obtained.  Thus, we
have argued that $\hh^{(S)}{}^{\alpha\beta}(\lambda,x)$ should be well
approximated by
\begin{equation}\label{eqn:hS-wavemap-integral4}
  \hh^{(S)}{}^{\alpha\beta}(\lambda,x) = 4\int_{{\mathcal V}(\lambda)}G_{\mathrm{ret}\phantom{\alpha\beta}\mu'\nu'}^{\phantom{\mathrm{ret}}\alpha\beta}(x,x')\left(S^{\mu'\nu'}(\lambda,x')-\lambda\wlim_{\lambda'\to0}\frac{S^{\mu'\nu'}(\lambda',x')}{\lambda'}\right)\sqrt{-g^{(0)}(x')}\ud^4x' \, ,
\end{equation}
where ${\mathcal V}(\lambda)$ may be taken to be a ``cylinder''
centered at $x$ of proper spatial radius $L$ and proper time
``height'' $2L$, where $L\gtrsim l(\lambda,\tau_0)$.  Note that in
order to obtain the conclusion that we need only integrate over the
local region ${\mathcal V}(\lambda)$, it is essential that we have
removed the ``long wavelength part'' of $\hh^{\alpha\beta}$.

We now assume that $l(\lambda,\tau_0)\ll R_C$ so that we can choose
$L$ such that $L\ll R_C$, where $R_C$ denotes the length scale of
curvature of the background metric $g_{ab}^{(0)}$ (i.e., the Hubble
radius, assuming spatial curvature is negligible).  In the actual
universe at the present time, we have $R_C\sim3000$~Mpc and
$l(\lambda,\tau_0)\lesssim100$~Mpc, so we should easily satisfy all
required criteria with $L\sim100$~Mpc.  In that case, it should
suffice to take only the leading order terms in the Hadamard expansion
for $U^{\alpha\beta}_{\phantom{\alpha\beta}\mu'\nu'}$ and
$V^{\alpha\beta}_{\phantom{\alpha\beta}\mu'\nu'}$ as well as the
leading order approximation to $\sigma$ in the Green's function
expression \eqref{eqn:Greensfunction}. This yields
\begin{eqnarray}
  G_{\mathrm{ret}\phantom{\alpha\beta}\mu'\nu'}^{\phantom{\mathrm{ret}}\alpha\beta}(\tau,\boldsymbol{0},\tau',\boldsymbol{x}')&=&\delta^{(\alpha}_{\mu'}\delta^{\beta)}_{\nu'}\frac{\delta\left(\tau'-\tau+a(\tau)r'\right)}{a(\tau)r'}\nonumber\\
  &&+V^{\alpha\beta}_{\phantom{\alpha\beta}\mu'\nu'}(\tau,\boldsymbol{0},\tau,\boldsymbol{0})\theta\left(-\tau'+\tau-a(\tau)r'\right)\,,
\end{eqnarray}
where we have now put the field evaluation point $x$ at the spatial
origin of our coordinate system \eqref{FLRW}. Since
$V^{\alpha\beta}_{\phantom{\alpha\beta}\mu'\nu'}
(\tau,\boldsymbol{0},\tau,\boldsymbol{0})$ is proportional to the
curvature of $g_{ab}^{(0)}$, it is clear that the contribution of the
second term will be down by a factor of $(L/R_C)^2$ from the first
term, so we neglect this contribution. We also neglect ``retardation
effects'', i.e., the difference between evaluating the source at time
$\tau - a(\tau) r'$ and time $\tau$.  Our formula
\eqref{eqn:hS-wavemap-integral4} for $\hh^{(S)}{}^{\alpha\beta}$ then
reduces to
\begin{equation}\label{eqn:hconstanttau}
  \hh^{(S)}{}^{\alpha\beta}(\lambda,\tau,\boldsymbol{0})\approx 4\int \ud\Omega'
\int_0^{\frac{L}{a(\tau)}} \frac{1}{r'}
\left(S^{\alpha\beta}(\lambda,\tau,\boldsymbol{x}')-\lambda\wlim_{\lambda'\to0}\frac{S^{\alpha\beta}(\lambda',\tau,\boldsymbol{x}')}{\lambda'}\right) a^2(\tau)r'^2 \ud r' \,.
\end{equation}

Next, we assume that $l(\lambda,\tau_0)$ not only goes to zero as
$\lambda \to 0$ (as we have proven above) but is $O(\lambda)$ as
$\lambda \to 0$, i.e., $l(\lambda,\tau_0)/\lambda$ remains
bounded\footnote{This precludes behavior wherein, e.g.,
  $T_{00}(\lambda)$ is $O(1)$ as $\lambda \to 0$ but its scale of
  spatial variation goes as $\lambda^{1/2}$ rather than
  $\lambda$. Such behavior would not be excluded by our previous
  assumptions.}  as $\lambda \to 0$.  Thus, if we choose $L$ such that
$L/\lambda$ remains bounded as $\lambda\to0$, by inspection of
\eqref{eqn:hconstanttau}, it is then clear that terms in $S^{ab}$ that
are $o(1/\lambda)$ as $\lambda \to 0$ will make only $o(\lambda)$
contributions to $\hh^{(S)}{}^{\alpha\beta}$. However, we assumed
above that as $\lambda \to 0$, we have $T_{00} (\lambda) =
O(1/\lambda)$, $T_{0i} (\lambda) = O(1/\lambda^{1/2})$ and $T_{ij}
(\lambda) = O(1)$. We now make a final, additional assumption that
terms of the form $\nabla_c\nabla_d\hh^{ab}$ are $O(1/\lambda)$. In
that case, it follows immediately that $\mathfrak{t}^{ab}$ is $O(1)$
as $\lambda \to 0$ (see \eqref{matht}). It then follows that to obtain
$\hh^{(S)}{}^{\alpha\beta}$ to $O(\lambda)$ accuracy, we need only
take into account the contribution to $S^{ab}$ from $T_{00}$. Thus, to
$O(\lambda)$ accuracy, we obtain
\begin{equation}\label{eqn:hconstanttau2}
  \hh^{(S)}{}^{00}(\lambda,\tau,\boldsymbol{0})= 4\int \ud\Omega'
\int_0^{\frac{L}{a(\tau)}} \frac{1}{r'}
\left(T^{00}(\lambda,\tau,\boldsymbol{x}')-T^{(0)00}(\tau,\boldsymbol{x}') -\lambda T^{(1)00}(\tau,\boldsymbol{x}')
\right) a^2(\tau)r'^2 \ud r'
\end{equation}
and $\hh^{(S)}{}^{\mu \nu} = 0$ for all other components of
$\hh^{(S)}{}^{ab}$.  We write
\begin{equation}
\phi \equiv -\frac{1}{4}\hh^{(S)}{}^{00} \, .
\label{phidef}
\end{equation}
Then $\phi(x)$ differs from the familiar formula for the gravitational
potential arising in ordinary Newtonian gravity due to the matter
lying within a ball of proper radius $L$ about $x$ only in the
following ways: (a) There is a factor of $a^2(\tau)$ in
\eqref{eqn:hconstanttau2}, which arises from the trivial scaling
difference between the spatial coordinates of \eqref{FLRW} and
ordinary Cartesian coordinates. (b) $T^{(0)00}$ is subtracted in the
integrand of \eqref{eqn:hconstanttau2} because the FLRW time slicing
differs from a locally Minkowskian time slicing (see section IA of
\cite{Holz:1997ic}); equivalently, the effects of $T^{(0)00}$ have
already been taken into account via the dynamics of the FLRW
background. (c) $\lambda T^{(1)00}$ is subtracted in the integrand of
\eqref{eqn:hconstanttau2} because its effects were already taken into
account by $\gamma^{(L)}_{ab}$. As discussed above, this subtraction
of $\lambda T^{(1)00}$ gives the integral much better convergence
properties.  Thus, we conclude that the leading order short wavelength
deviation from the FLRW background $g^{(0)}_{ab}$ is described by
Newtonian gravity, taking into account only the matter distribution
lying within a region about $x$ whose size is of order the homogeneity
lengthscale.

The motion of matter is given by
\begin{equation}
0 = \nabla_a (\lambda) T^{ab} (\lambda) = \nabla_a T^{ab} (\lambda)
+ C^a_{\phantom{a}ac}(\lambda) T^{cb}(\lambda) + C^b_{\phantom{b}ac}(\lambda) T^{ac} (\lambda) \, .
\label{eom}
\end{equation}
We may write
\begin{equation}
C^a_{\phantom{a}bc} (\lambda) = C^{(S)}{}^a_{\phantom{a}bc}(\lambda) + \lambda C^{(1)}{}^a_{\phantom{a}bc}\,,
\label{Ctot}
\end{equation}
where, to leading order in wave map gauge, we have
\begin{eqnarray}
C^{(S)}{}^0_{\phantom{0}0i}&=&\nabla_i\phi\,,\\
C^{(S)}{}^i_{\phantom{i}00}&=&\nabla^i\phi\,,\\
C^{(S)}{}^i_{\phantom{i}jk}&=&g_{jk}^{(0)}\nabla^i\phi-2\delta^i_{\phantom{i}(j}\nabla_{k)}\phi\,,
\end{eqnarray}
(with other components zero), and
\begin{equation}
C^{(1)}{}^a_{\phantom{a}bc} = \frac{1}{2}g^{(0)}{}^{cd} \left\{\nabla_a\gamma^{(L)}_{bd}+\nabla_b\gamma^{(L)}_{ad}-\nabla_d \gamma^{(L)}_{ab}\right\}\, .
\label{C1}
\end{equation}
The dominant terms in \eqref{eom} arise from
$C^{(S)}{}^i_{\phantom{i}00}(\lambda)T^{00}(\lambda)$ and correspond
to the ordinary Newtonian gravitational effects on the motion of
matter.  Although, for small $\lambda$, the contributions from
$C^{(1)}{}^a_{\phantom{a}bc}$ will be much smaller than those arising
from $\phi$, it is important not to discard the terms in
$C^{(1)}{}^a_{\phantom{a}bc}$ since they can produce large scale,
coherent motions.

\subsection{Behavior of $\gamma^{(L)}_{ab}$ with dust source}

In this subsection, we will simplify the rather complicated equations
for $\gamma^{(L)}_{ab}$ derived in section III under the assumption that
$\hh^{(S)}{}^{ab}(\lambda)$ is of ``Newtonian form''. More precisely, 
we assume that the following quantities are uniformly
bounded as $\lambda\to0$:
\begin{equation}
  \begin{matrix}
    \frac{1}{\lambda}\hh^{(S)}{}^{00}\,, & \frac{1}{\lambda^{1/2}}\nabla_0\hh^{(S)}{}^{00}\,, & \nabla_i\hh^{(S)}{}^{00}\,, \\
    & \frac{1}{\lambda}\nabla_0\hh^{(S)}{}^{0j}\,, & \frac{1}{\lambda^{1/2}}\nabla_i\hh^{(S)}{}^{0j}\,, \\
    &  & \frac{1}{\lambda}\nabla_i\hh^{(S)}{}^{jk}\,.
  \end{matrix}
  \label{newt}
\end{equation}
The remaining components ($\hh^{(S)}{}^{0i}$, $\hh^{(S)}{}^{ij}$, and
$\nabla_0\hh^{(S)}{}^{ij}$) are assumed to be $o(\lambda)$.  These
assumptions can be justified by generalizations of the arguments made
in the previous subsection, but here we will simply assume that they
are valid.

We will also assume that the matter stress-energy takes the form of
``dust''
\begin{equation}
  T_{ab}(\lambda)=\rho(\lambda)u_a(\lambda)u_b(\lambda)\,,
\end{equation}
where $u^a(\lambda)$ has norm $-1$ with respect to the metric
$g_{ab}(\lambda)$, and $\rho(\lambda)\ge0$. (Recall that we have
incorporated a cosmological constant into Einstein's equation, so the
possible presence of ``dark energy'' of that form has already been
taken into account.  This assumption of the dust form of the
stress-energy tensor as opposed to a more general form of
non-relativistic matter is made here mainly for the purpose of
obtaining definite equations involving familiar quantities.)  We
further assume that as $\lambda \to 0$, $u^a(\lambda)$ converges
uniformly to $u^{(0)}{}^a$, where in the coordinates of \eqref{FLRW},
we have $u^{(0)}{}^\mu=(1,0,0,0)$.  Note that since $u_a(\lambda)\to
u_a^{(0)}$ uniformly as $\lambda\to0$, it follows by our lemma of
section II and the positivity of $\rho(\lambda)$ that
\begin{equation}
  T^{(0)}_{ab}=\rho^{(0)}u^{(0)}_au^{(0)}_b\,,
\end{equation}
where $\rho^{(0)}=\wlim_{\lambda\to0}\rho(\lambda)$.

Let $v^a(\lambda)$ denote the peculiar velocity of the dust relative
to the ``Hubble flow'' $u^{(0)}{}^a$, i.e., $v^a(\lambda)$ is the
projection of $u^a(\lambda)$ orthogonal to $u^{(0)}{}^a$ in the metric
$g_{ab}^{(0)}$. In accord with usual Newtonian limits, we assume that
$v^a(\lambda)/\lambda^{1/2}$ is uniformly bounded as $\lambda\to0$.
It follows that
\begin{equation}
  u^a(\lambda)=\left(1+\frac{1}{2}h_{cd}(\lambda)u^{(0)}{}^cu^{(0)}{}^d+\frac{1}{2}v_c(\lambda)v^c(\lambda)\right)u^{(0)}{}^a+v^a(\lambda)+o(\lambda)\,,
\end{equation}
or, equivalently,
\begin{equation}
  u_a(\lambda)=g_{ab}(\lambda)u^b(\lambda)=\left(1+\frac{1}{2}h_{cd}(\lambda)u^{(0)}{}^cu^{(0)}{}^d+\frac{1}{2}v_c(\lambda)v^c(\lambda)\right)u^{(0)}_a+v_a(\lambda)+h_{ab}(\lambda)u^{(0)}{}^b+o(\lambda) \, .
\end{equation}

The results we shall now derive in this subsection will be rigorous
consequences of the assumptions that have been stated above.  We shall
first show that, under the above assumptions\footnote{In fact, we can
  show, without any Newtonian assumptions or gauge choice, that if
  $t^{(0)}_{ab}=0$, then $\alpha_{abcdef}=0$ and
  $\omega^{(1,\alpha)}_{abcdefgh}=0$.  However, this is not enough to
  tell us anything about $\nu^{(1)}_{abcde}$ or the other components
  of $\mu_{abcdef}$ or $\omega^{(1)}_{abcdefgh}$.}, the quantities
$\mu_{abcdef}$, $\omega^{(1,\alpha)}_{abcdefgh}$, and
$\nu^{(1)}_{abcde}$ all vanish, so all of the ``backreaction tensors''
that appear in the equations for $\gamma^{(L)}_{ab}$ of section III
vanish, except for $\mu^{(1)}_{abcdef}$ and $\kappa^{(1)}_{abcd}$.

Taking the weak limit of $\hh^{cd}$
times \eqref{eqn:h-wavemap} (or, equivalently, using \eqref{aht2}
directly) we find that our Newtonian assumptions imply
\begin{equation}
 \wlim_{\lambda\to0}\partial_i\phi\partial^i\phi = 0 \, ,
\end{equation}
where $\phi$ was defined by \eqref{phidef}.
Since the spatial metric $g^{(0)}_{ij}$ is positive definite, this
implies that for any test function $f$, we have
$\|\partial_i(f\phi)\|_{L^2}\to0$ as $\lambda\to0$.  Now,
$h_{ab}^{(S)}=\hh^{(S)}_{ab}-\frac{1}{2}g^{(0)}_{ab}\hh^{(S)}{}^c_{\phantom{c}c}+O(\hh\hh)$,
so the only possible contributions to
$\omega^{(1)}_{abcdefgh}$ come from terms proportional to
\begin{equation}
  \wlim_{\lambda\to0}\frac{1}{\lambda}\phi\partial_i\phi\partial_j\phi\,.
\end{equation}
Let $f$ be any test function, and let $g$ be a non-negative test
function which is equal to 1 on the support of $f$.  Then we have
\begin{eqnarray}
  \left|\int\left(\wlim_{\lambda\to0}\frac{1}{\lambda}\phi\partial_i\phi\partial_j\phi\right)f\ud^4x\right|
   &=&\left|\lim_{\lambda\to0}\int\frac{1}{\lambda}\phi\partial_i\phi\partial_j\phi f \ud^4x\right|\nonumber\\
  &=&\left|\lim_{\lambda\to0}\int\frac{1}{\lambda}\phi\partial_i\phi\partial_j\phi fg^2\ud^4x\right|\nonumber\\
  &\leq&C\lim_{\lambda\to0}\int|\partial_i(g\phi)\partial_j(g\phi)|\ud^4x\nonumber\\
  &\le&C\lim_{\lambda\to0}\|\partial_i(g\phi)\|_{L^2}\|\partial_j(g\phi)\|_{L^2}\nonumber\\
  &=&0\,.
\end{eqnarray}
Thus, we have $\omega^{(1)}_{abcdefgh}=0$.  Similar
arguments show that $\mu_{abcdef}=0$.  Finally, $\nu^{(1)}_{abcde}$
can only depend on terms proportional to
\begin{equation}
  \wlim_{\lambda\to0}\frac{1}{\lambda}\phi\partial_i\phi=\wlim_{\lambda\to0}\partial_i\left(\frac{\phi^2}{2\lambda}\right)=0\,,
\end{equation}
so it vanishes as well, as we desired to show.  With this
simplification, \eqref{eqn:gammaL} reduces to
\begin{eqnarray}\label{eqn:gammaLbar-dust}
  &&-\frac{1}{2}\nabla^c\nabla_c\bar{\gamma}^{(L)}_{ab}+R_{a\phantom{cd}b}^{\phantom{a}cd}(g^{(0)})\bar{\gamma}^{(L)}_{cd}+R^d_{\phantom{d}(a}(g^{(0)})\bar{\gamma}^{(L)}_{b)d}+\frac{1}{2}g^{(0)}_{ab}R^{cd}(g^{(0)})\bar{\gamma}^{(L)}_{cd}-\frac{1}{2}R(g^{(0)})\bar{\gamma}^{(L)}_{ab}\nonumber\\
  &&+\Lambda\left(\bar{\gamma}^{(L)}_{ab}-\frac{1}{2}g^{(0)}_{ab}\bar{\gamma}^{(L)}{}^c_{\phantom{c}c}\right)\nonumber\\
  &=&8\pi T^{(1)}_{ab}+\alpha^{(1)}{}_{a\phantom{c}b\phantom{d}cd}^{\phantom{a}c\phantom{b}d}-2\pi\kappa^{(1)}{}_{ab\phantom{c}c}^{\phantom{ab}c}-8\pi\kappa^{(1)}{}_{(a\phantom{c}b)c}^{\phantom{(a}c}+2\pi g^{(0)}_{ab}\left\{\kappa^{(1)}{}^{c\phantom{c}d}_{\phantom{c}c\phantom{d}d}-\kappa^{(1)}{}^{cd}_{\phantom{cd}cd}\right\}\,.
\end{eqnarray}
where
$\bar{\gamma}^{(L)}_{ab}=\gamma^{(L)}_{ab}-\frac{1}{2}g_{ab}^{(0)}\gamma^{(L)}{}^c_{\phantom{c}c}$,
and where we have used the wave map gauge condition,
$\nabla^a\bar{\gamma}^{(L)}_{ab}=0$ (see \eqref{lorenz}).

Next, we evaluate $\kappa^{(1)}_{abcd}$. With our above assumptions of
slowly moving dust matter, we have
\begin{equation}
  \kappa^{(1)}_{abcd}=\wlim_{\lambda\to0}\frac{1}{\lambda}h^{(S)}_{ab}(\lambda)\rho(\lambda)u_a(\lambda)u_b(\lambda)=u^{(0)}_cu^{(0)}_d\wlim_{\lambda\to0}\frac{1}{\lambda}h^{(S)}_{ab}(\lambda)\rho(\lambda)\,.
\end{equation}
The second equality follows from our lemma of section II because
$\left[u_c(\lambda)-u^{(0)}_c\right]
h^{(S)}_{ab}(\lambda)/\lambda\to0$ uniformly as $\lambda\to0$ and
$\rho(\lambda)\ge0$.  Using the Newtonian assumption that all
components, except for the time-time component, of
$\hh^{(S)}_{ab}(\lambda)/\lambda$ converge uniformly to zero as
$\lambda\to0$, the lemma of section II tells us furthermore that the
only non-vanishing components of $\kappa^{(1)}_{abcd}$ are
\begin{eqnarray}
  \kappa^{(1)}_{0000}&=&-2\wlim_{\lambda\to0}\frac{1}{\lambda}\rho(\lambda)\phi(\lambda)\,,\\
  \kappa^{(1)}_{ij00}&=&-2g^{(0)}_{ij}\wlim_{\lambda\to0}\frac{1}{\lambda}\rho(\lambda)\phi(\lambda)\,.
\end{eqnarray}

To evaluate $\mu^{(1)}_{abcdef}$, it is more convenient 
to work with the quantity
\begin{eqnarray}
  \bar{\mu}^{(1)}{}_{ab}^{\phantom{ab}cdef}&\equiv&\wlim_{\lambda\to0}\frac{1}{\lambda}\left[\nabla_{(a}\hh^{(S)}{}^{cd}\nabla_{b)}\hh^{(S)}{}^{ef}-\wlim_{\lambda'\to0}\left[\nabla_{(a}\hh^{(S)}{}^{cd}\nabla_{b)}\hh^{(S)}{}^{ef}\right]\right]\nonumber\\
  &=&\wlim_{\lambda\to0}\frac{1}{\lambda}\left[\nabla_{a}\hh^{(S)}{}^{cd}\nabla_{b}\hh^{(S)}{}^{ef}\right] \, ,
\end{eqnarray}
where the second line follows by the same type of argument as used
above to show the vanishing of $\omega^{(1)}_{abcdefgh}$. Since
$h_{ab}^{(S)}=\hh^{(S)}_{ab}-\frac{1}{2}g^{(0)}_{ab}\hh^{(S)}{}^c_{\phantom{c}c}+O(\hh\hh)$,
it is not difficult to see that $\bar{\mu}^{(1)}_{abcdef}$ can be
expressed straightforwardly in terms of $\mu^{(1)}_{abcdef}$ and
vice-versa.  Our Newtonian assumptions \eqref{newt} directly imply
that the only potentially non-zero components of
$\bar{\mu}^{(1)}_{abcdef}$ are
\begin{equation}
  \begin{matrix}
    \bar{\mu}^{(1)}_{000000}\,, & \bar{\mu}^{(1)}_{0i0000}\,, & \bar{\mu}^{(1)}_{0i000j}\,, & \bar{\mu}^{(1)}_{ij0000}\,, \\
    \bar{\mu}^{(1)}_{ij000k}\,, & \bar{\mu}^{(1)}_{ij00kl}\,, & \bar{\mu}^{(1)}_{ij0k0l}\,,&
  \end{matrix}
\end{equation}
together with the components related to these by symmetries. 
However, it also follows that
$\bar{\mu}^{(1)}_{ij00kl}=0$, since
\begin{eqnarray}
  \left|\int f\bar{\mu}^{(1)}_{ij00kl}\ud^4x\right|&=&\left|\lim_{\lambda\to0}\int f\frac{1}{\lambda}\nabla_i\hh^{(S)}_{00}\nabla_j\hh^{(S)}_{kl}\ud^4x\right|\nonumber\\
  &\le& C \lim_{\lambda\to0}\int\left| f \nabla_i\hh^{(S)}_{00} \right| \ud^4x \nonumber\\
  &\leq&C'\lim_{\lambda\to0}\left\| \nabla_i (f \phi) \right\|_{L^2}\nonumber\\
  &=&0\,.
\end{eqnarray}
where in the second line we used that fact that
$\nabla_j\hh^{(S)}_{kl}/\lambda$ is uniformly bounded as $\lambda \to
0$.

To further simplify $\bar{\mu}^{(1)}_{abcdef}$, we now appeal to
\eqref{eqn:alpha1}. With the simplifications arising from the
vanishing of $\mu_{abcdef}$, $\omega^{(1,\alpha)}_{abcdefgh}$, and
$\nu^{(1)}_{abcde}$ together with the wave map gauge condition
$\nabla_a \hh^{(S)}{}^{ab} = 0$ (see \eqref{hhwmg}), we obtain
\begin{eqnarray}\label{eqn:mubar1-dust}
  \frac{1}{4}\bar{\mu}^{(1)}{}^e_{\phantom{e}eabcd}=4\pi\kappa^{(1)}_{cdab}-2\pi g_{cd}^{(0)}\kappa^{(1)}{}^e_{\phantom{e}eab}\,.
\end{eqnarray}
From the form we have derived above for $\kappa^{(1)}_{abcd}$, it
follows immediately that
\begin{equation}
  \bar{\mu}^{(1)}{}^i_{\phantom{i}i0j0k}=\bar{\mu}^{(1)}{}^a_{\phantom{a}a0j0k}=0\,.
\end{equation}
Again by the positive definiteness of the spatial metric, it follows
that $\|\nabla_i(f\hh^{(S)}_{0j})/\lambda^{1/2}\|_{L^2}\to 0$ as
$\lambda\to0$, which implies $\bar{\mu}^{(1)}_{ij0k0l}=0$ as well.  By
using similar Schwartz inequality-type arguments, it also follows that
$\bar{\mu}^{(1)}_{ij000k}=0$.  The wave map gauge condition $\nabla_a
\hh^{(S)}{}^{ab} = 0$ then yields
\begin{eqnarray}
  \bar{\mu}^{(1)}_{0i000j}=\bar{\mu}^{(1)}{}^k_{\phantom{k}i0k0j}&=&0\,,\nonumber\\
  \bar{\mu}^{(1)}_{0i0000}=\bar{\mu}^{(1)}{}^j_{\phantom{j}i0j00}&=&0\,,\nonumber\\
  \bar{\mu}^{(1)}_{000000}=\bar{\mu}^{(1)}{}^i_{\phantom{i}00i00}&=&0\,.
\end{eqnarray}
The only nonvanishing  components which remain are
\begin{equation}
\bar{\mu}^{(1)}_{ij0000}=16\Xi_{ij} \, , 
\end{equation}
where 
\begin{equation}
  \Xi_{ij}\equiv\wlim_{\lambda\to0}\frac{1}{\lambda}\nabla_i\phi\nabla_j\phi\,.
\end{equation}
Furthermore, by \eqref{eqn:mubar1-dust} and our previous expression
for $\kappa^{(1)}_{abcd}$, we obtain
\begin{equation}
  \Xi^i_{\phantom{i}i}=-4\pi\wlim_{\lambda\to0}\frac{1}{\lambda}\rho(\lambda)\phi(\lambda)\,.
\end{equation}
Finally, expressing $\mu^{(1)}_{abcdef}$ in terms of
$\bar{\mu}^{(1)}_{abcdef}$, we find that the nonvanishing components
of $\mu^{(1)}_{abcdef}$ are
\begin{eqnarray}
  \mu^{(1)}_{ij0000}&=&4\Xi_{ij}\,,\\
  \mu^{(1)}_{ijkl00}&=&4g^{(0)}_{kl}\Xi_{ij}\,,\\
  \mu^{(1)}_{ijklmn}&=&4g^{(0)}_{kl}g^{(0)}_{mn}\Xi_{ij}\,.
\end{eqnarray}

Next, we consider $T^{(1)}_{ab}$. We have
\begin{eqnarray}\label{eqn:T1-dust}
  T^{(1)}_{ab}&=&\wlim_{\lambda\to0}\frac{1}{\lambda}\left(T_{ab}(\lambda)-T^{(0)}_{ab}\right)\nonumber\\
  &=&\wlim_{\lambda\to0}\frac{1}{\lambda}\left(\rho(\lambda)u_a(\lambda)u_b(\lambda)-\rho^{(0)}u^{(0)}_au^{(0)}_b\right)\nonumber\\
  &=&\wlim_{\lambda\to0}\frac{1}{\lambda}\left(\left[\rho(\lambda)-\rho^{(0)}\right]u_a^{(0)}u_b^{(0)}+\rho(\lambda)v_a(\lambda)v_b(\lambda)+2\rho(\lambda)u_{(a}^{(0)}v_{b)}(\lambda)\right.\nonumber\\
  &&\left.+\rho(\lambda)u_a^{(0)}u_b^{(0)}\left[h_{cd}(\lambda)u^{(0)}{}^cu^{(0)}{}^d+v_c(\lambda)v^c(\lambda)\right]+2\rho(\lambda)u^{(0)}{}^cu^{(0)}_{(a}h_{b)c}(\lambda)\right)\nonumber\\
  &=&\rho^{(1)}u^{(0)}_au^{(0)}_b+p^{(1)}_{ab}+2u^{(0)}_{(a}P^{(1)}_{b)}+u^{(0)}_au^{(0)}_b\kappa^{(1)}{}^{cd}_{\phantom{cd}cd}+\rho^{(0)}u^{(0)}_au^{(0)}_bu^{(0)}{}^cu^{(0)}{}^d\gamma^{(L)}_{cd}\nonumber\\
  &&+u^{(0)}_au^{(0)}_bp^{(1)}{}^c_{\phantom{c}c}+2\kappa^{(1)}{}^c_{\phantom{c}(ab)c}+2\rho^{(0)}u^{(0)}{}^cu^{(0)}_{(a}\gamma^{(L)}_{b)c}\nonumber\\
  &=&\rho^{(1)}u^{(0)}_au^{(0)}_b+2u^{(0)}_{(a}P^{(1)}_{b)}+p^{(1)}_{ab}+u^{(0)}_au^{(0)}_bp^{(1)}{}^c_{\phantom{c}c}+u^{(0)}_au^{(0)}_b\kappa^{(1)}{}^{cd}_{\phantom{cd}cd}+2\kappa^{(1)}{}^c_{\phantom{c}(ab)c}\nonumber\\
  &&+u^{(0)}_au^{(0)}_bT^{(0)}{}^{cd}\bar{\gamma}^{(L)}_{cd}+2T^{(0)}{}^c_{\phantom{c}(a}\bar{\gamma}^{(L)}_{b)c}-\frac{1}{2}T^{(0)}_{ab}\bar{\gamma}^{(L)}{}^c_{\phantom{c}c}\,.
\end{eqnarray}
Here, we have introduced the quantities
\begin{equation}
\rho^{(1)}\equiv\wlim_{\lambda\to0}\left[\rho(\lambda)-\rho(0)\right]/\lambda \, ,
\end{equation}
\begin{equation}
P^{(1)}_a\equiv\wlim_{\lambda\to0}\rho(\lambda)v_a(\lambda)/\lambda \, ,
\end{equation}
and
\begin{equation}
p^{(1)}_{ab}\equiv\wlim_{\lambda\to0}\rho(\lambda)v_a(\lambda)v_b(\lambda)/\lambda \, .
\end{equation}

Now, for a pressureless fluid, $\rho u^a$ is a conserved current,
whose integrated flux through a spacelike hypersurface can be
interpreted as (proportional to) the ``number of particles'' (often
referred to as the ``number of baryons'') in the fluid. The
perturbation, $\rho^{(1)}_M$, of the density (relative to the
background metric $g^{(0)}_{ab}$) of particles on a $\tau = {\rm
  const.}$ hypersurface is given by
\begin{equation}
  \rho^{(1)}_M\boldsymbol{\epsilon}^{(0)}_{abc}\equiv\wlim_{\lambda\to0}\frac{1}{\lambda}\left[T_{de}(\lambda)u^d(\lambda)n^e(\lambda)\boldsymbol{\epsilon}_{abc}(\lambda)
-T^{(0)}_{de}u^{(0)}{}^dn^{(0)}{}^e \boldsymbol{\epsilon}^{(0)}_{abc} \right] \,,
\end{equation}
where $n^a(\lambda)$ and $\boldsymbol{\epsilon}_{abc}(\lambda)$ are
the unit normal and 3-volume of the hypersurface in the metric
$g_{ab}(\lambda)$ and $n^{(0)}{}^a$ and
$\boldsymbol{\epsilon}^{(0)}_{abc}$ are the corresponding quantities
for $g^{(0)}_{ab}$.  We obtain
\begin{equation}
  \rho^{(1)}_M=\rho^{(1)}+\frac{1}{2}p^{(1)}{}^i_{\phantom{i}i}+\frac{3}{4\pi}\Xi^i_{\phantom{i}i}+\frac{1}{2}\rho^{(0)}\left(\bar{\gamma}^{(L)}_{00}-\frac{1}{2}\bar{\gamma}^{(L)}{}^c_{\phantom{c}c}\right)\,.
\label{rhom}
\end{equation}

We now return to our perturbation equation \eqref{eqn:gammaLbar-dust}
for $\bar{\gamma}^{(L)}_{ab}$.  We use the above explicit expressions
for $\kappa^{(1)}_{abcd}$ and $\mu^{(1)}_{abcdef}$, we use
\eqref{eqn:T1-dust} to substitute for $T^{(1)}_{ab}$, and we use
\eqref{rhom} to eliminate $\rho^{(1)}$ in favor of $\rho^{(1)}_M$.
Bringing all terms explicitly involving $\bar{\gamma}^{(L)}_{ab}$ to
the left side and using the background Einstein equation, we obtain
\begin{eqnarray}
  &&-\frac{1}{2}\nabla^c\nabla_c\bar{\gamma}^{(L)}_{ab}+R_{a\phantom{cd}b}^{\phantom{a}cd}(g^{(0)})\bar{\gamma}^{(L)}_{cd}+\frac{1}{2}g^{(0)}_{ab}R^{cd}(g^{(0)})\bar{\gamma}^{(L)}_{cd}-\frac{1}{2}\Lambda g^{(0)}_{ab}\bar{\gamma}^{(L)}{}^c_{\phantom{c}c}\nonumber\\
  &&-4\pi T^{(0)}_{ab}\bar{\gamma}^{(L)}_{cd}u^{(0)}{}^cu^{(0)}{}^d-8\pi T^{(0)}{}^c_{\phantom{c}(a}\bar{\gamma}^{(L)}_{b)c}+2\pi T^{(0)}_{ab}\bar{\gamma}^{(L)}{}^c_{\phantom{c}c}\nonumber\\
  &\equiv&8\pi\Theta_{ab}\,,
\label{gammaLfin}
\end{eqnarray}
where the components of $\Theta_{ab}$ are explicitly given by
\begin{eqnarray}
  \Theta_{00}&=&\rho^{(1)}_M+\frac{1}{2}p^{(1)}{}^i_{\phantom{i}i}-\frac{1}{8\pi}\Xi^i_{\phantom{i}i} \,,\\
  \Theta_{0i}&=&P^{(1)}_i\,,\\
  \Theta_{ij}&=&p^{(1)}_{ij}+\frac{1}{4\pi}\Xi_{ij}-\frac{1}{8\pi}g^{(0)}_{ij}\Xi^k_{\phantom{k}k}\,.
\end{eqnarray}
This expression for the ``effective perturbed stress-energy'' of
matter agrees with the expression obtained by \cite{Baumann:2010tm}
(see also \cite{Futamase:1996fk}).

All of the terms on the left side of \eqref{gammaLfin} together with
the terms $\rho^{(1)}_M$ and $P^{(1)}_i$ appearing in $\Theta_{ab}$
would be present in ordinary linearized perturbation theory. We
previously showed in section II that, in the absence of gravitational
radiation, small scale inhomogenities cannot affect the dynamics of
the background metric. We now see that, under the assumptions stated
at the beginning of this subsection, the only effect that small scale
inhomogeneities have on long wavelength perturbations is to add the
terms involving $p^{(1)}_{ij}$ and $\Xi_{ij}$ to the effective
perturbed stress-energy $\Theta_{ab}$. These additional terms provide
precisely the contributions to the energy density and stresses that
one would expect from kinetic motions and Newtonian gravitational
potential energy and stresses.  In particular, for the energy density,
these terms have the effect of shifting the proper mass density
appearing in the FLRW background to an ``ADM mass density'' (to first
order in $\lambda$).  Although the presence of these terms is very
important as a matter of principle, they should be extremely small
compared with $\rho^{(1)}_M$ (and even with $P^{(1)}_i$).

Finally, we note that although we have written \eqref{gammaLfin} in
wave map gauge \eqref{lorenz}, we may use the gauge freedom with
respect to smooth, one-parameter groups of diffeomorphisms discussed
in the paragraph below \eqref{eqn:gammaL} to impose any other desired
gauge condition on $\gamma^{(L)}_{ab}$. Since the left side of
\eqref{gammaLfin} together with the terms $\rho^{(1)}_M$ and
$P^{(1)}_i$ in $\Theta_{ab}$ have the same gauge transformation
properties as in ordinary linearized perturbation theory, these terms
will take the same form as in ordinary linearized perturbation theory
when transformed to the new gauge. On the other hand, the terms in
$\Theta_{ab}$ involving $\Xi_{ij}$ and $p^{(1)}_{ij}$ are gauge
invariant under the transformations induced by a smooth, one-parameter
group of diffeomorphisms. This enables one to write down the form of
\eqref{gammaLfin} in other gauges.

\section{Applicability of Our Formalism to the Real Universe}

In this paper, we have developed a formalism that enables one to take
full account of small scale non-linearities in Einstein's
equation. This formalism is, in essence, an adaptation of Burnett's
formulation of the ``shortwave approximation'' for analyzing the
self-gravitating effects of short-wavelength gravitational
radiation. The key idea of our formalism is to consider an idealized,
one-parameter family of metrics, $g_{ab}(\lambda)$, with the property
that, as $\lambda \to 0$, the deviation of $g_{ab}(\lambda)$ from a
``background metric'' $g^{(0)}_{ab}$ goes to zero in proportion to
$\lambda$, but the scales over which the metric and stress-energy vary
also go to zero in proportion to $\lambda$. In principle, this allows
the small-scale nonlinear terms in Einstein's equation to have a
significant effect on the large scale dynamics of
$g^{(0)}_{ab}$. However, we proved in section II that the only such
effects that can actually arise correspond to the presence of an
effective stress-energy associated with gravitational radiation.  No
new effects can arise from the presence of matter (provided that the
matter satisfies the weak energy condition), and no effects can arise
that in any way mimic ``dark energy''.

In sections III and IV, we applied our formalism to analyze the
leading order deviations of $g_{ab}(\lambda)$ from
$g^{(0)}_{ab}$. Within our formalism, these deviations can naturally
be split into ``long wavelength'' and ``short wavelength'' parts. The
long wavelength part satisfies a modified version of the usual
linearized perturbation equation. We argued that, in the case of
cosmological perturbations with nearly Newtonian sources, the short
wavelength part of the leading order deviation of $g_{ab}(\lambda)$
from $g^{(0)}_{ab}$ is described by Newtonian gravity, taking into
account only the local matter distribution.

However, our actual universe is not an idealized limit of spacetimes
with inhomogeneities on arbitrarily small scales, but is a particular
spacetime with finite amplitude, finite wavelength deviations from a
FLRW model. Quantities in our formalism like the ``long wavelength
perturbation,'' $\gamma_{ab}^{(L)}$, are defined by taking weak limits
as $\lambda \to 0$.  How do quantities like $\gamma_{ab}^{(L)}$ that
arise in our formalism correspond to quantities observed in the actual
universe? Furthermore, to what extent are we justified in applying our
results to the actual universe, i.e., how do we know whether the
actual universe is sufficiently ``close'' to the idealized limit
considered in our formalism that results derived using our formalism
should hold to a good approximation?

The above questions can also be asked in the context of ordinary
linearized perturbation theory.  In this context, one also deals with
a limit of a one-parameter family of metrics $g_{ab}(\lambda,x)$ (now
assumed to be jointly smooth in $(\lambda,x)$) and the formalism
obtains results for idealized quantities like $\partial
g_{ab}(\lambda,x)/\partial \lambda|_{\lambda = 0}$.  Nevertheless, in
the case of ordinary perturbation theory, the above questions can be
answered in a relatively straightforward manner. In an actual
(``finite $\lambda$'') spacetime with metric $g_{ab}$, one may
introduce a background metric $g^{(0)}_{ab}$ and identify the
difference, $h_{ab} = g_{ab} - g^{(0)}_{ab}$, with $\lambda \partial
g_{ab}(\lambda,x)/\partial \lambda|_{\lambda = 0}$. A fundamental
criterion for the validity of ordinary perturbation theory is that
$|h_{\mu \nu}| \ll 1$ in some orthonormal basis of
$g^{(0)}_{ab}$. However, we also need to satisfy conditions that state
that first and second spacetime derivatives of $h_{\mu \nu}$ are
sufficiently ``small,'' since these quantities appear in the nonlinear
terms that are being neglected in Einstein's equation. In particular,
the criteria for the applicability of ordinary linear perturbation
theory cannot be satisfied in a cosmological spacetime with $|\delta
\rho| > \rho_0$, as occurs in the real universe. Indeed, it is for
this reason that we have developed the formalism of this paper. It
should be noted that even in the case of spacetimes that do satisfy
the basic criteria for the validity of ordinary linear perturbation
theory, it would be very difficult to obtain precise error estimates
for this approximation.

If one wishes to apply the formalism of this paper to a particular
spacetime, such as our universe, it also is necessary to introduce a
background metric $g^{(0)}_{ab}$. As in ordinary perturbation theory,
it is essential that the difference, $h_{ab} = g_{ab} - g^{(0)}_{ab}$,
satisfy $|h_{\mu \nu}| \ll 1$ in some orthonormal basis of
$g^{(0)}_{ab}$. The main advantage of our formalism over ordinary
perturbation theory is that it imposes much weaker restrictions on
spacetime derivatives of $h_{\mu \nu}$. In particular, for
cosmological spacetimes, it allows $|\delta \rho| \gg \rho_0$ on small
scales.  However, it is clear that some further restrictions in
addition to $|h_{\mu \nu}| \ll 1$ must hold for the results derived
using our formalism to be a good approximation. Although it would be
extremely difficult to formulate mathematically precise criteria for
the validity of applying our formalism to a given spacetime---much
more difficult than for ordinary perturbation theory---we now shall
propose a rough criterion for its validity.

In essence, taking the weak limit as $\lambda \to 0$ of quantities
that arise in our formalism corresponds to taking spacetime averages
of these quantities over arbitrarily small regions of spacetime before
letting $\lambda \to 0$. For a given, fixed spacetime, such as our
universe, the weak limits in our formalism should thus be identified
with spacetime averages over small regions.  Thus, for example, the
quantity $\lambda \gamma_{ab}^{(L)}(x)$ [where $\gamma_{ab}^{(L)}
\equiv \wlim_{\lambda \to 0} h_{ab} (\lambda)/\lambda$] should be
identified with the spacetime average of $h_{ab}$ over a suitable
region, ${\mathcal R}_x$, centered on $x$.  In order that the
spacetime under consideration be ``sufficiently close'' to the
idealized limit $\lambda \to 0$ of our formalism that our results
should apply, it is necessary that ${\mathcal R}_x$ be small compared
with the curvature scale, $R_C$, of $g^{(0)}_{ab}$; indeed, the notion
of ``averaging'' would be highly ambiguous if this were not the case.
However, it also is necessary that ${\mathcal R}_x$ be large enough
that our perturbative approximations should apply.  In particular,
$\lambda \gamma_{ab}^{(L)}$ must satisfy the same ``smallness''
conditions as required in ordinary linearized perturbation theory,
i.e., its spacetime derivatives must be appropriately small.
Similarly, the perturbative quantity $\lambda T^{(1)}_{00}$ (see
\eqref{T1}) should correspond to the spacetime average of
$T_{00}-T_{00}^{(0)}$. In order that our approximations should apply,
we must choose ${\mathcal R}_x$ large enough that $|\lambda
T^{(1)}_{00}| \ll T^{(0)}_{00}$.

The condition that the spacetime average of $T_{00}-T_{00}^{(0)}$ be
less than $T^{(0)}_{00}$ was considered in section IVB above and used
to define the homogeneity scale, $l$ (see \eqref{homscale}). Similar
``homogeneity scales'' can be defined for all other ``long
wavelength'' quantities arising in the analysis of section III. Since
our averaging region must be large compared with these homogeneity
scales but small compared with the scale of the curvature of the
background metric, it is clear that a necessary condition for the
applicability of our formalism is that these homogeneity scales be
small compared with the curvature of the background metric. {\em We
  believe that this condition, together with $|h_{\mu \nu}| \ll 1$,
  should also be sufficient for the applicability of our formalism},
since it should guarantee the appropriate ``smallness'' of all long
wavelength perturbative quantities.

In the case of our universe, it seems reasonable to assume that the
only relevant homogeneity scale is the density homogeneity scale, $l$,
defined by \eqref{homscale}. In the present universe, we have $l
\lesssim 100$~Mpc whereas the scale of the curvature of the background
metric (i.e., the present Hubble radius) is of order $R_C \sim
3000$~Mpc.  Thus $l\ll R_C$.  In addition, common sense estimates
indicate that, except in the immediate vicinity of black holes or
neutron stars, we have $|h_{\mu \nu}| \leq 10^{-5}$.  Thus, if we
define ``long wavelength perturbations'' by averaging on scales of
order $L\sim100$~Mpc, the criteria we have proposed for the validity
of our approximations should be satisfied.  {\em Consequently, we
  believe that the results of this paper should be applicable to our
  universe to an excellent approximation.}

If this is the case, then the following conclusions can be drawn: (1)
As shown in section II, the only effect that small scale
inhomogeneities can have on the leading (i.e., zeroth) order large
scale dynamics of our universe is that of a $P = \frac{1}{3}\rho$ fluid,
corresponding to the presence of gravitational radiation. In
particular, small scale inhomogeneities cannot mimic the effects of
``dark energy.'' (2) The deviation, $h_{ab}$, of the metric from a
FLRW model can be broken up into ``long wavelength'' and ``short
wavelength'' parts. The long wavelength part corresponds in our
universe to averaging $h_{ab}$ over a spatial scale of order $L \sim
100$~Mpc. As analyzed fully in section III and subsection IVC, it
satisfies a linear equation with an additional source term due to the
short wavelength part. As argued in subsection IVB, the short
wavelength part should be described by Newtonian gravity, taking into
account only the matter within a proper distance of order $L \sim 100$
Mpc of the point under consideration.

Thus, the analysis of this paper goes a long way towards justifying
many of the key assumptions made in cosmology. In particular, if the
matter in the universe is non-relativistic, it suggests that to
calculate structure formation in the universe on all scales---from the
Hubble radius or larger down to arbitrarily small scales---it should
be a good approximation to evolve the long wavelength part of the
deviation of the metric from a FLRW model by \eqref{gammaLfin}, to
calculate the short wavelength part by \eqref{eqn:hconstanttau2}, and
to calculate the motion of the matter by \eqref{eom} together with
\eqref{Ctot}--\eqref{C1}.

In addition, our analysis suggests how further improvements can be
made to get more accurate approximations. To improve upon the description
of the ``long wavelength'' part of the deviation of the metric from a 
FLRW model one could go to higher order perturbation theory. In particular,
the second order correction to $\gamma^{(L)}_{ab}$ would be defined in our
framework by
\begin{equation}
\gamma^{(2)}_{ab} = \wlim_{\lambda \to 0}  
\frac{g_{ab}(\lambda) - g^{(0)}_{ab} - \lambda \gamma^{(L)}_{ab}}{\lambda^2} \, ,
\end{equation}
assuming, of course, that this weak limit exists. The equations
satisfied by $\gamma^{(2)}_{ab}$---taking full account of the small
scale inhomogeneities---could be derived by the methods used in
section III.  The description of the short wavelength part of the
deviation of the metric from a FLRW model could be improved by a more
accurate treatment of the integral relation
\eqref{eqn:hS-wavemap-integral2}.  However, the investigation of such
improvements is beyond the scope of this paper.

\bigskip

This research was supported in part by NSF Grant No.~PHY08-54807 to the
University of Chicago and by an NSERC PGS D scholarship to S.G.

\bibliography{mybib.bib}

\begin{thebibliography}{37}%
\makeatletter
\providecommand \@ifxundefined [1]{%
 \@ifx{#1\undefined}
}%
\providecommand \@ifnum [1]{%
 \ifnum #1\expandafter \@firstoftwo
 \else \expandafter \@secondoftwo
 \fi
}%
\providecommand \@ifx [1]{%
 \ifx #1\expandafter \@firstoftwo
 \else \expandafter \@secondoftwo
 \fi
}%
\providecommand \natexlab [1]{#1}%
\providecommand \enquote  [1]{``#1''}%
\providecommand \bibnamefont  [1]{#1}%
\providecommand \bibfnamefont [1]{#1}%
\providecommand \citenamefont [1]{#1}%
\providecommand \href@noop [0]{\@secondoftwo}%
\providecommand \href [0]{\begingroup \@sanitize@url \@href}%
\providecommand \@href[1]{\@@startlink{#1}\@@href}%
\providecommand \@@href[1]{\endgroup#1\@@endlink}%
\providecommand \@sanitize@url [0]{\catcode `\\12\catcode `\$12\catcode
  `\&12\catcode `\#12\catcode `\^12\catcode `\_12\catcode `\%12\relax}%
\providecommand \@@startlink[1]{}%
\providecommand \@@endlink[0]{}%
\providecommand \url  [0]{\begingroup\@sanitize@url \@url }%
\providecommand \@url [1]{\endgroup\@href {#1}{\urlprefix }}%
\providecommand \urlprefix  [0]{URL }%
\providecommand \Eprint [0]{\href }%
\providecommand \doibase [0]{http://dx.doi.org/}%
\providecommand \selectlanguage [0]{\@gobble}%
\providecommand \bibinfo  [0]{\@secondoftwo}%
\providecommand \bibfield  [0]{\@secondoftwo}%
\providecommand \translation [1]{[#1]}%
\providecommand \BibitemOpen [0]{}%
\providecommand \bibitemStop [0]{}%
\providecommand \bibitemNoStop [0]{.\EOS\space}%
\providecommand \EOS [0]{\spacefactor3000\relax}%
\providecommand \BibitemShut  [1]{\csname bibitem#1\endcsname}%
\let\auto@bib@innerbib\@empty
\bibitem [{\citenamefont {Holz}\ and\ \citenamefont
  {Wald}(1998)}]{Holz:1997ic}%
  \BibitemOpen
  \bibfield  {author} {\bibinfo {author} {\bibfnamefont {D.~E.}\ \bibnamefont
  {Holz}}\ and\ \bibinfo {author} {\bibfnamefont {R.~M.}\ \bibnamefont
  {Wald}},\ }\href {\doibase 10.1103/PhysRevD.58.063501} {\bibfield  {journal}
  {\bibinfo  {journal} {Phys. Rev.}\ }\textbf {\bibinfo {volume} {D58}},\
  \bibinfo {pages} {063501} (\bibinfo {year} {1998})},\ \Eprint
  {http://arxiv.org/abs/astro-ph/9708036} {arXiv:astro-ph/9708036} \BibitemShut
  {NoStop}%
\bibitem [{\citenamefont {Ishibashi}\ and\ \citenamefont
  {Wald}(2006)}]{Ishibashi:2005sj}%
  \BibitemOpen
  \bibfield  {author} {\bibinfo {author} {\bibfnamefont {A.}~\bibnamefont
  {Ishibashi}}\ and\ \bibinfo {author} {\bibfnamefont {R.~M.}\ \bibnamefont
  {Wald}},\ }\href {\doibase 10.1088/0264-9381/23/1/012} {\bibfield  {journal}
  {\bibinfo  {journal} {Class. Quant. Grav.}\ }\textbf {\bibinfo {volume}
  {23}},\ \bibinfo {pages} {235} (\bibinfo {year} {2006})},\ \Eprint
  {http://arxiv.org/abs/gr-qc/0509108} {arXiv:gr-qc/0509108} \BibitemShut
  {NoStop}%
\bibitem [{\citenamefont {van Elst}\ and\ \citenamefont
  {Ellis}(1998)}]{vanElst:1998kb}%
  \BibitemOpen
  \bibfield  {author} {\bibinfo {author} {\bibfnamefont {H.}~\bibnamefont {van
  Elst}}\ and\ \bibinfo {author} {\bibfnamefont {G.~F.~R.}\ \bibnamefont
  {Ellis}},\ }\href {\doibase 10.1088/0264-9381/15/11/017} {\bibfield
  {journal} {\bibinfo  {journal} {Class. Quant. Grav.}\ }\textbf {\bibinfo
  {volume} {15}},\ \bibinfo {pages} {3545} (\bibinfo {year} {1998})},\ \Eprint
  {http://arxiv.org/abs/gr-qc/9805087} {arXiv:gr-qc/9805087} \BibitemShut
  {NoStop}%
\bibitem [{\citenamefont {Rasanen}(2010)}]{Rasanen:2010wz}%
  \BibitemOpen
  \bibfield  {author} {\bibinfo {author} {\bibfnamefont {S.}~\bibnamefont
  {Rasanen}},\ }\href {\doibase 10.1103/PhysRevD.81.103512} {\bibfield
  {journal} {\bibinfo  {journal} {Phys. Rev.}\ }\textbf {\bibinfo {volume}
  {D81}},\ \bibinfo {pages} {103512} (\bibinfo {year} {2010})},\ \Eprint
  {http://arxiv.org/abs/1002.4779} {arXiv:1002.4779 [astro-ph.CO]} \BibitemShut
  {NoStop}%
\bibitem [{\citenamefont {Buchert}\ and\ \citenamefont
  {Carfora}(2003)}]{Buchert:2002ij}%
  \BibitemOpen
  \bibfield  {author} {\bibinfo {author} {\bibfnamefont {T.}~\bibnamefont
  {Buchert}}\ and\ \bibinfo {author} {\bibfnamefont {M.}~\bibnamefont
  {Carfora}},\ }\href {\doibase 10.1103/PhysRevLett.90.031101} {\bibfield
  {journal} {\bibinfo  {journal} {Phys. Rev. Lett.}\ }\textbf {\bibinfo
  {volume} {90}},\ \bibinfo {pages} {031101} (\bibinfo {year} {2003})},\
  \Eprint {http://arxiv.org/abs/gr-qc/0210045} {arXiv:gr-qc/0210045}
  \BibitemShut {NoStop}%
\bibitem [{\citenamefont {Coley}\ \emph {et~al.}(2005)\citenamefont {Coley},
  \citenamefont {Pelavas},\ and\ \citenamefont {Zalaletdinov}}]{Coley:2005ei}%
  \BibitemOpen
  \bibfield  {author} {\bibinfo {author} {\bibfnamefont {A.~A.}\ \bibnamefont
  {Coley}}, \bibinfo {author} {\bibfnamefont {N.}~\bibnamefont {Pelavas}}, \
  and\ \bibinfo {author} {\bibfnamefont {R.~M.}\ \bibnamefont {Zalaletdinov}},\
  }\href {\doibase 10.1103/PhysRevLett.95.151102} {\bibfield  {journal}
  {\bibinfo  {journal} {Phys. Rev. Lett.}\ }\textbf {\bibinfo {volume} {95}},\
  \bibinfo {pages} {151102} (\bibinfo {year} {2005})},\ \Eprint
  {http://arxiv.org/abs/gr-qc/0504115} {arXiv:gr-qc/0504115} \BibitemShut
  {NoStop}%
\bibitem [{\citenamefont {Buchert}(2005)}]{Buchert:2005xf}%
  \BibitemOpen
  \bibfield  {author} {\bibinfo {author} {\bibfnamefont {T.}~\bibnamefont
  {Buchert}},\ }\href {\doibase 10.1088/0264-9381/22/19/L01} {\bibfield
  {journal} {\bibinfo  {journal} {Class. Quant. Grav.}\ }\textbf {\bibinfo
  {volume} {22}},\ \bibinfo {pages} {L113} (\bibinfo {year} {2005})},\ \Eprint
  {http://arxiv.org/abs/gr-qc/0507028} {arXiv:gr-qc/0507028} \BibitemShut
  {NoStop}%
\bibitem [{\citenamefont {Ellis}\ and\ \citenamefont
  {Buchert}(2005)}]{Ellis:2005uz}%
  \BibitemOpen
  \bibfield  {author} {\bibinfo {author} {\bibfnamefont {G.~F.~R.}\
  \bibnamefont {Ellis}}\ and\ \bibinfo {author} {\bibfnamefont
  {T.}~\bibnamefont {Buchert}},\ }\href {\doibase
  10.1016/j.physleta.2005.06.087} {\bibfield  {journal} {\bibinfo  {journal}
  {Phys. Lett.}\ }\textbf {\bibinfo {volume} {A347}},\ \bibinfo {pages} {38}
  (\bibinfo {year} {2005})},\ \Eprint {http://arxiv.org/abs/gr-qc/0506106}
  {arXiv:gr-qc/0506106} \BibitemShut {NoStop}%
\bibitem [{\citenamefont {Nambu}\ and\ \citenamefont
  {Tanimoto}(2005)}]{Nambu:2005zn}%
  \BibitemOpen
  \bibfield  {author} {\bibinfo {author} {\bibfnamefont {Y.}~\bibnamefont
  {Nambu}}\ and\ \bibinfo {author} {\bibfnamefont {M.}~\bibnamefont
  {Tanimoto}},\ }\href@noop {} {\  (\bibinfo {year} {2005})},\ \Eprint
  {http://arxiv.org/abs/gr-qc/0507057} {arXiv:gr-qc/0507057} \BibitemShut
  {NoStop}%
\bibitem [{\citenamefont {Kolb}\ \emph {et~al.}(2006)\citenamefont {Kolb},
  \citenamefont {Matarrese},\ and\ \citenamefont {Riotto}}]{Kolb:2005da}%
  \BibitemOpen
  \bibfield  {author} {\bibinfo {author} {\bibfnamefont {E.~W.}\ \bibnamefont
  {Kolb}}, \bibinfo {author} {\bibfnamefont {S.}~\bibnamefont {Matarrese}}, \
  and\ \bibinfo {author} {\bibfnamefont {A.}~\bibnamefont {Riotto}},\ }\href
  {\doibase 10.1088/1367-2630/8/12/322} {\bibfield  {journal} {\bibinfo
  {journal} {New J. Phys.}\ }\textbf {\bibinfo {volume} {8}},\ \bibinfo {pages}
  {322} (\bibinfo {year} {2006})},\ \Eprint
  {http://arxiv.org/abs/astro-ph/0506534} {arXiv:astro-ph/0506534} \BibitemShut
  {NoStop}%
\bibitem [{\citenamefont {Rasanen}(2006)}]{Rasanen:2006kp}%
  \BibitemOpen
  \bibfield  {author} {\bibinfo {author} {\bibfnamefont {S.}~\bibnamefont
  {Rasanen}},\ }\href {\doibase 10.1088/1475-7516/2006/11/003} {\bibfield
  {journal} {\bibinfo  {journal} {JCAP}\ }\textbf {\bibinfo {volume} {0611}},\
  \bibinfo {pages} {003} (\bibinfo {year} {2006})},\ \Eprint
  {http://arxiv.org/abs/astro-ph/0607626} {arXiv:astro-ph/0607626} \BibitemShut
  {NoStop}%
\bibitem [{\citenamefont {Buchert}(2008)}]{Buchert:2007ik}%
  \BibitemOpen
  \bibfield  {author} {\bibinfo {author} {\bibfnamefont {T.}~\bibnamefont
  {Buchert}},\ }\href {\doibase 10.1007/s10714-007-0554-8} {\bibfield
  {journal} {\bibinfo  {journal} {Gen. Rel. Grav.}\ }\textbf {\bibinfo {volume}
  {40}},\ \bibinfo {pages} {467} (\bibinfo {year} {2008})},\ \Eprint
  {http://arxiv.org/abs/0707.2153} {arXiv:0707.2153 [gr-qc]} \BibitemShut
  {NoStop}%
\bibitem [{\citenamefont {Kolb}\ \emph {et~al.}(2008)\citenamefont {Kolb},
  \citenamefont {Marra},\ and\ \citenamefont {Matarrese}}]{Kolb:2008bn}%
  \BibitemOpen
  \bibfield  {author} {\bibinfo {author} {\bibfnamefont {E.~W.}\ \bibnamefont
  {Kolb}}, \bibinfo {author} {\bibfnamefont {V.}~\bibnamefont {Marra}}, \ and\
  \bibinfo {author} {\bibfnamefont {S.}~\bibnamefont {Matarrese}},\ }\href
  {\doibase 10.1103/PhysRevD.78.103002} {\bibfield  {journal} {\bibinfo
  {journal} {Phys. Rev.}\ }\textbf {\bibinfo {volume} {D78}},\ \bibinfo {pages}
  {103002} (\bibinfo {year} {2008})},\ \Eprint {http://arxiv.org/abs/0807.0401}
  {arXiv:0807.0401 [astro-ph]} \BibitemShut {NoStop}%
\bibitem [{\citenamefont {Rasanen}(2008)}]{Rasanen:2008it}%
  \BibitemOpen
  \bibfield  {author} {\bibinfo {author} {\bibfnamefont {S.}~\bibnamefont
  {Rasanen}},\ }\href {\doibase 10.1088/1475-7516/2008/04/026} {\bibfield
  {journal} {\bibinfo  {journal} {JCAP}\ }\textbf {\bibinfo {volume} {0804}},\
  \bibinfo {pages} {026} (\bibinfo {year} {2008})},\ \Eprint
  {http://arxiv.org/abs/0801.2692} {arXiv:0801.2692 [astro-ph]} \BibitemShut
  {NoStop}%
\bibitem [{\citenamefont {Zalaletdinov}(2008)}]{Zalaletdinov:2008ts}%
  \BibitemOpen
  \bibfield  {author} {\bibinfo {author} {\bibfnamefont {R.}~\bibnamefont
  {Zalaletdinov}},\ }\href {\doibase 10.1142/S0217751X08040032} {\bibfield
  {journal} {\bibinfo  {journal} {Int. J. Mod. Phys.}\ }\textbf {\bibinfo
  {volume} {A23}},\ \bibinfo {pages} {1173} (\bibinfo {year} {2008})},\ \Eprint
  {http://arxiv.org/abs/0801.3256} {arXiv:0801.3256 [gr-qc]} \BibitemShut
  {NoStop}%
\bibitem [{\citenamefont {Kolb}\ \emph {et~al.}(2010)\citenamefont {Kolb},
  \citenamefont {Marra},\ and\ \citenamefont {Matarrese}}]{Kolb:2009rp}%
  \BibitemOpen
  \bibfield  {author} {\bibinfo {author} {\bibfnamefont {E.~W.}\ \bibnamefont
  {Kolb}}, \bibinfo {author} {\bibfnamefont {V.}~\bibnamefont {Marra}}, \ and\
  \bibinfo {author} {\bibfnamefont {S.}~\bibnamefont {Matarrese}},\ }\href
  {\doibase 10.1007/s10714-009-0913-8} {\bibfield  {journal} {\bibinfo
  {journal} {Gen. Rel. Grav.}\ }\textbf {\bibinfo {volume} {42}},\ \bibinfo
  {pages} {1399} (\bibinfo {year} {2010})},\ \Eprint
  {http://arxiv.org/abs/0901.4566} {arXiv:0901.4566 [astro-ph.CO]} \BibitemShut
  {NoStop}%
\bibitem [{\citenamefont {Wiegand}\ and\ \citenamefont
  {Buchert}(2010)}]{Wiegand:2010uh}%
  \BibitemOpen
  \bibfield  {author} {\bibinfo {author} {\bibfnamefont {A.}~\bibnamefont
  {Wiegand}}\ and\ \bibinfo {author} {\bibfnamefont {T.}~\bibnamefont
  {Buchert}},\ }\href {\doibase 10.1103/PhysRevD.82.023523} {\bibfield
  {journal} {\bibinfo  {journal} {Phys. Rev.}\ }\textbf {\bibinfo {volume}
  {D82}},\ \bibinfo {pages} {023523} (\bibinfo {year} {2010})},\ \Eprint
  {http://arxiv.org/abs/1002.3912} {arXiv:1002.3912 [astro-ph.CO]} \BibitemShut
  {NoStop}%
\bibitem [{\citenamefont {Buchert}(2000)}]{Buchert:1999er}%
  \BibitemOpen
  \bibfield  {author} {\bibinfo {author} {\bibfnamefont {T.}~\bibnamefont
  {Buchert}},\ }\href {\doibase 10.1023/A:1001800617177} {\bibfield  {journal}
  {\bibinfo  {journal} {Gen. Rel. Grav.}\ }\textbf {\bibinfo {volume} {32}},\
  \bibinfo {pages} {105} (\bibinfo {year} {2000})},\ \Eprint
  {http://arxiv.org/abs/gr-qc/9906015} {arXiv:gr-qc/9906015} \BibitemShut
  {NoStop}%
\bibitem [{\citenamefont {Buchert}(2001)}]{Buchert:2001sa}%
  \BibitemOpen
  \bibfield  {author} {\bibinfo {author} {\bibfnamefont {T.}~\bibnamefont
  {Buchert}},\ }\href {\doibase 10.1023/A:1012061725841} {\bibfield  {journal}
  {\bibinfo  {journal} {Gen. Rel. Grav.}\ }\textbf {\bibinfo {volume} {33}},\
  \bibinfo {pages} {1381} (\bibinfo {year} {2001})},\ \Eprint
  {http://arxiv.org/abs/gr-qc/0102049} {arXiv:gr-qc/0102049} \BibitemShut
  {NoStop}%
\bibitem [{\citenamefont {Buchert}\ and\ \citenamefont
  {Obadia}(2010)}]{Buchert:2010ug}%
  \BibitemOpen
  \bibfield  {author} {\bibinfo {author} {\bibfnamefont {T.}~\bibnamefont
  {Buchert}}\ and\ \bibinfo {author} {\bibfnamefont {N.}~\bibnamefont
  {Obadia}},\ }\href@noop {} {\  (\bibinfo {year} {2010})},\ \Eprint
  {http://arxiv.org/abs/1010.4512} {arXiv:1010.4512 [gr-qc]} \BibitemShut
  {NoStop}%
\bibitem [{\citenamefont {Futamase}(1996)}]{Futamase:1996fk}%
  \BibitemOpen
  \bibfield  {author} {\bibinfo {author} {\bibfnamefont {T.}~\bibnamefont
  {Futamase}},\ }\href {\doibase 10.1103/PhysRevD.53.681} {\bibfield  {journal}
  {\bibinfo  {journal} {Phys. Rev.}\ }\textbf {\bibinfo {volume} {D53}},\
  \bibinfo {pages} {681} (\bibinfo {year} {1996})}\BibitemShut {NoStop}%
\bibitem [{\citenamefont {Zalaletdinov}(1997)}]{Zalaletdinov:1996aj}%
  \BibitemOpen
  \bibfield  {author} {\bibinfo {author} {\bibfnamefont {R.~M.}\ \bibnamefont
  {Zalaletdinov}},\ }\href@noop {} {\bibfield  {journal} {\bibinfo  {journal}
  {Bull. Astron. Soc. India}\ }\textbf {\bibinfo {volume} {25}},\ \bibinfo
  {pages} {401} (\bibinfo {year} {1997})},\ \Eprint
  {http://arxiv.org/abs/gr-qc/9703016} {arXiv:gr-qc/9703016} \BibitemShut
  {NoStop}%
\bibitem [{\citenamefont {Isaacson}(1968{\natexlab{a}})}]{Isaacson:1967zz}%
  \BibitemOpen
  \bibfield  {author} {\bibinfo {author} {\bibfnamefont {R.~A.}\ \bibnamefont
  {Isaacson}},\ }\href {\doibase 10.1103/PhysRev.166.1263} {\bibfield
  {journal} {\bibinfo  {journal} {Phys. Rev.}\ }\textbf {\bibinfo {volume}
  {166}},\ \bibinfo {pages} {1263} (\bibinfo {year}
  {1968}{\natexlab{a}})}\BibitemShut {NoStop}%
\bibitem [{\citenamefont {Isaacson}(1968{\natexlab{b}})}]{Isaacson:1968zza}%
  \BibitemOpen
  \bibfield  {author} {\bibinfo {author} {\bibfnamefont {R.~A.}\ \bibnamefont
  {Isaacson}},\ }\href {\doibase 10.1103/PhysRev1.66.1272} {\bibfield
  {journal} {\bibinfo  {journal} {Phys. Rev.}\ }\textbf {\bibinfo {volume}
  {166}},\ \bibinfo {pages} {1272} (\bibinfo {year}
  {1968}{\natexlab{b}})}\BibitemShut {NoStop}%
\bibitem [{\citenamefont {Misner}\ \emph {et~al.}(1973)\citenamefont {Misner},
  \citenamefont {Thorne},\ and\ \citenamefont {Wheeler}}]{MTW}%
  \BibitemOpen
  \bibfield  {author} {\bibinfo {author} {\bibfnamefont {C.~W.}\ \bibnamefont
  {Misner}}, \bibinfo {author} {\bibfnamefont {K.~S.}\ \bibnamefont {Thorne}},
  \ and\ \bibinfo {author} {\bibfnamefont {J.~A.}\ \bibnamefont {Wheeler}},\
  }\href@noop {} {\emph {\bibinfo {title} {Gravitation}}}\ (\bibinfo
  {publisher} {W. H. Freeman and Company},\ \bibinfo {address} {San
  Francisco},\ \bibinfo {year} {1973})\BibitemShut {NoStop}%
\bibitem [{\citenamefont {Burnett}(1989)}]{Burnett:1989gp}%
  \BibitemOpen
  \bibfield  {author} {\bibinfo {author} {\bibfnamefont {G.~A.}\ \bibnamefont
  {Burnett}},\ }\href {\doibase 10.1063/1.528594} {\bibfield  {journal}
  {\bibinfo  {journal} {J. Math. Phys.}\ }\textbf {\bibinfo {volume} {30}},\
  \bibinfo {pages} {90} (\bibinfo {year} {1989})}\BibitemShut {NoStop}%
\bibitem [{\citenamefont {Noonan}(1984)}]{Noonan:1984}%
  \BibitemOpen
  \bibfield  {author} {\bibinfo {author} {\bibfnamefont {T.}~\bibnamefont
  {Noonan}},\ }\href {http://dx.doi.org/10.1007/BF00760648} {\bibfield
  {journal} {\bibinfo  {journal} {Gen. Rel. Grav.}\ }\textbf {\bibinfo {volume}
  {16}},\ \bibinfo {pages} {1103} (\bibinfo {year} {1984})}\BibitemShut
  {NoStop}%
\bibitem [{\citenamefont {Baumann}\ \emph {et~al.}(2010)\citenamefont
  {Baumann}, \citenamefont {Nicolis}, \citenamefont {Senatore},\ and\
  \citenamefont {Zaldarriaga}}]{Baumann:2010tm}%
  \BibitemOpen
  \bibfield  {author} {\bibinfo {author} {\bibfnamefont {D.}~\bibnamefont
  {Baumann}}, \bibinfo {author} {\bibfnamefont {A.}~\bibnamefont {Nicolis}},
  \bibinfo {author} {\bibfnamefont {L.}~\bibnamefont {Senatore}}, \ and\
  \bibinfo {author} {\bibfnamefont {M.}~\bibnamefont {Zaldarriaga}},\
  }\href@noop {} {\  (\bibinfo {year} {2010})},\ \Eprint
  {http://arxiv.org/abs/1004.2488} {arXiv:1004.2488 [astro-ph.CO]} \BibitemShut
  {NoStop}%
\bibitem [{\citenamefont {Wald}(1984)}]{Wald:1984}%
  \BibitemOpen
  \bibfield  {author} {\bibinfo {author} {\bibfnamefont {R.~M.}\ \bibnamefont
  {Wald}},\ }\href@noop {} {\emph {\bibinfo {title} {General Relativity}}}\
  (\bibinfo  {publisher} {University of Chicago Press},\ \bibinfo {address}
  {Chicago, IL},\ \bibinfo {year} {1984})\BibitemShut {NoStop}%
\bibitem [{\citenamefont {Mart{\'\i}n-Garc{\'\i}a}(2008)}]{martin-garcia:2008}%
  \BibitemOpen
  \bibfield  {author} {\bibinfo {author} {\bibfnamefont {J.~M.}\ \bibnamefont
  {Mart{\'\i}n-Garc{\'\i}a}},\ }\href {\doibase DOI:10.1016/j.cpc.2008.05.009}
  {\bibfield  {journal} {\bibinfo  {journal} {Computer Physics Communications}\
  }\textbf {\bibinfo {volume} {179}},\ \bibinfo {pages} {597 } (\bibinfo {year}
  {2008})},\ \Eprint {http://arxiv.org/abs/0803.0862} {arXiv:0803.0862 [cs.SC]}
  \BibitemShut {NoStop}%
\bibitem [{\citenamefont {Mart{\'\i}n-Garc{\'\i}a}(2010)}]{martin-garcia:web}%
  \BibitemOpen
  \bibfield  {author} {\bibinfo {author} {\bibfnamefont {J.~M.}\ \bibnamefont
  {Mart{\'\i}n-Garc{\'\i}a}},\ }\href@noop {} {\enquote {\bibinfo {title}
  {x{A}ct: Efficient tensor computer algebra for {M}athematica},}\ }\bibinfo
  {howpublished} {\url{http://www.xact.es}} (\bibinfo {year}
  {2010})\BibitemShut {NoStop}%
\bibitem [{\citenamefont {Futamase}\ and\ \citenamefont
  {Itoh}(2007)}]{lrr-2007-2}%
  \BibitemOpen
  \bibfield  {author} {\bibinfo {author} {\bibfnamefont {T.}~\bibnamefont
  {Futamase}}\ and\ \bibinfo {author} {\bibfnamefont {Y.}~\bibnamefont
  {Itoh}},\ }\href@noop {} {\bibfield  {journal} {\bibinfo  {journal} {Living
  Reviews in Relativity}\ }\textbf {\bibinfo {volume} {10}},\ \bibinfo {pages}
  {2} (\bibinfo {year} {2007})}\BibitemShut {NoStop}%
\bibitem [{\citenamefont {Choquet-Bruhat}(2009)}]{Choquet-Bruhat:2009}%
  \BibitemOpen
  \bibfield  {author} {\bibinfo {author} {\bibfnamefont {Y.}~\bibnamefont
  {Choquet-Bruhat}},\ }\href@noop {} {\emph {\bibinfo {title} {General
  Relativity and the Einstein Equations}}}\ (\bibinfo  {publisher} {Oxford
  University Press},\ \bibinfo {address} {New York},\ \bibinfo {year}
  {2009})\BibitemShut {NoStop}%
\bibitem [{\citenamefont {Futamase}\ and\ \citenamefont
  {Schutz}(1983)}]{Futamase:1983}%
  \BibitemOpen
  \bibfield  {author} {\bibinfo {author} {\bibfnamefont {T.}~\bibnamefont
  {Futamase}}\ and\ \bibinfo {author} {\bibfnamefont {B.~F.}\ \bibnamefont
  {Schutz}},\ }\href {\doibase 10.1103/PhysRevD.28.2363} {\bibfield  {journal}
  {\bibinfo  {journal} {Phys. Rev.}\ }\textbf {\bibinfo {volume} {D28}},\
  \bibinfo {pages} {2363} (\bibinfo {year} {1983})}\BibitemShut {NoStop}%
\bibitem [{\citenamefont {Oliynyk}(2010{\natexlab{a}})}]{Oliynyk:2009qa}%
  \BibitemOpen
  \bibfield  {author} {\bibinfo {author} {\bibfnamefont {T.~A.}\ \bibnamefont
  {Oliynyk}},\ }\href {\doibase 10.1007/s00220-009-0931-0} {\bibfield
  {journal} {\bibinfo  {journal} {Commun. Math. Phys.}\ }\textbf {\bibinfo
  {volume} {295}},\ \bibinfo {pages} {431} (\bibinfo {year}
  {2010}{\natexlab{a}})},\ \Eprint {http://arxiv.org/abs/0908.2836}
  {arXiv:0908.2836 [gr-qc]} \BibitemShut {NoStop}%
\bibitem [{\citenamefont {Oliynyk}(2010{\natexlab{b}})}]{Oliynyk:2009bc}%
  \BibitemOpen
  \bibfield  {author} {\bibinfo {author} {\bibfnamefont {T.~A.}\ \bibnamefont
  {Oliynyk}},\ }\href@noop {} {\bibfield  {journal} {\bibinfo  {journal} {J.
  Hyperbol. Diff. Equat.}\ }\textbf {\bibinfo {volume} {7}},\ \bibinfo {pages}
  {405} (\bibinfo {year} {2010}{\natexlab{b}})},\ \Eprint
  {http://arxiv.org/abs/0908.4454} {arXiv:0908.4454 [gr-qc]} \BibitemShut
  {NoStop}%
\bibitem [{\citenamefont {Poisson}(2004)}]{lrr-2004-6}%
  \BibitemOpen
  \bibfield  {author} {\bibinfo {author} {\bibfnamefont {E.}~\bibnamefont
  {Poisson}},\ }\href@noop {} {\bibfield  {journal} {\bibinfo  {journal}
  {Living Reviews in Relativity}\ }\textbf {\bibinfo {volume} {7}},\ \bibinfo
  {pages} {6} (\bibinfo {year} {2004})}\BibitemShut {NoStop}%
\end{thebibliography}%

\end{document}